\documentclass[12pt]{article}
\newif\ifconference 
\conferencefalse
\usepackage{amsmath,amssymb,amsthm,mathtools,multicol,graphicx,float}
\usepackage{lmodern}
\usepackage{tikz}
\usepackage{fbb}
\usepackage[libertine]{newtxmath}
\ifconference 
    \usepackage[margin=1 in,letterpaper]{geometry}
\else
    \usepackage[margin=2cm,a4paper]{geometry}
\fi
\usepackage{keytheorems}
\usepackage{zref-clever}
\zcsetup{
  tlastsep = {, and~},
    lastsep = {, and~}
}
\usepackage{booktabs}
\zcsetup{cap=true}
\usepackage{textpos,euflag}
\usepackage{hyperref}
\usepackage{authblk,enumitem}
\usepackage{lineno}
\usepackage{tcolorbox}
\usetikzlibrary{shapes.geometric, arrows.meta, positioning}
\ifconference
\hypersetup{
    colorlinks=true,
    citecolor=blue,
    linkcolor=blue,
    filecolor=blue,
    urlcolor=blue,
}
\else 
\hypersetup{
    colorlinks=true,
    citecolor=blue,
    linkcolor=blue,
    filecolor=blue,
    urlcolor=blue,
    pdftitle={Branch-width of represented matroids in matrix multiplication time},
    pdfauthor={Mujin Choi, Tuukka Korhonen, and Sang-il Oum}
}
\fi
\newkeytheorem{theorem}[name=Theorem,parent=section,refname={Theorem,Theorems}, Refname={Theorem,Theorems}]
\newkeytheorem{lemma,proposition,corollary,conjecture,definition,remark}[sibling=theorem]
\newcommand\abs[1]{\lvert #1\rvert}
\newcommand{\bw}{\mathsf{bw}}
\newcommand{\pw}{\mathsf{pw}}
\newcommand{\rw}{\mathsf{rw}}
\newcommand{\lrw}{\mathsf{lrw}}
\newcommand{\cutrk}{\mathsf{cutrk}}
\newcommand{\rk}{\mathsf{rk}}

\title{Branch-width of represented matroids in matrix multiplication time}
\ifconference
\date{}\author{}
\else
\author[2,1]{Mujin Choi\thanks{Supported by the Institute for Basic Science (IBS-R029-C1)}}
\affil[1]{Discrete Mathematics Group, Institute for Basic Science (IBS), Daejeon,~South~Korea}
\affil[2]{Department of Mathematical Sciences, KAIST, Daejeon, South~Korea}
\author[3]{Tuukka Korhonen\thanks{Supported by the European Union under Marie Sk\l{}odowska-Curie Actions (MSCA), project no. 101206430, and by the VILLUM Foundation, Grant Number 54451, Basic Algorithms Research Copenhagen (BARC).
}}
\affil[3]{University of Copenhagen, Copenhagen, Denmark}
\author[1,2]{Sang-il~Oum\textsuperscript{\textasteriskcentered}}
\affil[ ]{\small \textit{Email addresses:} 
\texttt{mujinchoi@kaist.ac.kr},
\texttt{tuko@di.ku.dk},
\texttt{sangil@ibs.re.kr}}
\date{\today}
\fi
\begin{document}
\ifconference
\linenumbers
\fi
\maketitle
\ifconference\else
 \begin{textblock}{20}(-0.5, 8.8)
\large\euflag~\parbox[b]{5cm}{\tiny\textcolor{PantoneReflexBlue}{\textsf{\textbf{Co-funded by}}}
\\
\textcolor{PantoneReflexBlue}{{\textbf{\textsf{the European Union}}}}}
 \end{textblock}
\fi
\begin{abstract}
For an $n$-element matroid $M$ given by an $n \times n$ matrix representation over a finite field~$\mathbb F$ and an integer $k$, we present an algorithm with running time $O_{k,\mathbb F}(n^2)+O(n^\omega)$ that either finds a branch-decomposition of~$M$ of width at most $k$, or confirms that the branch-width of~$M$ is more than~$k$, where $\omega < 2.3714$ is the matrix multiplication exponent, and the $O_{k,\mathbb F}(\cdot)$-notation hides factors that depend on $k$ and $\mathbb F$ in a computable manner.
All previous algorithms, including Hlin\v{e}n\'y and Oum [SIAM J. Comput. (2008)] and Jeong, Kim, and Oum [SIAM J. Discrete Math. (2021)], 
have cubic-time bottlenecks.
Moreover, if the input matrix representation is given in standard form, our algorithm runs in $O_{k,\mathbb F}(n^2)$ time, since $O(n^\omega)$ time is only needed for finding a standard form of the input matrix.
When $M$ is given by an $m \times n$ matrix, the overhead for finding a standard form is $O(mn \min(m,n)^{\omega-2})$.

As corollaries, we obtain faster algorithms for rank-width of directed graphs and path-width of matroids represented over a fixed finite field.
Furthermore, we also present an approximation algorithm for finding branch-width that works on infinite fields provided that the input matrix is in standard form and contains a bounded number of distinct values of entries.

To suggest that our algorithm is optimal, we observe that for every field $\mathbb F$, deciding whether the branch-width of a matroid represented over $\mathbb F$ is $0$ is as hard as deciding whether a square matrix over $\mathbb F$ is singular.
Under the assumption that singularity testing requires $\Omega(n^\omega)$-time, this implies that the overhead of $O(n^{\omega})$ is unavoidable.
We also show strengthenings of this observation to rule out some approximations under this assumption.
\end{abstract}

\section{Introduction}\label{sec:intro}

A matroid is a pair $M=(E,\mathcal I)$ of a finite set~$E$ and a collection $\mathcal I$ of subsets of $E$ called \emph{independent} sets such that $\emptyset\in\mathcal I$, all subsets of independent sets are independent, and all maximal independent subsets of a set $X$ have the same size, denoted by $r(X)$, the \emph{rank} function of $M$.
For a matrix $A$ over a field $\mathbb F$ whose columns are indexed by a set $E$, if $\mathcal I$ is the collection of all subsets of $E$ whose corresponding column vectors of $A$ are linearly independent, 
then $M(A)=(E,\mathcal I)$ is a matroid.
If a matroid~$M$ admits a matrix $A$ over a field $\mathbb F$ such that $M=M(A)$, then it is \emph{representable} over~$\mathbb F$.
We say a matroid is \emph{represented} over~$\mathbb F$ if it is given with its matrix representation over~$\mathbb F$.

Branch-width was introduced by Robertson and Seymour~\cite{RS1991} for graphs, hypergraphs, matroids, and more generally, for connectivity functions. 
Here we will describe its definition for matroids.
Let $M$ be a matroid on a finite set $E$ with the rank function $r$.
The \emph{connectivity function} of a matroid $M$ is defined as 
$\lambda_M(X):=r(X)+r(E\setminus X)-r(E)$.
It is well known that $\lambda_M$ is \emph{symmetric}, meaning that $\lambda_M(X)=\lambda_M(E\setminus X)$ for all $X\subseteq E$, and \emph{submodular}, meaning that 
\[ \lambda_M(X)+\lambda_M(Y)\geq \lambda_M(X\cap Y)+\lambda_M(X\cup Y)\]
for all $X,Y\subseteq E$.
Roughly speaking, 
the \emph{branch-width} of a matroid $M$ 
is the minimum  $k$ such that 
its ground set can be recursively partitioned into two parts until 
singletons remain, such that every set arising in the process has $\lambda_M$-value at most $k$.
Such a decomposition, which we call a \emph{branch-decomposition} of width at most $k$, is naturally expressible by a tree of maximum degree~$3$ and a labelling of its leaves by the elements of the ground set of $M$.
A precise definition will be given in \zcref{sec:preliminary}.
Computing the branch-width of a matroid represented over a fixed field
is known to be NP-hard~\cite[Corollary 2]{HM2005}.

Branch-width of graphs was introduced as a tool to study tree-width of graphs.
Many algorithmic problems on graphs, when restricted to graphs of bounded tree-width, admit efficient algorithms. 
Here are two ingredients for such a framework for graphs of bounded tree-width.
\begin{enumerate}[label=\rm(\alph*)]
    \item\label{itm:bodlaender} Efficient algorithms to find a tree-decomposition of small width when the input graph has  tree-width~$k$.
    \item\label{itm:courcelle} Dynamic programming algorithms to solve given problems based on tree-decompositions of graphs.
\end{enumerate}
For \ref{itm:bodlaender}, 
Bodlaender and Kloks~\cite{BK1996} presented a linear-time algorithm to find a tree-decomposition of width at most $k$ for graphs of tree-width at most $k$. 
For \ref{itm:courcelle}, there are many interesting algorithms known for graphs of bounded tree-width based on a tree-decomposition. One important theorem here is the classic theorem of Courcelle~\cite{Courcelle1990}, stating that any graph property that can be described by a monadic second-order formula can be decided in linear time for graphs of bounded tree-width if the input graph is given with its tree-decomposition of width at most~$k$.

To extend this framework to matroids, the definition of branch-width is more convenient, because it is difficult to imagine an analog of vertices in matroids. We remark that Hlin\v{e}n\'y and Whittle~\cite{HW2003} have successfully defined the tree-width of matroids but their definition is more complicated than the definition of branch-width.
We now have analogs for both \ref{itm:bodlaender} and \ref{itm:courcelle}.
\begin{enumerate}[label=\rm(\Alph*)]
    \item \label{itm:hlineny} Hlin\v{e}n\'y~\cite{Hlineny2002} presented an $O_{k,\mathbb F}(n^3)$-time  algorithm to find a branch-decomposition of width at most $O(k)$ for $n$-element matroids of branch-width at most $k$, when the matroid is represented over a finite field $\mathbb F$. 
    Later, Hlin\v{e}n\'y and Oum~\cite{HO2006} and Jeong, Kim, and Oum~\cite{JKO2019} showed $O_{k,\mathbb F}(n^3)$-time algorithms to find a branch-decomposition of width at most~$k$ for $n$-element matroids of branch-width at most~$k$.
    \item \label{itm:hlinenymso} Hlin\v{e}n\'y~\cite{Hlineny2004} showed that if a property of matroids can be described by a monadic second-order formula, then it can be decided in quadratic time for matroids of branch-width at most $k$, if the input matroid is represented over a fixed finite field and is given with its branch-decomposition of width at most~$k$.\footnote{
        Hlin\v{e}n\'y \cite{Hlineny2004} showed that if  the input matroid is given together with a `parse tree' of width at most $k$, then it takes  linear time to decide whether a fixed monadic second-order formula is true. However, to prepare a parse tree from a branch-decomposition of width at most $k$, one would need the quadratic time as in \zcref{thm:transcript}.
    }
\end{enumerate}

We remark that when a matroid is not representable over a finite field, the part \ref{itm:hlinenymso} above fails~\cite[Section~8.3]{Hlineny2004}.
For the part \ref{itm:hlineny}, if we have an oracle for the rank function of the input matroid that runs in time $\gamma$, we can use the algorithm of Korhonen and Oum~\cite{KO2026} to either find, in time $O_k(\gamma n^{2.5} \log^2 n)$, a branch-decomposition of width at most $k$ or conclude that the branch-width is more than $k$.

\subsection{Branch-width with standard representations as an input}

A \emph{standard representation} of a matroid is a matrix representation such that some columns induce an identity matrix as a submatrix.
Note that this implies that the number of rows is at most the number of columns, so the input size is $O(n^2)$.
In contrast, when we speak of a \emph{general representation}, we mean an arbitrary matrix representation, not necessarily one in standard form.
See \zcref{M} for an example.

Our first contribution is to improve the running time of \ref{itm:hlineny} to $O_{k,\mathbb F}(n^2)$ when the input $n$-element matroid is given by a standard representation. 

\getkeytheorem{thm:standard_represented_matroid_branchwidth_approximation}

Jeong, Kim, and Oum~\cite{JKO2019} provided an algorithm based on dynamic programming to find a branch-decomposition of width at most $k$ if it exists, assuming that the input matroid, 
represented over a finite field, 
is given with a branch-decomposition of bounded width. 
Thus, to obtain \zcref{thm:standard_represented_matroid_branchwidth_approximation}, 
we prove the following theorem that provides 
an approximation algorithm to either find a branch-decomposition of width at most $f(k)$
for some function~$f$
or confirm that the branch-width is more than $k$.
We then combine this with \cite{JKO2019}.
Moreover, the approximation algorithm applies 
even when $\mathbb F$ is infinite, as long as the matrix has at most $q-1$ distinct non-zero values of entries for some $q$. 
Note that if $\mathbb F$ is finite, then one can always take $q \le \abs{\mathbb{F}}$.

\begin{figure}
\centering
\small
\(
M\left(
\begin{array}{cccccc}
1 & 2 & 0 & 1 & 1 & 2 \\
0 & 0 & 2 & 2 & 2 & 2 \\
0 & 2 & 0 & 0 & 1 & 1 \\
1 & 0 & 1 & 1 & 1 & 2 
\end{array}\right)
=
M\left(
\begin{array}{ccc|ccc}
1 & 0 & 0 & 1 & 0 & 1 \\
0 & 1 & 0 & 0 & 1 & 1 \\
0 & 0 & 1 & 1 & 1 & 1
\end{array}\right).
\)
\caption{An example of a general representation and a standard representation representing the same matroid.}\label{M}
\end{figure}

\getkeytheorem{thm:standard_represented_matroid_approx}

A \emph{monadic second-order (MSO) logic formula for a matroid} is an MSO formula built up from variables for matroid elements and the predicate indicating whether a set of elements is independent.
We call a matroid property an \emph{MSO property} if there exists an MSO formula describing the property.
We refer to Hlin\v{e}n\'y~\cite{Hlineny2004} for a more detailed explanation of MSO formulas on matroids.
As a corollary of \zcref{thm:standard_represented_matroid_branchwidth_approximation}, we can use the result of Hlin\v{e}n\'y~\cite{Hlineny2004} to deduce that any monadic second-order property of matroids can be decided in quadratic time for matroids of bounded branch-width if the input matroid is representable over a fixed finite field $\mathbb F$ and is given by its standard representation over~$\mathbb F$.
This improves the previously known total running time of $O(n^3)$~\cite{Hlineny2004}, where 
the bottleneck was finding the branch-decomposition.

\begin{corollary}[label=cor:MSO_matroid_standard]
    Let $\mathbb F$ be a finite field, $k$ be an integer, and let $\psi$ be an MSO formula.
    We can decide in time $O_{k,\mathbb F,\psi}(n^2)$ whether $\psi$ is satisfied by an $n$-element matroid that is given by a standard representation over $\mathbb F$ and has branch-width at most $k$.
\end{corollary}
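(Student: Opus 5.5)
We combine \zcref{thm:standard_represented_matroid_branchwidth_approximation} with Hlin\v{e}n\'y's MSO model-checking theorem~\cite{Hlineny2004}, checking that every step stays within quadratic time. First we run the algorithm of \zcref{thm:standard_represented_matroid_branchwidth_approximation} on the given $n$-element standard representation over $\mathbb F$ with parameter $k$. Its running time is $O_{k,\mathbb F}(n^2)$, and it returns one of two things:
\begin{itemize}
\item a branch-decomposition of width at most $k$; or
\item a certificate that the branch-width exceeds $k$.
\end{itemize}
The hypothesis guarantees branch-width at most $k$, so the second outcome cannot occur. If it is returned anyway, we may answer arbitrarily, or report that the promise is violated.

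Next we turn the branch-decomposition into the input that~\cite{Hlineny2004} requires, namely a parse tree (a ``$\mathbb F$-boundaried'' expression) of width bounded by a function of $k$. As noted in the footnote to \ref{itm:hlinenymso}, this conversion takes quadratic time given the matrix; the precise statement is \zcref{thm:transcript}, which we invoke here. The reason it works is the following.
\begin{itemize}
\item For each edge of the decomposition tree, the separation $(X,E\setminus X)$ has $\lambda_M(X)\le k$.
\item Hence the span of the columns of $X$ meets the span of the columns of $E\setminus X$ in a subspace of dimension at most $k$.
\item These boundary subspaces, computed bottom-up in $O_{k,\mathbb F}(n)$ time per node over the $O(n)$ nodes, give the terminal labels of the parse tree.
\end{itemize}
The total cost of this step is $O_{k,\mathbb F}(n^2)$.

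Finally we apply Hlin\v{e}n\'y's theorem. It decides the fixed formula $\psi$ on the parse tree in time $O_{k,\mathbb F,\psi}(n)$, because the tree automaton for $\psi$ depends only on $\psi$, $k$ and $\mathbb F$ and has finitely many states. Adding up the three steps gives total time $O_{k,\mathbb F,\psi}(n^2)$.

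The main obstacle is the middle step: producing the parse tree in quadratic rather than cubic time. This is exactly what \zcref{thm:transcript} is designed to provide, so in the plan I rely on it as stated and only check that its input format is a standard representation together with a branch-decomposition of width at most $k$. That is precisely what the first step outputs.
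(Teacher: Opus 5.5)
Your proposal follows the paper's argument: find a width-$k$ branch-decomposition with \zcref{thm:standard_represented_matroid_branchwidth_approximation} in $O_{k,\mathbb F}(n^2)$ time, convert it to a Hlin\v{e}n\'y parse tree in quadratic time, and run his linear-time automaton for $\psi$. One minor correction: \zcref{thm:transcript} is stated for $\sigma$-symmetric matrices and outputs a transcript, not a parse tree, so its input is not a standard representation with a branch-decomposition; the result that applies directly to your matrix is \zcref{thm:transcript2} (singleton parts, with $(I\mid A_0)$ in reduced row echelon form after permuting columns), and, like the paper, you are relying on the parse tree being computable in quadratic time ``as in'' that result rather than literally by it.
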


\subsection{Branch-width with general representations as an input}
If the input is given by a general $m\times n$ matrix representation, then
we can convert it into a standard representation in time $O(mn\min(m,n)^{\omega-2})$, where $\omega < 2.3714$ is the matrix multiplication exponent~\cite{ADWXXZ2025}, by using an implementation of Gaussian elimination via fast matrix multiplication~\cite{IMH1982}.
If a matrix has the property that there are at most $q-1$ distinct non-zero values of subdeterminants, then its standard form will have at most $q$ distinct values of entries, see \zcref{lem:subdet}, and therefore we deduce the following from \zcref{thm:standard_represented_matroid_approx}.

\getkeytheorem{thm:branch-decomposition_subdeterminant}

If the underlying field is finite, then we can use \zcref{thm:standard_represented_matroid_branchwidth_approximation} to obtain the following.

\getkeytheorem{thm:represented_matroid_branchwidth_approximation}

As in \zcref{cor:MSO_matroid_standard}, \zcref{thm:branch-decomposition_subdeterminant} implies 
that MSO properties of matroids of branch-width at most $k$
represented over a finite field~$\mathbb F$
can be checked in time $O(mn \min(m,n)^{\omega-2}) + O_{k,\mathbb F,\psi}(n^2)$.

\subsection{Main idea}
Our key idea is to convert our problem into a problem on the rank-width of graphs.
Rank-width of a graph was introduced by Oum and Seymour~\cite{OS2004}.
Given a standard representation~$(~I\mid A~)$ 
of a matroid~$M$
over a finite field $\mathbb F$
with  $q=\abs{\mathbb F}$, 
we construct a graph~$G_A$, which we will call the \emph{entry graph}, 
such that 
if $k$ is the rank-width of~$G_A$
and $b$ is the branch-width of~$M$,
then 
\[ 
\log_q\left(1+\frac{k}{q-1}\right) \le b \le 2^k+k-1.\] 
Our bounds also apply when $\mathbb F$ is infinite; in this case, $q-1$ will be chosen as the number of distinct non-zero values of entries in~$A$.
The following algorithm of Fomin and Korhonen~\cite{FK2024} finds, in quadratic time, a rank-decomposition of width at most~$k$ of a given graph, if one exists.
We apply it to the entry graph~$G_A$.

\begin{theorem}[label=thm:graph_rankwidth_approximation,note={Fomin and Korhonen~\cite{FK2024}}]
    Let $n\ge 2$.
    For an integer $k$, 
    in time $O_k(n^2)$, we can either find a rank-decomposition of the input $n$-vertex graph~$G$ of width at most $k$, or correctly conclude that the rank-width of $G$ is more than $k$.
\end{theorem}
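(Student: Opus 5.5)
Since \zcref{thm:graph_rankwidth_approximation} is a result of Fomin and Korhonen~\cite{FK2024}, in the paper it is simply invoked. If I had to justify it myself, I would follow the standard two-stage template: first an approximation algorithm, then an exact dynamic-programming step. The underlying connectivity function is the cut-rank $\cutrk_G(X)=\rk(A_G[X,V\setminus X])$ over $\mathbb F_2$, which is symmetric and submodular. Rank-width is the branch-width of this function.

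\textbf{Stage~1 (approximation by iterative addition).} Order the vertices $v_1,\dots,v_n$ and maintain a rank-decomposition of $G_i=G[\{v_1,\dots,v_i\}]$ of width at most $f(k)$ for some fixed $f$, say $2k+1$.
\begin{itemize}[nosep]
    \item Adding $v_{i+1}$ as a new leaf next to an arbitrary leaf raises every cut-rank by at most one.
    \item If $G_{i+1}$ has rank-width more than $k$, so does $G$, and we stop.
    \item Otherwise we repair the width with Korhonen's \emph{refinement} operation. Take a cut of width larger than the target. Submodularity gives a better cut in a \emph{linked} form. We then re-hang the decomposition along that cut: each tree edge's set $X$ is replaced by $X\cap C$ or $X\setminus C$, and no edge gets wider.
\end{itemize}
Two ingredients drive this step. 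The first is the potential argument of Korhonen: a suitable potential summing, over tree edges, a function of the widths strictly decreases under a refinement and changes by $O_k(1)$ per insertion and rebalancing. The amortized number of refined nodes is then $O_k(\log n)$ per insertion, or $O_k(1)$ amortized with care. The second is keeping the tree at depth $O(\log n)$ by rebalancing, at the cost of a constant-factor loss in width.

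\textbf{Stage~2 (data structure).} To find a better cut quickly, each node $t$ of the decomposition stores, for its set $X_t$, a compact description of the cut $(X_t,V\setminus X_t)$. This description records, for each of the at most $2^{f(k)}$ twin classes of $X_t$ with respect to $V\setminus X_t$, one representative vertex. A standard bottom-up dynamic program over these $O_k(1)$-size boundaried structures decides whether some set $C$ of cut-rank at most $k$ separates the two sides of the bad cut, and finds it. This dynamic program is in the spirit of the Oum--Seymour and Hlin\v{e}n\'y--Oum dynamic programs. The cost is proportional to the number of nodes touched. Updating the stored representatives after a refinement requires recomputing adjacencies between representatives and new vertices, which costs $O(n)$ per insertion in the worst case. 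Over $n$ insertions this gives the $O_k(n^2)$ total; reading the adjacency matrix is $\Theta(n^2)$ anyway.

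\textbf{Stage~3 (exact width $k$).} Given a decomposition of width at most $f(k)$, we run the exact dynamic program for branch-width. One option is the rank-width algorithm of Jeong, Kim, and Oum~\cite{JKO2019}, applied via the binary matroid whose branch-width equals rank-width plus one. The other is the equivalent ``full set of compact decompositions'' dynamic program. Either one, in $O_k(n)$ node operations with $O(n)$ work per node, outputs a decomposition of width at most $k$ or certifies that none exists.

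\textbf{Main obstacle.} The hardest part is the amortized analysis in Stages~1--2: showing that refinement operations, together with the maintenance of representatives at each node, cost only $O_k(n)$ amortized per inserted vertex. This needs a potential that decreases with each refinement while the depth stays logarithmic. I would follow Korhonen's treewidth argument, adapted to cut-rank through its submodularity and its $2^{\cutrk}$ bound on the number of twin classes.
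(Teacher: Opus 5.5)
Invoking the citation is exactly what the paper does. It gives no proof of \zcref{thm:graph_rankwidth_approximation} and simply imports it from Fomin and Korhonen; their $2$-approximation, combined with an exact dynamic program of Jeong, Kim, and Oum (the one restated here as \zcref{thm:branch-width_compression}), gives the exact form. So your first sentence is a complete justification in this context. The reconstruction you add should not be relied on, because parts of it are wrong. In Stage~3, there is no binary matroid whose branch-width equals $\rw(G)+1$ for an arbitrary graph $G$. That identity holds in the $+1$ convention for $\lambda_M$; in this paper's convention it is an equality, cf.\ \zcref{lem:matroid_to_symmetric_matrix}. It comes from viewing a \emph{bipartite} graph as a fundamental graph. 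For general $G$, the matroid $M(I\mid A_G)$ has connectivity function equal to the cut-rank function of the bipartite double cover $\left(\begin{smallmatrix}0&A_G\\A_G&0\end{smallmatrix}\right)$, not of $G$. The correct reduction goes to the subspace arrangement $\mathcal V_{A_G}$ formed by the pairs of columns of $(I\mid A_G)$, whose connectivity function is $2\cutrk_G$ (\zcref{lem:matrix_to_subspace}). You then apply the quadratic-time transcript computation of \zcref{thm:transcript2} and the linear-time dynamic program of \zcref{thm:full_set}, which together are \zcref{thm:branch-width_compression} with $\mathbb F=\mathbb F_2$.

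Stage~1 is circular as written. The algorithm has no independent test for ``$G_{i+1}$ has rank-width more than $k$''; that certificate has to come out of the refinement step itself. This is precisely the structural theorem of Fomin and Korhonen: if the width-$(2k+1)$ edges cannot be eliminated by refinements, then the rank-width exceeds $k$. You name this theorem but do not prove it, so your sketch still rests entirely on the cited result. The logarithmic depth, rebalancing, and $O_k(\log n)$ amortized refinements come from the later almost-linear-time algorithm of Korhonen and Soko\l{}owski and are not needed here. Given the $O_k(n)$-per-round implementation that is the second ingredient of Fomin and Korhonen, $n$ compression rounds already give $O_k(n^2)$. A rebalancing step that loses a constant factor in width would also break your width-$(2k+1)$ invariant unless further refinement followed it.
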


We could also use the almost-linear-time algorithm of Korhonen and Soko\l{}owski~\cite{KS2024Abstract} for computing rank-width, but it would not change the running time we end up with.

Interestingly, previous algorithms due to Hlin\v{e}n\'y and Oum~\cite{HO2006} and Jeong, Kim, and Oum~\cite{JKO2019} for finding a rank-decomposition of small width convert the graph rank-width problem into a problem on matroid branch-width.
However, we convert the matroid branch-width problem back to rank-width problems with explicit bounds.
See \zcref{fig:matroid_algorithm} for a flowchart.

\begin{figure}\centering
\resizebox{1\textwidth}{!}{\begin{tikzpicture}[
    node distance=0.5cm and 0.5cm,
    base/.style={draw=black, thick, align=center, inner sep=0.75ex},
    startstop/.style={base, rectangle, rounded corners, minimum width=3cm, fill=gray!10},
    process/.style={base, align=center, rectangle, minimum width=4cm, text width=7cm, inner sep=0.5ex,
    execute at begin node={\strut}, 
    execute at end node={\strut}},
    decision/.style={base, diamond, aspect=2.5, text width=4.5cm, inner sep=-1.25ex},
    arrow/.style={thick, ->, >=Latex}
    ]

    \node (n1) [startstop] {Input: Matroid $M$ with a matrix representation};
    \node (n2) [process, below=of n1, minimum height=0.8cm] {Convert the matrix to standard form};
    \node (n3) [process, below=of n2, minimum height=1.3cm] {Construct a symmetric matrix $A$\\
    with $\bw(M)=\rw(A)$};
    \node (n4) [process, below=of n3, minimum height=0.8cm] {Construct the entry graph $G_A$};
    \node (n5) [decision, below=of n4] {Approximate\\ rank-width of $G_A$ \\ (\zcref{thm:graph_rankwidth_approximation})};

    \node (n6) [process, below=1.2cm of n5, minimum height=1.3cm] {Convert a rank-decomposition of $G_A$\\ to a rank-decomposition of $A$};
    \node (n7) [decision, below=of n6] {Is rank-width of $A$ at most $k$?\\ (\zcref{thm:branch-width_compression})};
    \node (n8) [process, below=1.2cm of n7, minimum height=1.3cm] {Convert to a branch-decomposition of $M$};
    \node (final) [startstop, below=of n8] {Output: Branch-decomposition of $M$ of width $\leq k$};
    \node (out2) [startstop, right=1cm of final, text width=4.5cm] {Output: $\bw(M)>k$};
    
    \draw [arrow] (n1) -- (n2);
    \draw [arrow] (n2) -- (n3);
    \draw [arrow] (n3) -- (n4);
    \draw [arrow] (n4) -- (n5);
    \draw [arrow] (n5) -| node[above, xshift=-2.8cm] {$G_A$ has large rank-width} ([xshift=0.7cm]out2.north);
    \draw [arrow] (n5) -- node[left, yshift=0.2cm, text width=5cm, align=right] {rank-decomposition \\ of $G_A$ of small width} (n6);
    \draw [arrow] (n6) -- (n7);
    \draw [arrow] (n7) -| node[above, xshift=-2cm, yshift=0.1cm] {rank-width of $A$ is more than $k$} ([xshift=-0.3cm]out2.north);
    \draw [arrow] (n7) -- node[left, yshift=0.2cm, text width=4.5cm, align=right] {rank-decomposition \\ of $A$ of width at most $k$} (n8);
    \draw [arrow] (n8) -- (final);
\end{tikzpicture}}
\caption{Flowchart of the algorithm that, given an input matroid $M$ represented by a matrix over a finite field $\mathbb{F}$, either finds a branch-decomposition of width at most $k$ or concludes that the branch-width of $M$ is greater than $k$.}
\label{fig:matroid_algorithm}
\end{figure}

To make our theorem more general, we will consider a more general problem of finding a rank-decomposition of $\sigma$-symmetric matrices, a common generalization of symmetric matrices and skew-symmetric matrices, introduced by Kant\'e and Rao~\cite{KR2013}. 
For a matroid given with a standard matrix representation $(~I \mid A~)$, 
we will construct a symmetric matrix 
$(\begin{smallmatrix}0&A\\A^T&0\end{smallmatrix})$, where $A^T$ denotes the transpose of $A$

\getkeytheorem{thm:exact_algorithm}

The rank-width of a directed graph was introduced by Kant\'e and Rao~\cite{KR2013}.
The above theorem allows us to obtain the following corollary for directed graphs, improving the previous running time of $O(n^3)$ due to Kant\'e and Rao~\cite[Theorem 3.39]{KR2013}.

\getkeytheorem{cor:directedrankwidth}

\subsection{Applications to linear rank-width and path-width}
The \emph{path-width} of a matroid $M$ is the minimum $k$ such that there is an ordering $e_1,e_2,\ldots,e_n$ of the elements of $M$ so that $\lambda_M(\{e_1,e_2,\ldots,e_i\})\le k$ for all $i$.
This is also called the \emph{minimum trellis state-complexity} or \emph{trellis-width} in \cite{Kashyap2008}, motivated by their origin in trellis complexity from coding theory.
It is known that computing the path-width of a matroid represented over a fixed finite field is NP-hard, as shown by Kashyap~\cite{Kashyap2008}.
Jeong, Kim, and Oum~\cite{JKO2016} presented an algorithm to decide whether the path-width of an $n$-element matroid represented over a finite field $\mathbb F$ is at most $k$ in time $O_{k,\mathbb F}(n^3)$.
By using \zcref{thm:represented_matroid_branchwidth_approximation}, we can improve $O_{k,\mathbb F}(n^3)$ to $O_{k,\mathbb F}(n^2)$ if the input matroid is given by a standard representation.
\getkeytheorem{thm:exact_path-width_b}

If the input matroid is given by a general $m\times n$ matrix representation instead, then as before, we can convert it to a standard representation and apply \zcref{thm:exact_path-width_b}.

\getkeytheorem{thm:path-width}

Note that the same method can be applied to linear rank-width of graphs; by combining the result of Fomin and Korhonen (\zcref{thm:graph_rankwidth_approximation}) 
and the result of Jeong, Kim, and Oum~\cite{JKO2016}, 
one can obtain an $O_{k}(n^2)$-time algorithm to decide whether the linear rank-width of an input graph is at most $k$ for each fixed $k$.

\subsection{Lower bound}
Lastly, we give evidence that the factor of $n^{\omega}$ in \zcref{thm:represented_matroid_branchwidth_approximation} is unavoidable when the matroid is given by a general representation.
In particular, we make the simple observation that deciding whether the branch-width of a matroid represented by an $n\times n$ matrix over a field $\mathbb F$ is $0$ is equivalent to deciding whether the  matrix is non-singular.

\getkeytheorem{lem:branchwidth0}

Currently there is an algorithm that decides in time $O(n^\omega)$ whether an $n\times n$ matrix is singular, where $\omega$ is the matrix multiplication exponent~\cite{BH1974,IMH1982}, but the authors are unaware of any algorithm that runs in time $O(n^{\omega'})$ for $\omega'$ less than the current matrix multiplication exponent.
Therefore, \zcref{lem:branchwidth0} implies that an improvement to the algorithm of \zcref{thm:represented_matroid_branchwidth_approximation} for the $n \times n$ case would imply an improvement to a fundamental problem in computational algebra.

We are able to lift the simple observation of \zcref{lem:branchwidth0} to also rule out approximation algorithms.
We show that for a fixed field $\mathbb{F}$ with at least $4$ elements, computing the branch-width of a matroid represented by a matrix over~$\mathbb{F}$ within a factor $1.5$ is as hard as the problem of deciding whether a square matrix over $\mathbb{F}$ is singular.
Note that in the following statement, each of $k$, $\alpha$, and $\mathbb F$ is a fixed constant.

\getkeytheorem{thm:small_field_case}

Therefore, it is unlikely that we can approximate the branch-width of a given 
$n$-element represented matroid over a fixed field $\mathbb{F}$ within a factor $1.5$ in time $O_{\mathbb F}(n^{\omega-\varepsilon})$ for any $\varepsilon>0$ where $\omega < 2.3714$ is the matrix multiplication exponent~\cite{ADWXXZ2025}.
Furthermore, if the size of the field is large enough, we can strengthen the lower bound result to an approximation ratio of~$2-\delta$ for every $\delta>0$.

\getkeytheorem{thm:new_reduction_general}

\subsection{Overview}
This paper is organized as follows.
\zcref{sec:preliminary} will introduce several definitions and lemmas we will use throughout the paper.
In \zcref{sec:matroidtomatrix}, we will find a symmetric matrix whose rank-width is equal to the branch-width of a given represented matroid.
\zcref{sec:approximation} will discuss the key method for the paper, which constructs a graph whose rank-width is functionally equivalent to the rank-width of a $\sigma$-symmetric matrix.
In \zcref{sec:exact}, we will use the algorithm of Jeong, Kim, and Oum~\cite{JKO2019} to convert our approximation algorithm to an exact fixed-parameter tractable algorithm.
In \zcref{sec:linear_rank-width}, we will apply our result to present an algorithm finding a path-decomposition of a represented matroid if one exists.
Finally, in \zcref{sec:reduction}, we will prove the lower bound of running time on approximating the branch-width of matroids.

\section{Preliminaries}\label{sec:preliminary}
All graphs and matroids are finite in this paper. 
For a non-negative integer $n$, let $[n]$ denote the set of positive integers less than or equal to $n$.

An $X \times Y$ matrix, for sets $X$ and $Y$, means a matrix whose rows are indexed by $X$ and columns by $Y$.
For an $X\times Y$ matrix $A$ and subsets $X'\subseteq X$ and~$Y'\subseteq Y$, 
we write $A[X',Y']$ to denote the $X'\times Y'$ submatrix of~$A$.

For a tuple of parameters $\bar{x}$, the $O_{\bar{x}}(\cdot)$-notation hides factors that depend only on $\bar{x}$ and are computable given $\bar{x}$.

\subsection{Rank of matrices}
We prove several lemmas on linear algebra we will use in this paper.
For a matrix $A$, let $\rk(A)$ be the rank of $A$.

\begin{lemma}[label=lem:rank_and_distinct_rows]
    Let $A$ be a matrix that has at most $q-1$ distinct non-zero values of entries.
    Let $d$ be the number of distinct non-zero rows of~$A$.
    Then $\rk(A)\leq d\leq q^{\rk(A)}-1$.
\end{lemma}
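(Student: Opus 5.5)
The plan is to prove the two inequalities separately: the lower bound by a spanning argument, the upper bound by a counting argument after reducing the rows to a basis. Let $r=\rk(A)$, and let $S$ be the set of at most $q-1$ distinct non-zero entry values of $A$ (over whatever field $\mathbb F$ the matrix lives in, possibly infinite).

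For $\rk(A)\le d$: the row space of $A$ is spanned by its rows. Zero rows and repeated rows contribute nothing new, so the row space is already spanned by the $d$ distinct non-zero rows. Hence its dimension $r$ is at most $d$.

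For $d\le q^{r}-1$: choose a set $B$ of $r$ columns of $A$ forming a basis of the column space. Every column of $A$ is then a linear combination of the columns in $B$, say $A = A[\cdot,B]\,C$ for some matrix $C$. Consequently each row of $A$ is determined by its restriction to the columns in $B$: if two rows agree on $B$, then, since their full rows equal their $B$-parts times $C$, they agree everywhere. Likewise, a row that vanishes on $B$ vanishes everywhere.

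So the map sending a distinct non-zero row of $A$ to its restriction to $B$ is injective and never lands on the zero vector. Each restricted row is a vector of length $r$ whose entries lie in $S\cup\{0\}$, a set of at most $q$ values. There are at most $q^r$ such vectors; excluding the zero vector leaves at most $q^r-1$. Therefore $d\le q^r-1$.

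There is no real obstacle here. The one point needing care is that $\mathbb F$ may be infinite, so the count must use only the restricted alphabet $S\cup\{0\}$, not $|\mathbb F|$. The factorisation $A=A[\cdot,B]\,C$ holds over any field, so the argument needs nothing else. When $r=0$, $A$ is the zero matrix and $d=0=q^0-1$, consistent with both bounds.
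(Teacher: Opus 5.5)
Your proposal is correct and follows essentially the same argument as the paper. Both restrict each row to a column basis of size $\rk(A)$ and count the possible restricted vectors over the at most $q$ values in $S\cup\{0\}$. Your explicit observation that a row vanishing on $B$ vanishes everywhere is the step that justifies subtracting the zero vector, which the paper passes over with ``hence''.
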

\begin{proof}
    The first inequality is trivial.
    To show the second inequality, 
    let $k$ be the rank of $A$.
    The column basis of $A$ has $k$ columns
    and each row vector is determined by its entries on those $k$ columns,
    because every column is a linear combination of the basis columns, so the entry of any column in a given row is the same linear combination of the basis-column entries in that row.
    Each of the $k$ basis-column entries can take at most $q$ values (zero or one of the at most $q-1$ non-zero values), so there are at most $q^k$ distinct row vectors in total, hence at most $q^k-1$ distinct non-zero rows.
\end{proof}

\begin{lemma}[label=lem:matrix_rank_k]
    Let $A$ be a $U\times V$ matrix over the binary field $\mathbb{F}_2$ of rank $k$.
    Then there exists a set $\{(U_i,V_i)\}_{i\in [\ell]}$ of pairs $U_i\subseteq U$ and $V_i\subseteq V$  for some $\ell\in [2^k-1]\cup\{0\}$
    such that $(U_i\times V_i)\cap (U_j\times V_j)=\emptyset$ for all distinct $i,j\in [\ell]$ 
    and the $(a,b)$-entry of $A$ is $1$ if and only if $(a,b)\in U_i\times V_i$ for some $i\in[\ell]$.
\end{lemma}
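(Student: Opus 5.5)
Over $\mathbb F_2$ the only non-zero value is $1$, so we can take $q=2$ in \zcref{lem:rank_and_distinct_rows}: $A$ has at most $2^k-1$ distinct non-zero rows. The plan is to let these distinct rows index the pairs. Let $r_1,\dots,r_\ell$ be the distinct non-zero row vectors of $A$; then $\ell\le 2^k-1$, and $\ell=0$ exactly when $A=0$. For each $i\in[\ell]$ put
\[
U_i=\{a\in U : \text{row } a \text{ of } A \text{ equals } r_i\},\qquad V_i=\{b\in V : r_i(b)=1\}.
\]
Each $U_i$ is non-empty because $r_i$ occurs as a row, and each $V_i$ is non-empty because $r_i\neq 0$, although neither fact is needed.

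Next I check that the rectangles $U_i\times V_i$ are pairwise disjoint. Each row index $a$ belongs to at most one $U_i$, since its row equals at most one of the distinct vectors $r_i$. Hence for $i\neq j$ we have $U_i\cap U_j=\emptyset$, and so $(U_i\times V_i)\cap(U_j\times V_j)=\emptyset$.

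Finally I verify the entry characterization in both directions. Suppose $A_{ab}=1$. Then row $a$ is non-zero, so it equals some $r_i$, which puts $a\in U_i$. Also $r_i(b)=A_{ab}=1$, so $b\in V_i$. Conversely, suppose $(a,b)\in U_i\times V_i$. Then row $a$ equals $r_i$, so $A_{ab}=r_i(b)=1$.

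I do not expect a real obstacle. The only points needing care are that we work over $\mathbb F_2$, so "non-zero" means "equal to $1$", and that the zero matrix gives the case $\ell=0$. The bound $\ell\le 2^k-1$ comes directly from \zcref{lem:rank_and_distinct_rows}.
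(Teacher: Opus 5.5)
Your proof is correct and follows the paper's own argument: group the non-zero rows into classes of identical row vectors to get the $U_i$, take $V_i$ as the support of the common row, and bound $\ell\le 2^k-1$ via \zcref{lem:rank_and_distinct_rows} with $q=2$. You also spell out the disjointness of the rectangles and both directions of the entry characterization, which the paper leaves implicit.
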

\begin{proof}
    Let $U_1$, $U_2$, $\ldots$, $U_\ell$ be subsets of $U$ representing the sets of indices of non-zero rows having identical row vectors. By \zcref{lem:rank_and_distinct_rows}, $\ell\le 2^k-1$.
    For each $i\in [\ell]$, let $V_i$ be the subset of $V$ representing the coordinates having $1$ in the row vector of a row indexed by elements of~$U_i$. 
\end{proof}

Let $A$ be an $m\times n$ matrix where $a_{i,j}$ is the $(i,j)$-entry of $A$, and let $B$ be a $p\times q$ matrix.
The \emph{tensor product} of $A$ and $B$, denoted by $A\otimes B$, is the $mp\times nq$ matrix defined by
\[
    A\otimes B=\begin{pmatrix}
        a_{1,1}B & a_{1,2}B & \cdots & a_{1,n}B\\
        a_{2,1}B & a_{2,2}B & \cdots & a_{2,n}B\\
        \vdots & \vdots & \ddots & \vdots \\
        a_{m,1}B & a_{m,2}B & \cdots & a_{m,n}B
    \end{pmatrix}.
\]
It is easy to see that the rank of the tensor product of two matrices is equal to the product of their ranks.
\begin{lemma}[label=lem:tensor_rank]
    Let $A,B$ be matrices.
    Then $\rk(A\otimes B)=\rk(A)\cdot \rk(B)$.
\end{lemma}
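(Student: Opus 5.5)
We prove $\rk(A\otimes B)=\rk(A)\cdot\rk(B)$ by reducing both matrices to rank normal form. The tool is the mixed-product property of the tensor product: $(P\otimes Q)(A\otimes B)(R\otimes S)=(PAR)\otimes(QBS)$ whenever the products are defined. This follows directly from the block description of $A\otimes B$: the $(i,j)$ block of $(A\otimes B)(C\otimes D)$ is $\sum_l a_{i,l}c_{l,j}BD$, which is the $(i,j)$ block of $(AC)\otimes(BD)$. Applying this identity twice gives the three-factor version.

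Let $r=\rk(A)$ and $s=\rk(B)$. Choose invertible $P,R$ with $PAR=\begin{pmatrix}I_r&0\\0&0\end{pmatrix}=:D_A$, and invertible $Q,S$ with $QBS=D_B$, where $D_B$ is defined the same way with $I_s$. The matrix $P\otimes Q$ is invertible with inverse $P^{-1}\otimes Q^{-1}$, again by the mixed-product property, and likewise for $R\otimes S$. Hence
\[
\rk(A\otimes B)=\rk\bigl((P\otimes Q)(A\otimes B)(R\otimes S)\bigr)=\rk(D_A\otimes D_B).
\]

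It remains to compute $\rk(D_A\otimes D_B)$. In $D_A\otimes D_B$, the block in position $(i,j)$ equals $D_B$ if $i=j\le r$ and is zero otherwise. Each entry of this matrix is $0$ or $1$, and each row and each column contains at most one $1$. The number of $1$s is $r\cdot s$, and they occupy pairwise distinct rows and columns. A matrix of this form is a partial permutation matrix, so its rank equals its number of nonzero entries, namely $rs$.

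None of these steps presents a real obstacle. The only point requiring care is stating the mixed-product identity with compatible dimensions: $P$ is $m\times m$, $Q$ is $p\times p$, and so on, so that every product above is defined.
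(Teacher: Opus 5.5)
Your proof is correct. The paper gives no proof of this lemma and only remarks that it is easy to see, so there is nothing to compare against. Your argument via the mixed-product identity and rank normal form is the standard justification, and each step holds:
\begin{itemize}
\item the block computation;
\item the invertibility of $P\otimes Q$ with inverse $P^{-1}\otimes Q^{-1}$;
\item the count of the $rs$ ones of $D_A\otimes D_B$, which lie in distinct rows and distinct columns because $s\le p$ and $s\le q$.
\end{itemize}
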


A \emph{subdeterminant} of a matrix is the determinant of a square submatrix.

\subsection{Matroids}

For a matroid $M=(E,\mathcal{I})$, a minimal set not in $\mathcal{I}$ is called a \emph{circuit}.
A matroid is called \emph{simple} if each of its circuits contains at least three elements.
Note that if a matroid represented by a matrix $A$ is simple, then $A$ has no zero column and no two columns of $A$ are linearly dependent.

For a matroid $M=(E,\mathcal{I})$ and $E'\subseteq E$, $M\setminus E'$ is the matroid $(E\setminus E',\mathcal{I}')$ where $\mathcal{I}'=\{X\subseteq E\setminus E': X\in \mathcal{I}\}$.
This operation is called the \emph{deletion} of $E'$.
The \emph{restriction} of $M$ to $E'$
is defined as $M\setminus (E\setminus E')$.
It is easy to see that branch-width of a matroid does not increase when deleting an element.

\begin{lemma}[\cite{Oxley2011M}]\label{lem:bw_deletion}
    Let $M=(E,\mathcal{I})$ be a matroid and let $E'\subseteq E$.
    Then $\bw(M\setminus E')\leq\bw(M)$.
\end{lemma}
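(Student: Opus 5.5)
The plan is to prove \zcref{lem:bw_deletion} by restricting an optimal branch-decomposition of $M$ to $E\setminus E'$, after first establishing monotonicity of the connectivity function under deletion. By induction on $\abs{E'}$, it suffices to treat the deletion of a single element $e$. We may also assume $\abs{E\setminus E'}\ge 2$, since otherwise the branch-width of $M\setminus E'$ is $0$ by convention and there is nothing to prove.

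\textbf{Step 1: monotonicity of $\lambda$.} Let $N=M\setminus e$ on ground set $E-e$, and let $X\subseteq E-e$ with $Y=(E-e)\setminus X$. We compare
\[
\lambda_N(X)=r(X)+r(Y)-r(E-e)
\qquad\text{with}\qquad
\lambda_M(X)=r(X)+r(Y+e)-r(E).
\]
Submodularity of $r$, applied to the sets $Y+e$ and $E-e$, gives
\[
r(Y+e)+r(E-e)\ge r(E)+r(Y).
\]
Rearranging yields $r(Y+e)-r(E)\ge r(Y)-r(E-e)$, and hence $\lambda_N(X)\le\lambda_M(X)$.

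\textbf{Step 2: transferring the decomposition.} Take a branch-decomposition $(T,\mathcal L)$ of $M$ of width $\bw(M)$. Remove the leaf labelled $e$ from $T$. Its neighbour now has degree at most $2$; if it has degree $2$, suppress it by merging its two incident edges into one. The result is a subcubic tree whose leaves are in bijection with $E-e$. Every edge of the new tree corresponds to an edge of $T$, and the bipartition of $E-e$ it induces is $(X\setminus\{e\},Y\setminus\{e\})$, where $(X,Y)$ is the bipartition of $E$ induced by that edge in $T$. Since $\lambda_M$ is symmetric, we may assume $e\in Y$, so the side $X$ lies in $E-e$. Step 1 then gives $\lambda_N(X)\le\lambda_M(X)\le\bw(M)$. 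Therefore the new tree has width at most $\bw(M)$, which proves $\bw(N)\le\bw(M)$.

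\textbf{Main obstacle.} The only subtle point is the bookkeeping in Step 2: checking that suppressing the degree-$2$ vertex creates no new bipartitions, and handling the degenerate small cases where $T$ becomes a single edge or a single vertex. Both are routine.
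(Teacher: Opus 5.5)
Your argument is correct in substance and is the standard one. The paper gives no proof of its own; it only cites Oxley and remarks that the fact is easy to see. Your Step 1 is exactly right: submodularity applied to $Y\cup\{e\}$ and $E\setminus\{e\}$ gives $\lambda_{M\setminus e}(X)\le\lambda_M(X)$ for every $X\subseteq E\setminus\{e\}$. Your single-element reduction is also sound, since $\abs{E\setminus\{e\}}\ge\abs{E\setminus E'}\ge 2$ at every step.

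There is one bookkeeping slip in Step 2. In this paper a subcubic tree is allowed to have nodes of degree $2$. So the neighbour $u$ of the leaf labelled $e$ may have degree $2$ in $T$. After you delete that leaf, $u$ becomes an unlabelled leaf rather than a degree-$2$ node. Your claim that the leaves of the new tree are in bijection with $E\setminus\{e\}$ then fails.

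The fix is one line. First suppress all degree-$2$ nodes of $T$. The two edges at such a node induce the same bipartition, so the width is unchanged. After that, $u$ has degree exactly $3$, because $\abs{E}\ge 3$ rules out $T$ being a single edge, and your suppression step goes through.

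Alternatively, take the minimal subtree of $T$ containing the leaves labelled by $E\setminus E'$ and suppress its degree-$2$ nodes, as the paper does in the proof of \zcref{thm:graph_to_matrix}. Each remaining edge induces $(X\setminus E',Y\setminus E')$ for some bipartition $(X,Y)$ induced by an edge of $T$. Iterating Step 1, together with the symmetry of $\lambda$, gives $\lambda_{M\setminus E'}(X\setminus E')\le\lambda_M(X)$. This handles all of $E'$ at once and avoids the induction on $\abs{E'}$.
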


A matroid~$M$ is \emph{uniform} if every subset of size $\rk(M)$ is independent.
The branch-width of a uniform matroid can be computed as follows.
\begin{proposition}[note={Dharmatilake~\cite[Proposition 2.1.5(5)]{Dharmatilake1994}}]\label{lem:uniformbw}
    The branch-width of a uniform matroid~$M$ is equal to  
    \[ \min(\rk(M),\abs{E(M)}-\rk(M),\lceil\abs{E(M)}/3\rceil).\]
\end{proposition}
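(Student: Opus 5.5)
Throughout, let $M$ be a uniform matroid of rank $r$ on an $n$-element set $E$, so $r(X)=\min(\abs{X},r)$ for all $X\subseteq E$. The first step is to compute the connectivity function explicitly. Substituting into $\lambda_M(X)=r(X)+r(E\setminus X)-r(E)$ and checking the four cases according to whether $\abs{X}$ and $n-\abs{X}$ are at most $r$, I expect
\[
\lambda_M(X)=\min\bigl(\abs{X},\,n-\abs{X},\,r,\,n-r\bigr).
\]
So $\lambda_M(X)$ depends only on $s(X):=\min(\abs{X},n-\abs{X})$, it is monotone in $s(X)$, and it is always at most $\min(r,n-r)$. The proposition then reduces to a purely combinatorial claim about cubic trees: the minimum, over branch-decompositions, of the largest $s(X)$ over displayed sets $X$ equals $\lceil n/3\rceil$. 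The cases $n\le 2$ are degenerate and I will check them directly against the definition; for instance, for $n=2$ the only displayed sets are singletons, and the formula gives $\min(r,2-r,1)$. From now on assume $n\ge 3$.

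For the upper bound, every branch-decomposition has width at most $\min(r,n-r)$, so it suffices to build one in which every displayed set $X$ has $s(X)\le\lceil n/3\rceil$. Partition $E$ into three parts $P_1,P_2,P_3$, each of size at most $\lceil n/3\rceil$. For each $i$, build a subcubic tree whose leaves are labelled by $P_i$, say a caterpillar. Then join the three roots to a new central node of degree~$3$. Every edge of this tree is either inside or on the boundary of some subtree, so one side of it displays a subset of some $P_i$, which has size at most $\lceil n/3\rceil$. If some $P_i$ is a single element, its subtree is just a leaf attached to the centre; this is harmless.

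For the lower bound, I use the standard balanced-edge argument. Take any branch-decomposition $(T,\mathcal L)$. Orient each edge $e$ of $T$ towards the side containing more elements of $E$, breaking ties arbitrarily. Following the oriented edges from any node must end at a node with no outgoing edge, since $T$ is a finite tree. Such a node cannot be a leaf when $n\ge 2$, because the single edge at a leaf points away from a one-element side towards a side with at least one element (with a tie only when $n=2$, which is excluded). So the sink is an internal node of degree~$3$, and its three incident edges split $E$ into three parts $X_1,X_2,X_3$. Let $X$ be the largest part, so $\abs{X}\ge\lceil n/3\rceil$. Because the edge to $X$ points into the sink, we have $\abs{X}\le n-\abs{X}$. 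Hence $s(X)=\abs{X}\ge\lceil n/3\rceil$, and therefore $\lambda_M(X)\ge\min(\lceil n/3\rceil,r,n-r)$.

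The only delicate point is the orientation and tie-breaking in the lower bound, namely making sure the sink is an internal degree-3 node and that the inequality $\abs{X}\le n-\abs{X}$ really follows from the orientation. Everything else is routine case-checking.
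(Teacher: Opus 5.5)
Your proof is correct. The paper does not prove this proposition; it quotes it from Dharmatilake, so there is no proof to compare step by step. Your three steps are all sound: the formula $\lambda_M(X)=\min(\abs{X},n-\abs{X},r,n-r)$, the tree built from three blocks for the upper bound, and the sink argument for the lower bound. Your lower bound is the same orientation argument the paper uses for \zcref{lem:balancedcut}, and you could simply cite that lemma. For $n\ge 4$, put $k=\lceil n/3\rceil\ge 2$. Since $3\lceil n/3\rceil-2\le n$, some edge has at least $\lceil n/3\rceil$ elements on each side, so $s(X_e)\ge\lceil n/3\rceil$. The case $n=3$ is trivial.

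There is one slip. In this paper branch-decompositions use subcubic trees, so internal nodes of degree~$2$ are allowed. Your claim that the sink is an internal node of degree~$3$ therefore does not follow from what you wrote. You can fix this in either of two ways:
\begin{itemize}
\item Suppress degree-$2$ nodes first; this does not change the set of displayed separations.
\item Note that if the sink has degree~$2$, both incident edges point into it. This forces both sides to have exactly $n/2$ elements, and $n/2\ge\lceil n/3\rceil$.
\end{itemize}
With that patch the proof is complete.
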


For more about matroid theory, we refer to the book of Oxley~\cite{Oxley2011M}.

\subsection{Connectivity functions and branch-width}
A \emph{connectivity function} on a finite set $E$ is a submodular function $\lambda:2^E\to\mathbb Z_{\geq0}$ such that 
$\lambda(X)=\lambda(E\setminus X)$ for all $X\subseteq E$
and $\lambda(\emptyset)=0$.
The matroid connectivity function is an instance of a connectivity function.

A tree is \emph{subcubic} if every node has degree at most $3$.
A \emph{branch-decomposition} of a connectivity function $\lambda$ is a pair $(T,\mathcal{L})$ of a subcubic tree $T$ and a bijection $\mathcal{L}:E\rightarrow \{t:t\text{ is a leaf of }T\}$.
For each edge $e$ of $T$, the connected components of~$T\setminus e$ induce a partition $(X_e,Y_e)$ of $E$ given by the leaves of $T$ in each component, and the \emph{width} of an edge $e$ is defined by $\lambda(X_e)$.
Next, the \emph{width} of a branch-decomposition~$(T,\mathcal{L})$ is defined by the maximum width among the edges of $T$.
Lastly, the \emph{branch-width} of $\lambda$, denoted by $\bw(\lambda)$, is the minimum width of all its branch-decompositions.
If $\abs{E}\leq 1$, then $\lambda$ admits no branch-decomposition and we define $\bw(\lambda)=0$.

Here is a lemma on a subcubic tree, which we will use later for branch-decompositions.

\begin{lemma}\label{lem:balancedcut}
    Let $k\ge2$ be an integer.
    Every subcubic tree $T$ with at least $3k-2$ leaves
    has an edge $e$ such that 
    each component of $T\setminus e$ contains at least $k$ leaves of $T$.
\end{lemma}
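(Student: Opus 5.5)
I will use the standard "walk towards the heavy side" argument. The key device is to orient each edge of $T$ towards the side that carries too many leaves, and then look for a node where this orientation gets stuck.

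Let $N\ge 3k-2$ be the number of leaves of $T$. For an edge $e$, the two components of $T\setminus e$ partition the leaves into two parts. Suppose for contradiction that no edge is balanced. Then for every edge $e=uv$, one side contains at most $k-1$ leaves, and the other side therefore contains at least $N-(k-1)\ge 2k-1>k-1$ leaves. So exactly one side is small. Orient $e$ from the small side towards the large side. Concretely, orient $u\to v$ if the component of $T\setminus e$ containing $u$ has at most $k-1$ leaves.

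Every node has outdegree at most $1$. Suppose a node $v$ had two out-edges $vw_1$ and $vw_2$. Let $C_i$ be the component of $T\setminus vw_i$ containing $w_i$. Since $vw_i$ is oriented away from $v$, the side containing $v$ is small, so $C_i$ contains at least $2k-1$ leaves. The sets $C_1$ and $C_2$ are disjoint, because both lie on different branches at $v$. Moreover, the side of $vw_1$ containing $v$ includes $C_2$. That side would then have at least $2k-1>k-1$ leaves, contradicting the orientation of $vw_1$.

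A finite tree with $|E(T)|=|V(T)|-1$ edges, each of which has exactly one tail, has some node $v$ of outdegree $0$. If $T$ has a single node there is nothing to prove, since then $N\le 1<3k-2$. So $v$ has degree between $1$ and $3$, and all its incident edges point into $v$.

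Now count leaves. Each branch of $T-v$ at $v$ is the tail side of an edge pointing into $v$, so it contains at most $k-1$ leaves. The node $v$ itself may be a leaf, contributing at most $1$ further leaf. If $v$ is a leaf, it has degree $1$, so $N\le (k-1)+1=k<3k-2$, using $k\ge2$. Otherwise $v$ has degree at most $3$, so $N\le 3(k-1)=3k-3<3k-2$. Either case contradicts $N\ge 3k-2$, so some edge is balanced.

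The only delicate point is the outdegree-at-most-one argument, which relies on $N-(k-1)$ exceeding $k-1$. This holds because $N\ge 3k-2$ gives $N\ge 2k-1$.
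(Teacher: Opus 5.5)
Your proof is correct and follows the paper's argument: you orient each edge toward the side with many leaves, find a node at which every incident edge points inward, and bound the number of leaves by $(k-1)+1$ or $3(k-1)$. You locate that node by counting ($|E(T)|=|V(T)|-1$, and each edge has exactly one tail) where the paper walks along oriented edges, and your out-degree-at-most-one step is not actually needed for the counting.
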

\begin{proof}
    Suppose not. 
    Then we can orient each edge $e=uv$ of $T$ toward~$v$ 
    if the component of~$T\setminus e$ containing $u$ has fewer than $k$ leaves.
    By following along oriented edges, we are guaranteed to find a node~$v$ of~$T$ such that every edge incident with~$v$ is oriented toward $v$.
    If $v$ is a leaf of~$T$, then $T$ has at most~$(k-1)+1\le 3k-3$ leaves, contradicting the assumption.
    If $v$ is not a leaf of~$T$, then $v$ has degree at most $3$ and therefore $T$ has at most $3(k-1)$ leaves, a contradiction.
\end{proof}

Later, we will sometimes discuss a rooted branch-decomposition. 
For this, we relax our definition of a subcubic tree to allow the root to have degree~$2$.
A \emph{rooted} branch-de\-com\-po\-si\-tion is a branch-decomposition $(T,\mathcal L)$ where $T$ is a tree having a designated root~$r$ of degree~$2$.

As we discussed earlier, the branch-width of a matroid is exactly the branch-width of its connectivity function. 
There are other connectivity functions arising from graphs, giving rise to various definitions of width parameters of graphs. Here are examples.
\begin{description}
    \item [Branch-width]
For a subset~$X$ of edges of a graph~$G$, let~$\kappa_G(X)$ be the number of the vertices of~$G$ incident with both an edge in $X$ and an edge in $E(G)\setminus X$. It is well known that~$\kappa_G$ is a connectivity function as well 
and 
the branch-width of~$\kappa_G$ is the \emph{branch-width of a graph~$G$}, introduced by Robertson and Seymour~\cite{RS1991}.
\item [Rank-width]
For a subset $X$ of vertices of a graph~$G$, let 
$\cutrk_G(X)=\rk(A_G[X,V(G)\setminus X])$
where $A_G$ is the adjacency matrix of $G$ over the binary field.
This function is called the \emph{cut-rank} function 
and is a connectivity function \cite{OS2004}.
The branch-decomposition of $\cutrk_G$ is the definition of a \emph{rank-decomposition} of a graph $G$
and the branch-width of $\cutrk_G$ is the definition of \emph{rank-width of a graph $G$}, introduced by Oum and Seymour~\cite{OS2004}.

\end{description}

\subsection{Rank-width of \texorpdfstring{$\sigma$}{sigma}-symmetric matrices}
The concept of rank-width of graphs was extended to symmetric or skew-symmetric matrices by Oum~\cite{Oum2006b}, and further extended to \emph{$\sigma$-symmetric matrices} by Kant\'e and Rao~\cite{KR2013}.

For a field $\mathbb{F}$, we call $\sigma:\mathbb{F}\rightarrow \mathbb{F}$ a \emph{sesqui-morphism} if $\sigma(\sigma(x))=x$ for all $x \in \mathbb{F}$ and the mapping $[x\mapsto \sigma(1)^{-1}\sigma(x)]$ is an automorphism.
A matrix $A$ is called \emph{$\sigma$-symmetric} if there is a sesqui-morphism $\sigma$ such that $A_{ij}=\sigma(A_{ji})$ for every $i$, $j$.
From the definition, it is easy to see that for two matrices $A$ and $A'$ with $A'_{ij}=\sigma(A_{ij})$ for all $i$ and $j$, we have $\rk(A)=\rk(A')$.
Note that if $\sigma(x)=x$ for all $x$, then $\sigma$-symmetric matrices are symmetric matrices and if $\sigma(x)=-x$ for all~$x$, then $\sigma$-symmetric matrices are skew-symmetric matrices.
In our paper, in most cases, we will only consider the case where $\sigma$ is an identity function.

Now we define the rank-width of a $\sigma$-symmetric matrix.
Let $A$ be an $n\times n$ $\sigma$-symmetric matrix over a field $\mathbb{F}$.
Let $\cutrk_A:2^{[n]}\rightarrow \mathbb{Z}_{\geq 0}$ be defined by
$$\cutrk_A(X)=\rk(A[X,[n]\setminus X])$$
for each $X\subseteq [n]$.
The following lemma shows that the cut-rank function on a $\sigma$-symmetric matrix is symmetric and submodular, which allows us to define the branch-width of $\sigma$-symmetric matrices.

\begin{lemma}[Kant\'e and Rao \cite{KR2013}]
    The cut-rank function $\cutrk_A$ of a $\sigma$-symmetric matrix is symmetric and submodular.
\end{lemma}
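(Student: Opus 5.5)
The plan is to check the three defining properties of a connectivity function separately. Normalization and symmetry come straight from the definitions. Submodularity will follow from a general rank inequality for submatrices, which I will derive from the submodularity of a matroid rank function.

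\emph{Normalization and symmetry.} First, $\cutrk_A(\emptyset)=\rk(A[\emptyset,[n]])=0$. For symmetry, write $Y=[n]\setminus X$. By $\sigma$-symmetry, $A[Y,X]$ is obtained from the transpose $A[X,Y]^T$ by applying $\sigma$ to every entry. As noted just before the lemma, applying $\sigma$ entrywise does not change the rank. Transposition does not change the rank either. Hence $\cutrk_A(Y)=\cutrk_A(X)$.

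\emph{Rank inequality.} For submodularity I will first prove the following claim. For any $R\times C$ matrix $N$, any $X_1,X_2\subseteq R$ and any $Y_1,Y_2\subseteq C$,
\[
\rk N[X_1,Y_1]+\rk N[X_2,Y_2]\ \ge\ \rk N[X_1\cap X_2,\,Y_1\cup Y_2]+\rk N[X_1\cup X_2,\,Y_1\cap Y_2].
\]
To prove it, consider the matroid $M'$ represented by $(\,I_R\mid N\,)$, whose identity columns are indexed by a copy of $R$. Let $r$ be its rank function. The key identity is
\[
\rk N[X,Y]=r\bigl(Y\cup(R\setminus X)\bigr)-\abs{R\setminus X}.
\]
It holds because the unit vectors indexed by $R\setminus X$ span exactly the coordinates in $R\setminus X$. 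Modulo that span, the columns indexed by $Y$ reduce to the columns of $N[X,Y]$.

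Now apply submodularity of $r$ to $S_i=Y_i\cup(R\setminus X_i)$ for $i=1,2$. Since $C$ and the copy of $R$ are disjoint, we get
\[
S_1\cap S_2=(Y_1\cap Y_2)\cup\bigl(R\setminus(X_1\cup X_2)\bigr),\qquad
S_1\cup S_2=(Y_1\cup Y_2)\cup\bigl(R\setminus(X_1\cap X_2)\bigr).
\]
The correction terms also match: $\abs{R\setminus X_1}+\abs{R\setminus X_2}=\abs{R\setminus(X_1\cup X_2)}+\abs{R\setminus(X_1\cap X_2)}$. Substituting the key identity into $r(S_1)+r(S_2)\ge r(S_1\cap S_2)+r(S_1\cup S_2)$ therefore gives the claim.

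\emph{Submodularity.} Take $N=A$, with $X_1=X$, $Y_1=[n]\setminus X$, $X_2=Y$ and $Y_2=[n]\setminus Y$. Then $Y_1\cup Y_2=[n]\setminus(X\cap Y)$ and $Y_1\cap Y_2=[n]\setminus(X\cup Y)$. The claim then reads $\cutrk_A(X)+\cutrk_A(Y)\ge\cutrk_A(X\cap Y)+\cutrk_A(X\cup Y)$. Note that this step does not use $\sigma$-symmetry at all; that hypothesis is needed only for symmetry.

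The only step that needs care is the key identity $\rk N[X,Y]=r(Y\cup(R\setminus X))-\abs{R\setminus X}$. Its proof is the elimination argument above: use the unit vectors indexed by $R\setminus X$ to clear the rows in $R\setminus X$ from the columns in $Y$. This leaves a block-triangular form whose rank is $\abs{R\setminus X}+\rk N[X,Y]$.
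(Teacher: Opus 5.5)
Your proof is correct. The paper gives no proof of its own and simply cites Kant\'e and Rao, and your argument is the standard one behind that citation: symmetry comes from $\sigma$-symmetry together with the invariance of rank under entrywise $\sigma$ and transposition, and submodularity comes from the Oum--Seymour inequality $\rk N[X_1,Y_1]+\rk N[X_2,Y_2]\ge\rk N[X_1\cap X_2,Y_1\cup Y_2]+\rk N[X_1\cup X_2,Y_1\cap Y_2]$, proved through the matroid of $(I\mid N)$ via exactly your identity $\rk N[X,Y]=r(Y\cup(R\setminus X))-\abs{R\setminus X}$. The only flaw is notational: you fix $Y=[n]\setminus X$ in the symmetry step but later reuse $Y$ as an arbitrary second subset in the submodularity step, so rename one of them.
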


The \emph{rank-width} of $A$, denoted by $\rw(A)$, is the branch-width of $\cutrk_A$.
A \emph{rank-decomposition} of $A$ is a branch-decomposition of $\cutrk_A$.
Note that for a graph $G$, the rank-width of $G$ is equal to the rank-width of its adjacency matrix over the binary field.

\subsection{Partitioned matroids and subspace arrangements}
A pair $(M,\mathcal{P})$ is called a \emph{partitioned matroid} if $M=(E,\mathcal{I})$ is a matroid and $\mathcal{P}$ is a partition of $E$.
A connectivity function of a partitioned matroid $(M,\mathcal{P})$, denoted by $\lambda_M^\mathcal{P}$, is defined by 
\[\lambda_M^{\mathcal{P}}(X)=\lambda_M\big(\bigcup_{Y\in X}Y\big)\]
for all $X\subseteq\mathcal P$.
Note that $\lambda_M^\mathcal{P}$ is indeed a connectivity function.
The branch-width of $(M,\mathcal{P})$, denoted by $\bw(M,\mathcal{P})$, is the branch-width of $\lambda_M^{\mathcal{P}}$.

Another similar notion for partitioned matroid is \emph{subspace arrangement}.
Let $\mathbb F$ be a field and $r$ be a positive integer.
A \emph{subspace arrangement} in $\mathbb F^r$ is a multiset 
$\mathcal V=\{V_1,V_2,\ldots,V_n\}$ of subspaces of~$\mathbb F^r$. 
For a subset $X$ of $[n]$, we define 
\[ 
    f(X)= \dim \Big( \big(\sum_{i\in X}V_i\big)\cap \big(\sum_{j\in [n]\setminus X}V_j\big) \Big). 
\]
Then $f$ is submodular because $X\mapsto\dim\big(\sum_{i\in X}V_i\big)$ is a submodular function on $[n]$ and $f(X)=\dim\big(\sum_{i\in X}V_i\big)+\dim \big(\sum_{j\in [n]\setminus X}V_j\big) - \dim\sum_{i=1}^n V_i$.
Thus $f$ is a connectivity function. 
The \emph{branch-width} of a subspace arrangement $\mathcal{V}$ is the branch-width of $f$ as a connectivity function.

Note that we can regard a subspace arrangement as a partitioned matroid.
This can be done by choosing a basis $E_i$ for each $V_i$ and making a matrix $A$ whose columns are elements of $E_i$ for each $i$.
Let $E$ be the index set of columns of $A$ and let $\mathcal{P}=\{E_i\colon V_i\in \mathcal{V}\}$ and $M$ be the matroid on $E$ represented by $A$.
Then the connectivity function $f$ of $\mathcal{V}$ is equal to the connectivity function~$\lambda_M^\mathcal{P}$ of the partitioned matroid $(M,\mathcal{P})$.

\subsection{Linear layouts and path-width}
Let $\lambda$ be a connectivity function on the subsets of a finite set~$E$.
Let $n=\abs{E}$.
A \emph{linear layout} of $\lambda$ is an ordering $e_1,e_2,\ldots,e_n$ of elements of $E$.
The width of a linear layout $e_1,e_2,\ldots,e_n$ is given by
\[\max_{1\leq i<n}\lambda\big( \{e_1,e_2,\ldots,e_i\} \big).\]
The \emph{path-width} of $\lambda$, denoted by $\pw(\lambda)$, is the minimum width of all its linear layouts.
If $\abs{E}=1$, then we define the path-width to be $0$.

If a connectivity function $\lambda$ satisfies $\lambda(\{e\})\leq 1$ for all $e\in E$, then the path-width of $\lambda$ is an upper bound of the
branch-width of $\lambda$.

\begin{lemma}[label=lem:lrwleqrw]
    Let $\lambda$ be a connectivity function on the subsets of a finite set $E$.
    If $\lambda(\{e\})\leq 1$ for all $e\in E$, 
    then 
    the branch-width of $\lambda$ 
    is less than or equal to the path-width of $\lambda$.
\end{lemma}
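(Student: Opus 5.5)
Given a linear layout $e_1,e_2,\ldots,e_n$ of $\lambda$ of width $p=\pw(\lambda)$, I will build a caterpillar branch-decomposition and check that its width is at most $\max(p,1)$. Since $\lambda(\{e\})\le 1$ by hypothesis, this is at most $p$ whenever $p\ge1$. The degenerate cases are handled separately.

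\textbf{Degenerate cases.} If $n\le 1$, both parameters are $0$ by definition. If $n\ge2$ and $p=0$, I must show $\bw(\lambda)=0$. I plan to use the caterpillar below together with submodularity. The edges of the spine correspond to prefix sets $\{e_1,\ldots,e_i\}$, which have value $0$. It then suffices to show $\lambda(\{e_i\})=0$ for each leaf edge. Take $X=\{e_1,\ldots,e_i\}$ and $Y=\{e_i,\ldots,e_n\}$ (complement of a prefix, so $\lambda(Y)=0$ by symmetry), where $\lambda(\emptyset)=0$ when $i=1$ or $i=n$. Submodularity gives $0\ge\lambda(X)+\lambda(Y)\ge\lambda(\{e_i\})+\lambda(E)=\lambda(\{e_i\})$, using $\lambda(E)=\lambda(\emptyset)=0$. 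So every edge has width $0$. The same submodularity argument also gives $\lambda(\{e_i\})\le 2p$ in general, but the hypothesis $\lambda(\{e\})\le1$ is what is really needed when $p\ge1$.

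\textbf{Construction.} For $n=2$ take a single edge; its width is $\lambda(\{e_1\})\le p$, since $\{e_1\}$ is a prefix. For $n\ge3$, take a path $v_2,v_3,\ldots,v_{n-1}$. Attach a leaf $\ell_i$ to $v_i$ for each $2\le i\le n-1$, attach $\ell_1$ to $v_2$, and attach $\ell_n$ to $v_{n-1}$. Label leaf $\ell_i$ by $e_i$. Every internal node has degree $3$, so the tree is subcubic.

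\textbf{Width.} The spine edge $v_iv_{i+1}$ separates $\{e_1,\ldots,e_i\}$ from the rest, so its width is $\lambda(\{e_1,\ldots,e_i\})\le p$. Each leaf edge separates a singleton $\{e_i\}$, so its width is $\lambda(\{e_i\})\le 1\le p$ when $p\ge1$, and is $0$ when $p=0$ by the degenerate-case argument. Hence $\bw(\lambda)\le\pw(\lambda)$.

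There is no real obstacle. The only care needed is the $p=0$ case, where the bound $\lambda(\{e\})\le1$ is too weak and the submodularity argument is used instead.
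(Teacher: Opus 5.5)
Your proof is correct and is essentially the paper's argument. It uses the same caterpillar branch-decomposition, whose spine edges realize the prefix cuts, and the hypothesis $\lambda(\{e\})\le 1$ disposes of leaf edges when the path-width is positive; the width-$0$ case uses the identical submodularity inequality $\lambda(\{e_i\})+\lambda(E)\le\lambda(\{e_1,\ldots,e_i\})+\lambda(\{e_i,\ldots,e_n\})$, and your explicit handling of $n\le 2$ just fills in what the paper dismisses as trivial.
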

\begin{proof}
    Let $k=\pw(\lambda)$.
    We may assume that $n=\abs{E} \ge 3$ because otherwise it is trivial.
    Let~$e_1,e_2,\ldots,e_n$ be a linear layout of width at most $k$.
    Let $T$ be a tree on $2n-2$ vertices $v_1,v_2,\ldots,v_n,w_2,\ldots,w_{n-1}$
    where $E(T)=\{v_iw_i : 2 \le i \le n-1\} 
    \cup \{ w_i w_{i+1}: 2 \le i \le n-2\} \cup \{v_1 w_2, v_n w_{n-1}\}$.
    Let~$\mathcal L(e_i)=v_i$ for $i\in\{1,2,\ldots,n\}$.
    We claim that $(T,\mathcal L)$ is a branch-decomposition of width at most $k$. 
    For each~$i\in\{2,\ldots,n-2\}$, 
    the width of an edge $w_iw_{i+1}$ is equal to 
    $\lambda(\{e_1,e_2,\ldots,e_i\})$, which is at most $k$ because the linear layout has width at most~$k$.

    Note that all the other edges of $(T, \mathcal L)$ have width of form $\lambda(\{e_i\})$ for $i \in [n]$.
    We claim that $\lambda(\{e_i\})\le k$ for all $i\in\{1,2,\ldots,n\}$.
    Since $\lambda(\{e_i\})\le 1$, we may assume that~$k=0$.
    Now, we have $\lambda(\{e_i,e_{i+1},\ldots,e_n\})=\lambda(\{e_1,e_2,\ldots,e_{i-1}\})=0$ 
    and $\lambda(\{e_1,e_2,\ldots,e_i\})=0$. 
    By the submodularity, 
    $\lambda(\{e_i\})+\lambda(E)\le \lambda(\{e_1,e_2,\ldots,e_i\}) + \lambda(\{e_i,e_{i+1},\ldots,e_n\})= 0$ 
    and therefore $\lambda(\{e_i\})=0$.
\end{proof}

A \emph{path-decomposition} of a matroid is a linear layout of its connectivity function, and the \emph{path-width} of a matroid $M$, denoted by $\pw(M)$, is defined as the path-width of its connectivity function.
Similarly, a \emph{linear rank-decomposition} of a graph is a linear layout of its cut-rank function, and the \emph{linear rank-width} of a graph, denoted by $\lrw(G)$, is defined as the path-width of its cut-rank function.
A \emph{linear rank-decomposition} of a $\sigma$-symmetric matrix is also a linear layout of its cut-rank function, and the \emph{linear rank-width} of a $\sigma$-symmetric matrix $A$, denoted by $\lrw(A)$, is defined as the path-width of its cut-rank function.

\subsection{Matrix multiplication exponent}

Let $\omega$ be a constant such that 
multiplying two $n\times n$ matrices over any field can be performed by 
$O(n^{\omega})$ arithmetic operations. 
Then $\omega\ge 2$. Furthermore, we can choose $\omega < 2.3714$ due to~\cite
{ADWXXZ2025}.

\section{Converting represented matroids to symmetric matrices}\label{sec:matroidtomatrix}
When dealing with matroids represented over a field,
it is more convenient to work with standard representations.
When the input matroid is given with a general representation, we need to convert it into a standard representation. 
Here is a lemma on this conversion. 
We would like to state our main theorem in full generality, so instead of restricting our attention to a finite field, we will use an upper bound on the number of distinct values of entries in a standard matrix representation.
For general representations, this can be translated into an upper bound on the number of distinct values of subdeterminants.

\begin{lemma}[label=lem:subdet]
    Let $M$ be a matroid represented by a matrix $A$ over a field $\mathbb F$.
    Let $K$ be a subset of $\mathbb F$ such that $-K=K$.
    If every non-zero subdeterminant of~$A$ belongs to~$K$, then 
    there exists a non-zero element $\alpha\in\mathbb F$ 
    and a standard representation of $M$ 
    whose non-zero entries belong to $\alpha K$.
    Furthermore, if $A$ has $m$ rows and $n$ columns, then such $\alpha$ and such a standard representation can be found in time $O(mn\min(m,n)^{\omega-2})$.
\end{lemma}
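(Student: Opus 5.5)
Reduce $A$ to standard form by row operations and argue that every entry of the result is a ratio of two subdeterminants of $A$ with a fixed denominator. First we may delete zero rows and dependent rows without changing $M$. Let $r=\rk(A)$ and choose a set $B$ of $r$ columns and a set $R$ of $r$ rows with $D=A[R,B]$ nonsingular. Since the rows in $R$ span the row space, the matrix $A'=D^{-1}A[R,E]$ is a representation of $M$ in which the columns indexed by $B$ form the identity, so it is a standard representation.

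The entries come from Cramer's rule. For a column $e\notin B$ and $b\in B$, the $(b,e)$-entry of $D^{-1}A[R,e]$ equals $\det(D_{b\to e})/\det D$, where $D_{b\to e}$ is obtained from $D$ by replacing column $b$ with column $e$, up to reordering columns. Thus it is $\pm\det A[R,(B\setminus\{b\})\cup\{e\}]/\det D$. The numerator is either zero or a non-zero subdeterminant of $A$, so it lies in $K$ up to sign, and $-K=K$ absorbs the sign. The entries in the identity part are $0$ or $1=\det D/\det D$, and $\det D\in K$. Setting $\alpha=(\det D)^{-1}$, which is non-zero, every non-zero entry of $A'$ lies in $\alpha K$. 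The only care needed is the sign bookkeeping from permuting columns, which is exactly why the hypothesis $-K=K$ is there.

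For the running time, we use the LSP/LUP-type decomposition of Ibarra, Moran and Hui~\cite{IMH1982}. It computes the rank, a set $R$ of independent rows, a set $B$ of independent columns, and the product $D^{-1}A[R,E]$ in $O(mn\min(m,n)^{\omega-2})$ field operations. Equivalently, it performs Gaussian elimination to reduced row echelon form within this bound. We also compute $\det D$ from the same factorization at no extra asymptotic cost, which gives $\alpha$.

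The main obstacle is ensuring that the algorithm produces exactly $D^{-1}A[R,E]$, rather than some other standard form. This holds because the reduced row echelon form with pivot columns $B$ is unique. Any sequence of row operations reaching an identity on $B$ yields the same matrix $D^{-1}A[R,E]$, after discarding the zero rows. Hence the entry-wise Cramer argument applies regardless of which fast elimination routine is used.
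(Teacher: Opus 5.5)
Your proposal is correct and follows the paper's proof. Both arguments restrict to $r$ independent rows, choose a basis $B$ with $D=A[R,B]$ nonsingular, and form the standard representation $D^{-1}A[R,E]$ (the paper's $P^{-1}A'$). Both then use Cramer's rule to see that each entry is $\pm$ a subdeterminant of $A$ divided by $\det D$; the paper phrases this as $a''_{ij}=\pm\det A''[B,(B\setminus\{i\})\cup\{j\}]$ and factors out $\det P^{-1}$. Both take $\alpha=(\det D)^{-1}$, with $-K=K$ absorbing the signs.

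The only difference is cosmetic. The paper computes $P^{-1}A'$ directly, citing Bunch--Hopcroft for the $O(r^{\omega-1}n)$ product, so your uniqueness-of-RREF step is harmless but unnecessary.
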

\begin{proof}
    Let $r \le \min(n,m)$ be the rank of~$M$.
    First, we find $r$ linearly independent rows of $A$ in time $O(mn \min(m,n)^{\omega-2})$ by the algorithm of~\cite{IMH1982}.
    We let $A'$ be the $r \times n$ matrix obtained by deleting all other rows of $A$.
    The matrix $A'$ represents the same matroid as the matrix $A$, because deleting linearly dependent rows does not affect whether a set of columns is linearly independent.

    Then, we use the algorithm of~\cite{IMH1982} to find in time $O(r^{\omega-1} n)$ a basis $B$ of $r$ linearly independent columns of $A'$.
    Let $P$ be the submatrix of $A'$ obtained by taking the columns in $B$.
Then $A'' = P^{-1} A'$ is a standard representation of $M$ with respect to $B$.
    The matrix $A''$ can be computed in $O(r^{\omega-1} n)$ time~\cite{BH1974}.

    It remains to show that there exists a non-zero element $\alpha \in \mathbb{F}$ so that all non-zero entries of $A''$ belong to $\alpha K$.
    We show that we can take $\alpha=\det P^{-1}$.

    Let $a_{ij}''$ be the $(i,j)$ entry of $A''$.
    We assume that the rows of $A''$ are indexed by $B$ so that $a_{ii}''=1$ for $i\in B$.
    If $j\in B$, then $a_{ij}''\in\{0,1\}$  because $P^{-1} P= I$.
    We have $1\in \alpha K$ because $\alpha\det P=1$.
    If $j\notin B$, then 
    $a_{ij}''=\pm\det (A''[B,(B\setminus\{i\})\cup\{j\}]) = \pm \det P^{-1} \det A'[B,(B\setminus\{i\})\cup\{j\}]
    \in \alpha K$.
\end{proof}

To make our main proof most general, we will mainly work with $\sigma$-symmetric matrices. 
Let us discuss why problems on branch-width of matroids can be converted 
to problems on rank-width of symmetric matrices. 
As symmetric matrices are instances of $\sigma$-symmetric matrices, 
all the theorems we prove later for $\sigma$-symmetric matrices will give results for matroids by the following lemmas.
We write $I_r$ for the $r\times r$ identity matrix
and $0$ for the zero matrix of certain size.

The following lemma is essentially in the paper of Oum and Seymour~\cite[Proposition 3.1]{OS2004}, where it was proved for binary matroids only.
The proof is identical but for completeness of this paper, we include the proof.

\begin{lemma}[label=lem:matroid_to_symmetric_matrix]
    Let $M=(E,\mathcal{I})$ be a matroid with a standard representation $A=(I_r \mid A_0)$ over a field~$\mathbb{F}$.
    Let $A'$ be an $E\times E$ matrix \[A'= \begin{pmatrix} 0 & A_0 \\ A_0^T & 0 \end{pmatrix}\] over $\mathbb F$.
    Then for all $X\subseteq E$, we have 
    $\lambda_M(X)=\cutrk_{A'}(X)$.
    In particular, $\bw(M)=\rw(A')$.
\end{lemma}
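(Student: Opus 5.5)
The plan is to compute both sides directly in terms of submatrices of $A_0$. Write $E=R\cup C$, where $R$ indexes the columns of $I_r$ (equivalently the rows of $A$) and $C$ indexes the columns of $A_0$. For $X\subseteq E$, set $X_R=X\cap R$, $X_C=X\cap C$, $Y_R=R\setminus X$, and $Y_C=C\setminus X$. The idea is to show that both $\lambda_M(X)$ and $\cutrk_{A'}(X)$ equal
\[
\rk A_0[Y_R,X_C]+\rk A_0[X_R,Y_C].
\]

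First the cut-rank side. The submatrix $A'[X,E\setminus X]$ has rows $X_R\cup X_C$ and columns $Y_R\cup Y_C$. Since $A'$ vanishes on $R\times R$ and on $C\times C$, after permuting rows and columns this submatrix is block anti-diagonal. Its two nonzero blocks are $A_0[X_R,Y_C]$ and $A_0^T[X_C,Y_R]=(A_0[Y_R,X_C])^T$. The rank is therefore the sum of the ranks of these blocks, which is the displayed expression.

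Next the matroid side. Since $A$ has the identity on $R$, we have $r(E)=r$. The rank $r(X)$ is the rank of the columns $X_R\cup X_C$ of $A$. The columns in $X_R$ are the unit vectors $e_i$ for $i\in X_R$. Quotienting them out, i.e.\ deleting the rows indexed by $X_R$, leaves the columns $X_C$ restricted to rows $Y_R$. This gives
\[
r(X)=|X_R|+\rk A_0[Y_R,X_C],
\]
and symmetrically
\[
r(E\setminus X)=|Y_R|+\rk A_0[X_R,Y_C].
\]
Substituting into $\lambda_M(X)=r(X)+r(E\setminus X)-r$ and using $|X_R|+|Y_R|=r$ yields the same displayed expression. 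Hence $\lambda_M(X)=\cutrk_{A'}(X)$ for every $X\subseteq E$.

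Since the two connectivity functions on $E$ coincide, their branch-decompositions and widths coincide, so $\bw(M)=\rw(A')$. We also need $A'$ to be $\sigma$-symmetric for $\rw(A')$ to be defined; it is in fact symmetric, so $\sigma$ can be taken to be the identity. The only step needing care is the quotient argument for $r(X)$. It amounts to elementary column operations: subtract multiples of the $e_i$ ($i\in X_R$) from the columns in $X_C$ to zero out the rows indexed by $X_R$, and this does not change the rank. Everything else is bookkeeping.
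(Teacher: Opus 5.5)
Your proposal is correct and follows essentially the same route as the paper. Both arguments reduce $\lambda_M(X)$ and $\cutrk_{A'}(X)$ to $\rk A_0[Y_R,X_C]+\rk A_0[X_R,Y_C]$, which is the paper's $\rk(B)+\rk(C)$. On the matroid side this comes from the block-triangular form of $A[F,X]$ and $A[F,E\setminus X]$; on the cut-rank side it comes from the block anti-diagonal form of $A'[X,E\setminus X]$.
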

\begin{proof}
Let $F$ be a subset of $E$ 
    corresponding to $I_r$ of $A$, and 
    let $s=\abs{F\cap X}$.
    We may assume that the rows of $A$ are indexed by $F$, by assuming that the $(i,i)$ entry of $A$ is $1$ for all $i\in F$.
    Let $B=A[F\setminus X,X\setminus F]$, $B'=A[F\cap X,X\setminus F]$,
    $C'=A[F\setminus X,E\setminus (X\cup F)]$, and $C=A[F\cap X,E\setminus (X\cup F)]$. 
    Then 
    \[ 
        A[F,X]= 
        \begin{pmatrix}
            0 & B \\
            I_s & B' 
        \end{pmatrix}
        \text{ and }
        A[F,E\setminus X]=
            \begin{pmatrix}
                I_{r-s} & C' \\
                0 & C 
            \end{pmatrix}.
    \]
Therefore, $\lambda_M(X)=\rk(A[F,X])+\rk(A[F,E\setminus X])-r
        =s+\rk(B)+(r-s)+\rk(C)-r
        =\rk(B)+\rk(C)$.
    
    Now observe that 
    \[
        \cutrk_{A'}(X)= \rk (A'[X,E\setminus X])=
            \rk 
            \begin{pmatrix}
                0 & C \\
                B^T & 0 
            \end{pmatrix}
            =\rk(B)+\rk(C)=\lambda_M(X).  \qedhere
    \]
\end{proof}

Note that constructing $A'$ from $A$ can be done in time $O(|E|^2)$.

\section{Rank-width of the entry graph of a \texorpdfstring{$\sigma$}{sigma}-symmetric matrix}\label{sec:approximation}

We now define a graph called the entry graph from a $\sigma$-symmetric matrix $A$ such that the graph has bounded rank-width if and only if $A$ has bounded rank-width. 
This would give an algorithm that allows us to find a rank-decomposition of bounded width for the input $\sigma$-symmetric matrices of rank-width at most $k$
and thus give a corresponding algorithm for matroid branch-width as well. 
This process only works if the $\sigma$-symmetric matrix has a bounded number of distinct non-zero values of entries. This requirement is easily satisfied if the underlying field is finite.

The \emph{entry graph} of a $\sigma$-symmetric matrix $A$ is, roughly, 
the disjoint union of stars together with edges joining leaves of distinct stars
in such a way that we can read the $(u,v)$-entry of $A$ by reading the adjacency between leaves of stars corresponding to $u$ and $v$. 
Since $A$ is $\sigma$-symmetric, this backward translation is possible without causing any ambiguity.
More precisely, the \emph{entry graph} of a $V\times V$ $\sigma$-symmetric matrix $A$, denoted by $G_A$, is a graph on the vertex set 
\[ V\cup \{ v_i^{\alpha} : \text{there exists }j\in V\setminus\{i\}\text{ such that } \alpha=A_{ij}\neq 0\},\]
where $v_i^{\alpha}$ are new distinct vertices not in $V$,
such that
\begin{itemize}
\item every $i\in V$ is adjacent to $v_i^{\alpha}$ 
for all $\alpha\in \{ A_{ij}: j\in V\setminus\{i\},~A_{ij}\neq0\}$
and 
\item $v_i^\alpha$ is adjacent to $v_j^\beta$ if and only if $i\neq j$,  
$\alpha=A_{ij}\neq0 $, and $\beta=A_{ji}\neq0$.
\end{itemize} 
See \zcref{fig:entrygraph} for an illustration.
For each $i\in V$, let \[ W_i=\{i\}\cup \{ v_{i}^{\alpha}: \text{there exists }j\in V\setminus\{i\} \text{ such that }\alpha=A_{ij}\neq0\}.\]
Then $V(G_A)$ is the disjoint union of $W_i$'s for all $i\in V$ and each $W_i$ induces a star 
whose number of leaves is equal to the number of distinct non-zero values in the corresponding row or column.
\begin{figure}
    \centering
\begin{minipage}[c]{0.4\textwidth}

\[A=\ \bordermatrix{
    ~ & a & b & c & d & e \cr
    a & 0 & 0 & 1 & 2 & 0 \cr
    b & 0 & 0 & 1 & 1 & 2 \cr
    c & 1 & 1 & 0 & 0 & 0 \cr
    d & 2 & 1 & 0 & 0 & 2 \cr
    e & 0 & 2 & 0 & 2 & 0 \cr
    }\]
\end{minipage} \
\begin{minipage}[c]{0.4\textwidth}
\begin{tikzpicture}[
    baseline=(current bounding box.center),
    scale=0.4,
    node/.style={circle, fill=black, inner sep=0pt, minimum size=5pt},
    edge/.style={line width=1.5pt, black, line cap=round, line join=round}
]

\coordinate (N1)  at (0.000, 5.000);
\node[node, label={above:$a$}] at (N1) {};
\coordinate (N2)  at (-0.920, 3.250);
\node[node, label={[xshift=0.1cm]left:$v_a^2$}] at (N2) {};
\coordinate (N3)  at (0.920, 3.250);
\node[node, label={[xshift=-0.08cm]right:$v_a^1$}] at (N3) {};
\coordinate (N4)  at (-3.050, 1.700);
\node[node, label={[yshift=-0.1cm]above:$v_b^1$}] at (N4) {};
\coordinate (N5)  at (-4.820, 1.170);
\node[node, label={[xshift=0.05cm,yshift=0.1cm]left:$b$}] at (N5) {};
\coordinate (N6)  at (-3.680, -0.280);
\node[node, label={[yshift=0.1cm]below:$v_b^2$}] at (N6) {};
\coordinate (N7)  at (-2.170, -3.520);
\node[node, label={[xshift=-0.08cm, yshift=-0.05cm]right:$v_c^1$}] at (N7) {};
\coordinate (N8)  at (-2.900, -4.750);
\node[node, label={[xshift=-0.08cm]below:$c$}] at (N8) {};
\coordinate (N9)  at (1.350, -4.050);
\node[node, label={[xshift=0.05cm,yshift=-0.05cm]left:$v_d^1$}] at (N9) {};
\coordinate (N10) at (2.950, -4.750);
\node[node, label={[yshift=0.05cm,xshift=0.05cm]below:$d$}] at (N10) {};
\coordinate (N11) at (2.880, -2.920);
\node[node, label={[xshift=-0.05cm]right:$v_d^2$}] at (N11) {};
\coordinate (N12) at (3.520, 0.740);
\node[node, label={[yshift=-0.05cm]above:$v_e^2$}] at (N12) {};
\coordinate (N13) at (4.820, 1.180);
\node[node, label={[xshift=-0.05cm,yshift=0.1cm]right:$e$}] at (N13) {};

\draw[edge] (N1) -- (N2);
\draw[edge] (N1) -- (N3);
\draw[edge] (N2) -- (N11); 
\draw[edge] (N3) -- (N7);
\draw[edge] (N4) -- (N5);
\draw[edge] (N4) -- (N7);
\draw[edge] (N4) -- (N9);  
\draw[edge] (N5) -- (N6);
\draw[edge] (N6) -- (N12);
\draw[edge] (N7) -- (N8);
\draw[edge] (N9) -- (N10);
\draw[edge] (N10) -- (N11);
\draw[edge] (N11) -- (N12);
\draw[edge] (N12) -- (N13);

\foreach \n in {1,...,13} {
    \node[node] at (N\n) {};
}

\end{tikzpicture}
\end{minipage}
\caption{An example of a symmetric matrix $A$ (left) and its entry graph $G_A$ (right).}\label{fig:entrygraph}
\end{figure}

Now we show that
for a $\sigma$-symmetric matrix~$A$, 
if $G_A$ has rank-width $k$, 
then the rank-width of $A$ is at most $2^k+k-1$.
The main idea of the proof is that 
for a vertex set $X$ of the entry graph~$G_A$,
each star crossing $X$ 
contributes $1$ to the cut-rank of $X$ in $G_A$.

\begin{proposition}[label=thm:graph_to_matrix]
    Let $\mathbb F$ be a field and 
    let $\sigma:\mathbb{F}\rightarrow \mathbb{F}$ be a sesqui-morphism.
    Let $V$ be a finite set with at least two elements and let $A$ be a $V\times V$ $\sigma$-symmetric matrix over $\mathbb{F}$.
    If the entry graph $G_A$ of~$A$ has rank-width~$k$, then $A$ has rank-width at most $2^k+k-1$.
Furthermore, a rank-decomposition of the entry graph of~$A$ of width at most $k$ can be converted into a rank-decomposition of~$A$ of width at most $2^k+k-1$ in time $O(\abs{V(G_A)})$.
\end{proposition}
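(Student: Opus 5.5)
Take a rank-decomposition $(T,\mathcal L)$ of $G_A$ of width at most $k$, and first pull it back to a decomposition of $V$. For each $i\in V$, keep only the leaf $\mathcal L(i)$ (the leaf of the star centre $i$). Delete all leaves $\mathcal L(v_i^\alpha)$ of the leaf vertices of stars. Then repeatedly remove degree-one non-labelled nodes and suppress degree-two nodes, leaving a subcubic tree $T'$ with leaves in bijection with $V$. This is linear in $\abs{V(G_A)}$.

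Every edge $e'$ of $T'$ corresponds to a path of $T$. Pick any edge $e$ of $T$ on this path. Its bipartition $(Y,\overline Y)$ of $V(G_A)$ satisfies $Y\cap V=X$, where $(X,V\setminus X)$ is the bipartition of $V$ induced by $e'$. It then suffices to show
\[ \cutrk_A(X)\le 2^{\cutrk_{G_A}(Y)}+\cutrk_{G_A}(Y)-1. \]

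Call a star $W_i$ \emph{crossing} if $W_i$ meets both $Y$ and $\overline Y$, and let $C$ be the set of crossing indices.

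\textbf{Step 1: $\abs{C}\le \cutrk_{G_A}(Y)$.} For each $i\in C$ there is an edge of the star $W_i$ across the cut, namely between $i$ and some $v_i^\alpha$ on the other side. Restrict the binary adjacency matrix $A_{G_A}[Y,\overline Y]$ to the rows and columns given by these centres and leaves. A leaf $v_i^\alpha$ is adjacent to exactly one centre, namely $i$, and centres are adjacent to no other centres. Hence, taking for each $i\in C$ the pair $\{i,v_i^\alpha\}$ and ordering suitably, the submatrix indexed by centres on one side and the chosen leaves on the other is a permutation-like matrix. This needs care when centres lie on both sides. I would split $C$ into $C_1$ (centre in $Y$) and $C_2$ (centre in $\overline Y$), and use the submatrix with rows $(C_1\text{ centres})\cup(C_2\text{ leaves in }Y)$ and columns $(C_1\text{ leaves in }\overline Y)\cup(C_2\text{ centres})$. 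This matrix is block triangular with identity diagonal blocks, because centres only see their own leaves. So its rank is at least $\abs{C}$.

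\textbf{Step 2: bound the contribution of the remaining part.} Write $X=X_0\cup X_c$ with $X_c=X\cap C$, and similarly $V\setminus X=Z_0\cup Z_c$. The rank of $A[X,V\setminus X]$ is at most $\rk A[X_0,Z_0]+\abs{C}$, since rows or columns indexed by crossing stars contribute at most $\abs{C}$. For $i\in X_0$ and $j\in Z_0$, the whole star $W_i$ lies in $Y$ and $W_j$ lies in $\overline Y$. The entry $A_{ij}=\alpha\ne0$ exactly when $v_i^\alpha$ is adjacent to $v_j^\beta$ with $\beta=\sigma(\alpha)$ determined.

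So row $i$ of $A[X_0,Z_0]$ is determined by the rows of the binary matrix $N=A_{G_A}[Y,\overline Y]$ indexed by the leaves of $W_i$. More precisely, the row equals $\sum_\alpha \alpha\cdot\mathbf 1[\text{adjacency of }v_i^\alpha\text{ to }W_j]$, which reads off columns grouped by stars. Since $N$ has rank at most $\cutrk_{G_A}(Y)=:k'$, \zcref{lem:rank_and_distinct_rows} with $q=2$ shows $N$ has at most $2^{k'}-1$ distinct nonzero rows.

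\textbf{Main obstacle.} Distinct leaves of one star can share the same $N$-row only if they are both zero, because $v_i^\alpha$ and $v_i^\beta$ have disjoint neighbourhoods outside $W_i$. The row vectors for distinct $j$ can collide, however, and I must turn this into a rank bound on $A[X_0,Z_0]$. I would argue as follows. Group the nonzero leaf-rows of $N$ into at most $2^{k'}-1$ classes. For each class $t$, define the $\{0,1\}$-vector $u_t$ on $Z_0$ that is $1$ at $j$ exactly when some leaf of $W_j$ is a neighbour in class $t$. Each leaf $v_j^\beta$ on the $\overline Y$ side has its own $N$-column, so the value $\sigma^{-1}$-determined entry can be recovered per class. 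Every row of $A[X_0,Z_0]$ is then $\sum_t \alpha_t\, w_t$, where $w_t$ records the entry $A_{ij}$ forced by the class-$t$ leaf. Since the adjacency determines $\alpha$ through the label of the $\overline Y$-side leaf, the vector $w_t$ is a fixed vector in $\mathbb F^{Z_0}$ depending only on $t$. Hence $\rk A[X_0,Z_0]\le 2^{k'}-1$, and in total $\cutrk_A(X)\le 2^{k'}-1+k'$, as required.
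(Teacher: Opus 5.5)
Your proof is correct and follows the paper's argument. You pass to the minimal subtree spanning the centre leaves, bound the number of crossing stars by the cut-rank using the block-triangular identity submatrix, delete those rows and columns, and bound $\rk A[X_0,Z_0]$ by the at most $2^{k'}-1$ distinct non-zero rows of the binary cut matrix; the paper writes this last step as a sum of rank-one matrices built from the rectangles of \zcref{lem:matrix_rank_k}. One thing to tidy in your final formula: either set $w_t(j)=\sigma(\gamma)$ for the unique leaf $v_j^\gamma$ hit by class $t$ and use coefficients in $\{0,1\}$, or let $w_t$ be the $0/1$ indicator $u_t$ with coefficient the label $\alpha$ of $i$'s class-$t$ leaf, as the paper does. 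As written, $\sum_t \alpha_t w_t$ mixes the two, although the span bound holds either way.
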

\begin{proof}
    Let $B$ be the adjacency matrix of the entry graph $G_A$ of~$A$, let $k=\rw(G_A)$, and let $(T,\mathcal{L})$ be a rank-decomposition of $G_A$ witnessing $\rw(G_A)$.
    Note that $B$ is a matrix over the binary field~$\mathbb{F}_2$, not over $\mathbb{F}$.

    Let $T'$ be the minimal subtree of $T$ containing all leaves corresponding to $V$ by $\mathcal L$.
    Let~$\mathcal L'$ be the restriction of $\mathcal L$ to $V$.
    Then $(T',\mathcal L')$ is a rank-decomposition of $A$.
We claim that the width of $(T',\mathcal{L}')$ is at most $2^{k}+k-1$.
    Note that $T'$ can be found in time proportional to the number of nodes of $T$, which is $O(\abs{V(G_A)})$. 
    When we output the rank-decomposition of $A$, we can suppress all degree-$2$ nodes of $T'$ to reduce the size of the output.

    Let $e\in E(T')$.
    Let $(X_e,Y_e)$ be the partition of $V(G_A)$ 
    induced by $\mathcal L$ and the connected components of $T\setminus e$.
    Note that the width of $e$ in $(T',\mathcal{L}')$ is equal to $\rk(A[V\cap X_e,V\cap Y_e])$.
    Let us write $N(v)$ to denote the set of all neighbors of $v$ in $G_A$.
    Now let
    \[
        U_1= \{v\in V\cap X_e: N(v)\cap Y_e\neq\emptyset\} \text{ and }
        U_2= \{v\in V\cap Y_e: N(v)\cap X_e\neq\emptyset\}.
    \]

    As the vertices in $U_1$ are in $V$, for each $i \in U_1$ there exists $v_i^{\alpha} \in Y_e$ so that $i$ is adjacent to $v_i^{\alpha}$.
    In particular, there exists a set $U_1' \subseteq Y_e$, so that $\abs{U_1}=\abs{U_1'}$ and $\abs{N(v)\cap U_1'}=1$ for every $v\in U_1$.
    Similarly, there exists $U_2' \subseteq X_e$, so that $\abs{U_2}=\abs{U_2'}$ and $\abs{N(v)\cap U_2'}=1$ for every $v\in U_2$.
    All the four sets $U_1$, $U_2$, $U_1'$, and $U_2'$ are pairwise disjoint.

    Now,
    \[ 
        k\ge \cutrk_{G_A}(X_e) \ge
        \rk \,\,\bordermatrix{
        & U_2 & U_1'\cr
        U_1 & 0 & I \cr
        U_2'& I & *
        } = \abs{U_1}+\abs{U_2},
    \] 
    where $I$ is the identity matrix, $0$ is the zero matrix, and $*$ represents an arbitrary matrix.

    For $S\subseteq V(G_A)$, 
    let us write $N[S]$ to denote the set of all vertices in $S$ and all neighbors of a vertex in $S$ in $G_A$.
    Let
    \begin{align*}
        W_1&= (V \cap X_e)\setminus U_1, &
        W_2&= (V \cap Y_e)\setminus U_2, \\
        W_1'&= N[W_1],&
        W_2'&= N[W_2].
    \end{align*}
    By definition of $U_1$ and $U_2$, we know that $W_1'\subseteq X_e$ and $W_2'\subseteq Y_e$.
    Let $A'=A[W_1,W_2]$ and $B'=B[W_1',W_2']$.
    Since $A'$ can be obtained from $A[V\cap X_e,V\cap Y_e]$ by removing rows indexed by~$U_1$ and columns indexed by~$U_2$, we have
    \[ \rk(A[V\cap X_e,V\cap Y_e])\leq \rk(A')+\abs{U_1}+\abs{U_2}\le \rk(A')+k.\]
    Since $\rk(B')\leq \rk(B[X_e, Y_e])\leq k$, by using \zcref{lem:matrix_rank_k} we can find $Q_1,\ldots,Q_{\ell}\subseteq W_1'$ and $R_1,\ldots,R_{\ell}\subseteq W_2'$ for some $\ell\le 2^{k}-1$ such that $(Q_i\times R_i)\cap (Q_j\times R_j)=\emptyset$ for all distinct $i,j\le \ell $ and the $ab$-entry of~$B'$ is $1$ if and only if $(a,b)\in Q_i\times R_i$ for some $i\le \ell$.
    
    Now we describe how $A'$ can be obtained from~$B'$.
    First note that all rows or columns of~$B'$ corresponding to $W_1$ or $W_2$ are zero.
    For each $i\le\ell$, let $A_i$ be a $W_1\times W_2$ matrix over $\mathbb F$
    such that the $ab$-entry of $A_i$ is 
    \[ 
        \begin{cases}
        \beta &\text{if $v_a^\beta\in N(a)\cap Q_i$ and $N(b)\cap R_i\neq\emptyset$ for some $\beta$,}\\
        0 & \text{otherwise}.
        \end{cases}
    \] 
    Note that $\beta$ is uniquely chosen because $G_A$ has at most one edge between $N(a)$ and $N(b)$ for~$a\in W_1$ and $b\in W_2$.
    Clearly $A_i$ has rank at most $1$ and $A'=\sum_{i=1}^\ell A_i$.
    Therefore $\rk(A')\le \ell\le  2^k-1$.
    We deduce that $\rk(A[V\cap X_e,V\cap Y_e])\le (2^k-1)+k$. 
    This gives our desired inequality $\rw(A)\leq 2^k+k-1$.
\end{proof}

On the other hand, we prove that 
for a $\sigma$-symmetric matrix $A$
with at most $q-1$ distinct non-zero values of off-diagonal entries, 
if the rank-width of $A$ is $k$,
then the rank-width of~$G_A$ is 
at most $(q-1)(q^k-1)$. 
The key idea is that each row of $A$ across a cut gives rise to at most~$q-1$ rows of the binary adjacency matrix of $G_A$ across the corresponding cut, so bounding the number of distinct rows bounds the cut-rank.

\begin{proposition}[label=thm:matrix_to_graph]
    Let $\mathbb F$ be a field and 
    let $\sigma:\mathbb{F}\rightarrow \mathbb{F}$ be a sesqui-morphism.
    Let $V$ be a finite set and let $A$ be a $V\times V$ $\sigma$-symmetric matrix over $\mathbb{F}$ such that there are at most $q-1$ distinct non-zero values of off-diagonal entries in~$A$.
    The rank-width of the entry graph of~$A$ is at most
    \(  (q-1)(q^{\rw(A)}-1)\).
    Furthermore, 
    we can convert a rank-decomposition of~$A$ of width at most $k$ into  a rank-decomposition of the entry graph of~$A$ of width at most $(q-1)(q^{k}-1)$ in time at most $O(qn)$.
\end{proposition}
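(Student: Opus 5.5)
Given a rank-decomposition $(T,\mathcal L)$ of $A$ of width at most $k$, I will build a rank-decomposition of $G_A$ by replacing each leaf $\mathcal L(i)$ with a small subcubic tree whose leaves are the vertices of the star $W_i$. Concretely, I attach to the leaf of $i$ a caterpillar (or any subcubic tree) with leaves $W_i$, so that the edge joining it to the rest of $T$ separates exactly $W_i$. The new tree has $O(\abs{V(G_A)})$ nodes, and $\abs{V(G_A)}\le q\abs{V}=qn$, which gives the $O(qn)$ running time. The degenerate case $\abs{V}\le1$, or stars with a single vertex, needs only trivial bookkeeping.

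The widths split into two kinds of edges. For an edge inside the gadget of $W_i$, the cut separates a subset $S\subseteq W_i$ from the rest. The subgraph induced on $W_i$ is a star, and each leaf $v_i^\alpha$ has neighbours outside $W_i$. I would bound the cut-rank by a crude count. The matrix $B[S,V(G_A)\setminus S]$ has at most $\abs{W_i}\le q$ rows, and the rows of leaves in $S$ have supports outside $W_i$ only among $\{v_j^\beta\}$. A cleaner bound comes from the other side: $\cutrk(S)=\cutrk(V(G_A)\setminus S)$, and it is at most $\abs{S}\le q$. Since this should be at most $(q-1)(q^k-1)$, I need $k\ge1$ or must handle $k=0$ separately. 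When $k=0$, all rows of $A$ restricted off its own index vanish, so $G_A$ has no edges, and any decomposition has width $0$. For $k\ge1$ I would arrange the gadget so that every inner cut has at most $q-1$ vertices on one side (for example, cut off the center together with the leaves, one leaf at a time). Then $\cutrk\le q-1\le (q-1)(q^k-1)$.

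For an edge $e$ of the original $T$, with partition $(X,Y)$ of $V$, the corresponding cut of $G_A$ is $(\bigcup_{i\in X}W_i,\bigcup_{j\in Y}W_j)$. Star edges never cross this cut, so the only crossing edges are $v_i^\alpha v_j^\beta$ with $i\in X$, $j\in Y$, $\alpha=A_{ij}$, $\beta=A_{ji}$. The rows of the binary cut matrix are indexed by the $v_i^\alpha$ with $i\in X$. The row of $v_i^\alpha$ is the indicator of $\{v_j^{A_{ji}}: j\in Y,\ A_{ij}=\alpha\}$, and this is determined by the row vector $A[\{i\},Y]$ together with $\alpha$: it depends on the set $\{j\in Y:A_{ij}=\alpha\}$, and $A_{ji}=\sigma(A_{ij})=\sigma(\alpha)$ is fixed. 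Hence the number of distinct nonzero rows is at most $(q-1)$ times the number of distinct nonzero rows of $A[X,Y]$. By \zcref{lem:rank_and_distinct_rows}, $A[X,Y]$ has at most $q^{k}-1$ distinct nonzero rows, since its off-diagonal entries take at most $q-1$ nonzero values. The binary rank is at most the number of distinct nonzero rows, which gives the bound $(q-1)(q^k-1)$.

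The main point needing care is the claim in the previous paragraph that the adjacency row of $v_i^\alpha$ depends only on $(\alpha, A[\{i\},Y])$. This uses $\sigma$-symmetry to identify $v_j^\beta$ as $v_j^{\sigma(\alpha)}$, and it also needs a zero-row argument for the $v_i^\alpha$ that have no neighbour across the cut. The rest is bookkeeping about the gadget edges.
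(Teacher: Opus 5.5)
Your proposal is correct and is essentially the paper's proof. You attach a subcubic tree with leaf set $W_i$ at each leaf $\mathcal L(i)$ and bound each gadget cut by the size of its small side. For cuts coming from $T$, you show that each distinct nonzero row of $A[X,Y]$ yields at most $q-1$ distinct nonzero rows of the binary cut matrix (using $A_{ji}=\sigma(A_{ij})$), and then apply \zcref{lem:rank_and_distinct_rows}. No special arrangement of the gadget is needed: every gadget edge cuts off a proper subset of $W_i$ and $\abs{W_i}\le q$, so any gadget gives width at most $q-1\le (q-1)(q^k-1)$ once $k\ge 1$.
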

\begin{proof}
    If every off-diagonal entry of $A$ is zero, then the entry graph $G_A$ of~$A$ is edgeless and the rank-width of $G_A$ is $0$. So we may assume that  there is at least one non-zero off-diagonal entry and therefore the rank-width of~$A$ is at least $1$ and $\abs{V}\ge 2$. 
    We may assume that every row has non-zero off-diagonal entries because otherwise the corresponding vertex in $G_A$ is isolated and so we can simply delete its row and its corresponding column.
    Let $k=\rw(A)\ge 1$ and let $(T,\mathcal{L})$ be a rank-decomposition of~$A$ of width~$k$.
Since every row has non-zero off-diagonal entries, we have $\abs{W_i}\ge 2$.

    We will construct a rank-decomposition $(T',\mathcal{L}')$ of $G_A$ from $(T,\mathcal{L})$ as follows:
    For each $i\in V$, let $T_i$ be a rooted binary tree whose leaves correspond to the elements of $W_i$.
    Let~$T'$ be a tree obtained from the disjoint union of~$T$ and all~$T_i$'s with $i\in V$ 
    by identifying $\mathcal{L}(i)$ with the root of~$T_i$ for each $i\in V$.
    Let $\mathcal{L}':V(G_A)\rightarrow \{\text{leaves of }T'\}$ 
    be a function such that for each~$i\in V$, 
    $\mathcal L'(i)$ and $\mathcal L'(v_i^\alpha)$ for all $\alpha$
    are in one-to-one correspondence with the set of leaves of~$T_i$. 
    We claim that the width of $(T',\mathcal{L}')$ is at most $(q-1)(q^{k}-1)$.

    Let $e\in E(T')$ and let $(X_e,Y_e)$ be the partition of $V(G_A)$ 
    induced by $\mathcal L'^{-1}$ from the components of $T'\setminus e$.
    If $X_e$ is a proper subset of $W_i$ for some~$i$, 
    then $\rk(B[X_e,Y_e])\le \abs{X_e}< q$, so the width of $e$ is at most $q-1$.
    Thus we may assume that neither $X_e$ nor $Y_e$ is properly contained in~$W_i$ for any $i\in V$
    and therefore $e\in E(T)$. It follows that 
    each of $X_e$ and $Y_e$ is the disjoint union of $W_i$'s.

    Let $U_1=V\cap X_e$ and $U_2=V\cap Y_e$.
    Let $B$ be the adjacency matrix of~$G_A$ over the binary field.
    Then for distinct $i,i'\in V$ and non-zero $\alpha \in\mathbb F$, the $(v_i^\alpha ,v_{i'}^{\sigma(\alpha)})$ entry of $B$ is $1$ if and only if 
    $A_{ii'}=\alpha$.
    So 
    \[ 
    B[W_i,W_{i'}]=\begin{cases}
    \text{$0$-$1$ matrix with exactly one $1$ at the $(v_i^\alpha,v_{i'}^{\sigma(\alpha)})$ entry }
    & 
    \text{if }A_{ii'}=\alpha,\\ 
    \text{zero matrix} & \text{if }A_{ii'}=0.
    \end{cases}
    \] 

    Each row of $A[U_1,U_2]$ corresponding to $i\in U_1$ corresponds to a submatrix $B[W_i,Y_e]$ of~$B[X_e,Y_e]$.
    Since the row of $B[W_i,Y_e]$ indexed by $i$ is zero, at most $q-1$ rows of $B[W_i,Y_e]$ are non-zero.
    
    Furthermore, if two rows of $A[U_1,U_2]$ corresponding to $i,i'\in U_1$ are identical, then $B[W_i,W_j]=B[W_{i'},W_j]$ for each $j\in U_2$, so the set of non-zero rows of $B[W_i,Y_e]$ is the same as the set of non-zero rows of $B[W_{i'},Y_e]$.
Since $\rk (A[U_1,U_2])\le \rw(A)\le k$, $A[U_1,U_2]$ has at most $q^{k}-1$ distinct non-zero rows and therefore we deduce that $B[X_e,Y_e]$ has at most $(q-1)(q^k-1)$ distinct non-zero rows.
    Therefore $\rk(B[X_e,Y_e])\le (q-1)(q^k-1)$, and so the width of $e$ is at most $(q-1)(q^k-1)$.

    Thus the width of $(T',\mathcal{L}')$ is at most $\max\{q-1,(q-1)(q^k-1)\}=(q-1)(q^k-1)$.
    Hence, we conclude that $\rw(G_A)\leq (q-1)(q^k-1)$.
\end{proof}

We remark that Courcelle and Engelfriet~\cite[page 438]{CE2012} presented a transformation that maps an edge-labelled directed graph 
to a simple graph 
with the property that a class of edge-labelled directed graphs has bounded clique-width if and only if the class of its images has bounded clique-width.
As clique-width and rank-width are functionally equivalent~\cite{OS2004,KR2013} and $\sigma$-symmetric matrices can be encoded as edge-labelled directed graphs, their transformation can be seen as an analogue of our entry graph, satisfying qualitative analogues of \zcref{thm:graph_to_matrix,thm:matrix_to_graph}.
The main difference between our entry graph and their transformation is that their transformation converts each vertex into a path, while ours converts each vertex into a star.
More importantly, their proof provides no explicit bounds, as it relies on the existence of a monadic second-order transduction, whereas our propositions give explicit bounds together with efficient translations between the decompositions --- which is what the algorithmic application requires.

By using \zcref{thm:graph_to_matrix,thm:matrix_to_graph} and applying the algorithm of Fomin and Korhonen from \zcref{thm:graph_rankwidth_approximation}, we derive the following algorithm for finding a rank-decomposition of bounded width for a $\sigma$-symmetric matrix of small rank-width with a bounded number of distinct non-zero values of off-diagonal entries.

\begin{proposition}[label=thm:approximating_matrix_rankwidth,store=thm:approximating_matrix_rankwidth]
    Let $n\ge2$.
    Let $k$ be an integer and let $A$ be an $n\times n$ $\sigma$-symmetric matrix over a field~$\mathbb{F}$ such that there are at most $q-1$ distinct non-zero values of off-diagonal entries in $A$.
    Then in time $O_{k,q}(n^2)$ we can either find a rank-decomposition of $A$ of width at most $2^{(q-1)(q^k-1)} +  (q-1)(q^k-1)-1$ 
or correctly conclude that the rank-width of $A$ is more than $k$.
\end{proposition}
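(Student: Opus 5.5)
We first build the entry graph $G_A$ of $A$. Each vertex $i\in V$ yields a star $W_i$ with at most $q$ vertices, so $G_A$ has $N\le qn$ vertices. To construct it, we scan each row once and record the distinct non-zero off-diagonal values $\alpha$ occurring in row $i$, which creates the leaves $v_i^\alpha$. Then, for every ordered pair $i\neq j$ with $A_{ij}\neq0$, we add the edge $v_i^{A_{ij}}v_j^{A_{ji}}$. By $\sigma$-symmetry, $A_{ji}=\sigma(A_{ij})$ is non-zero exactly when $A_{ij}$ is, so this edge is well defined. Using a lookup table of size $O(q)$ per row to locate $v_i^\alpha$, the construction takes $O(qn^2)=O_{q}(n^2)$ time.

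Next, set $k'=(q-1)(q^k-1)$ and run the algorithm of \zcref{thm:graph_rankwidth_approximation} on $G_A$ with parameter $k'$. This takes time $O_{k'}(N^2)=O_{k,q}(n^2)$; if $N\le1$, then $G_A$ is trivial and this step can be skipped. The algorithm has two possible outcomes.

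If it reports $\rw(G_A)>k'$, then the contrapositive of \zcref{thm:matrix_to_graph} gives $(q-1)(q^{\rw(A)}-1)\ge \rw(G_A)>(q-1)(q^k-1)$. The function $x\mapsto (q-1)(q^x-1)$ is increasing, so $\rw(A)>k$ and we output this conclusion. The degenerate case $q=1$ needs a separate remark. Then $A$ has no non-zero off-diagonal entries, $G_A$ is edgeless, and $\rw(A)=0\le k$, so the algorithm always returns a decomposition and this branch never arises.

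Otherwise the algorithm returns a rank-decomposition of $G_A$ of width at most $k'$. By \zcref{thm:graph_to_matrix}, applied with this decomposition of width $w\le k'$, we convert it in $O(N)$ time into a rank-decomposition of $A$ of width at most $2^{w}+w-1\le 2^{k'}+k'-1$. This is the claimed bound.

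There is one subtlety here. \zcref{thm:graph_to_matrix} is stated with $k=\rw(G_A)$, but its proof only uses the width of the given decomposition. Since $2^w+w-1$ is monotone in $w$, applying it with $w\le k'$ is valid. The other point needing care is the running-time accounting for building $G_A$, which is handled by the per-row table above.
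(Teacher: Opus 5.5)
Your proposal is correct and matches the paper's proof: build $G_A$ in $O(qn^2)$ time, run the Fomin--Korhonen algorithm with parameter $(q-1)(q^k-1)$, and translate either outcome back using \zcref{thm:graph_to_matrix} and \zcref{thm:matrix_to_graph}. Your extra remarks are sound refinements that the paper leaves implicit: the degenerate case $q=1$, strict monotonicity for $q\ge2$, and the fact that \zcref{thm:graph_to_matrix} applies to any decomposition of width at most $k$ (its ``furthermore'' clause already says this). The aside about $N\le 1$ is vacuous, since $N\ge n\ge 2$.
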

\begin{proof}
    For a given $\sigma$-symmetric matrix $A$, we construct the entry graph~$G_A$ of $A$ in time $O(qn^2)$.
    Since $\abs{V(G_A)}\le qn$, 
    by \zcref{thm:graph_rankwidth_approximation} we can either find a rank-decomposition of $G_A$ of width at most $(q-1)(q^{k}-1)$, or conclude that the rank-width of $G_A$ is more than $(q-1)(q^{k}-1)$, in time $O_{k,q}(n^2)$.
    If we find a rank-decomposition of $G_A$, then by \zcref{thm:graph_to_matrix}, we can convert it to a rank-decomposition of $A$ of width at most $2^{(q-1)(q^k-1)} +  (q-1)(q^k-1)-1$ in time $O(qn)$.
    On the other hand, if the rank-width of $G_A$ is more than $(q-1)(q^k-1)$, we conclude that the rank-width of $A$ is more than~$k$ by \zcref{thm:matrix_to_graph}.
\end{proof}

\zcref{thm:approximating_matrix_rankwidth} allows us to obtain the following for matroids given with standard representations.

\begin{theorem}[label=thm:standard_represented_matroid_approx, store=thm:standard_represented_matroid_approx]
    Let $n\ge2$.
    If an $n$-element matroid $M$ is given with its standard representation with at most $q-1$ distinct non-zero values of entries, then for an integer~$k$, 
    in time $O_{k,q}(n^2)$, 
    we can
    either find a branch-decomposition of~$M$ of width at most $2^{(q-1)(q^k-1)}+(q-1)(q^k-1)-1$ or correctly conclude that the branch-width of~$M$ is more than~$k$.
\end{theorem}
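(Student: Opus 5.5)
The plan is to reduce directly to \zcref{thm:approximating_matrix_rankwidth} via \zcref{lem:matroid_to_symmetric_matrix}. Let the input be a standard representation $(I_r\mid A_0)$ of $M$ on ground set $E$ with $\abs{E}=n$. We form the $E\times E$ symmetric matrix
\[A'=\begin{pmatrix}0&A_0\\A_0^T&0\end{pmatrix}\]
in time $O(n^2)$. Since $A'$ is symmetric, it is $\sigma$-symmetric with $\sigma$ the identity, which is trivially a sesqui-morphism.

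Next we check the hypothesis on entries. Every non-zero off-diagonal entry of $A'$ is an entry of $A_0$, and hence an entry of the standard representation. So $A'$ has at most $q-1$ distinct non-zero values of off-diagonal entries. Note that the identity part contributes only the value $1$, which is already counted among the at most $q-1$ non-zero values of the representation, so no extra value is introduced.

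We then apply \zcref{thm:approximating_matrix_rankwidth} to $A'$ with parameters $k$ and $q$; this takes time $O_{k,q}(n^2)$ because $A'$ is $n\times n$ and $n\ge 2$. There are two outcomes.
\begin{itemize}
\item If it concludes that $\rw(A')>k$, then $\bw(M)=\rw(A')>k$ by \zcref{lem:matroid_to_symmetric_matrix}.
\item If it returns a rank-decomposition $(T,\mathcal L)$ of $A'$ of width at most $2^{(q-1)(q^k-1)}+(q-1)(q^k-1)-1$, we use it verbatim as a branch-decomposition of $M$. Both are branch-decompositions on the same ground set $E$, and $\lambda_M(X)=\cutrk_{A'}(X)$ for every $X\subseteq E$, so every edge has the same width in both.
\end{itemize}

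The only step requiring any care is confirming that the hypotheses of \zcref{thm:approximating_matrix_rankwidth} hold, namely the count of distinct non-zero off-diagonal values and the indexing of $A'$ by $E$ so that cuts correspond exactly. There is no real obstacle; the argument is bookkeeping on top of the earlier results.
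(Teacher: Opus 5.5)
Your proposal is correct and follows the paper's proof: form the symmetric matrix $A'=\left(\begin{smallmatrix}0&A_0\\A_0^T&0\end{smallmatrix}\right)$ via \zcref{lem:matroid_to_symmetric_matrix}, observe that it has at most $q-1$ distinct non-zero values, apply \zcref{thm:approximating_matrix_rankwidth}, and read the resulting rank-decomposition directly as a branch-decomposition of $M$, since $\lambda_M=\cutrk_{A'}$. Your aside about the identity block is unnecessary, because $A'$ contains only zeros and entries of $A_0$, but it does no harm.
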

\begin{proof}
    By \zcref{lem:matroid_to_symmetric_matrix}, in time $O(n^2)$, we can find a symmetric matrix $A$ such that the rank-width of~$A$ is equal to the branch-width of~$M$ and the number of distinct non-zero values of entries of~$A$ is at most~$q-1$.
    Then we apply the algorithm of \zcref{thm:approximating_matrix_rankwidth} to $A$.
\end{proof}

If a matroid is given with an $m\times n$ matrix which may not be a standard matrix representation, then 
it takes time $O(mn \min(m,n)^{\omega-2})$ to convert it to a standard representation by using \zcref{lem:subdet}.
To limit the number of distinct non-zero values of entries in the standard representation, 
we will limit the number of non-zero subdeterminants.
Thus we deduce the following from \zcref{thm:standard_represented_matroid_approx}.

\begin{corollary}[label=thm:branch-decomposition_subdeterminant, store=thm:branch-decomposition_subdeterminant]
    Let $n\ge2$.
    Let $M$ be a matroid represented by an $m\times n$ matrix~$A$ over a field $\mathbb F$.
    Let~$K$ be a finite subset of $\mathbb F$ such that $-K=K$. 
    Let $q=\abs{K}+1$.
    If every non-zero subdeterminant of~$A$ belongs to~$K$, then for each integer~$k$, in time $O(mn \min(m,n)^{\omega-2}) + O_{k,q}(n^2)$, 
    we can either find a branch-decomposition of~$M$ of width at most $2^{(q-1)(q^k-1)}+(q-1)(q^k-1)-1$ or correctly conclude that the branch-width of $M$ is more than~$k$, where $\omega$ is the matrix multiplication exponent.
\end{corollary}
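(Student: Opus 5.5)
The plan is to reduce directly to \zcref{thm:standard_represented_matroid_approx}, using \zcref{lem:subdet} to pass from the general representation $A$ to a standard one. First apply \zcref{lem:subdet} to $A$ with the given set $K$, which satisfies $-K=K$. Since every non-zero subdeterminant of $A$ lies in $K$, in time $O(mn\min(m,n)^{\omega-2})$ we obtain a non-zero $\alpha\in\mathbb F$ and a standard representation $A''$ of $M$. All non-zero entries of $A''$ lie in $\alpha K$.

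Next we count distinct values. Since $0\notin K$ as a non-zero subdeterminant set (we may discard $0$ from $K$ if present, which only decreases $q$), the set $\alpha K$ has at most $\abs{K}=q-1$ non-zero elements. So $A''$ has at most $q-1$ distinct non-zero values of entries, which is exactly the hypothesis of \zcref{thm:standard_represented_matroid_approx}. One small point is whether $K$ might contain $0$. If it does, then $\alpha K\setminus\{0\}$ has at most $\abs{K}-1\le q-1$ elements, so the bound holds in either case.

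Then apply \zcref{thm:standard_represented_matroid_approx} to $A''$. The matroid $M$ has $n$ elements and $n\ge 2$. In time $O_{k,q}(n^2)$ this either outputs a branch-decomposition of $M$ of width at most $2^{(q-1)(q^k-1)}+(q-1)(q^k-1)-1$, or correctly concludes $\bw(M)>k$. Since $A''$ represents the same matroid $M$, both outcomes are valid statements about $M$, and adding the two running times gives the claimed bound.

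There is no real obstacle here. The only step needing care is making sure the count of distinct non-zero entries after scaling by $\alpha$ stays at most $q-1$, including the entries $0$ and $1$ coming from the identity part. This is handled by the fact, shown in the proof of \zcref{lem:subdet}, that $1\in\alpha K$.
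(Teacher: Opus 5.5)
Your proposal is correct and is the same argument the paper uses: apply \zcref{lem:subdet} to get, in time $O(mn\min(m,n)^{\omega-2})$, a standard representation whose non-zero entries lie in $\alpha K$, so it has at most $\abs{K}=q-1$ distinct non-zero values, and then invoke \zcref{thm:standard_represented_matroid_approx}. Your aside about $0\in K$ is harmless but unnecessary, since $\abs{\alpha K\setminus\{0\}}\le\abs{K}$ always. Likewise the identity entries need no separate check, because the lemma's statement already places every non-zero entry, including each $1$, in $\alpha K$.
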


\section{Finding a rank-decomposition of width at most \texorpdfstring{$k$}{k}}\label{sec:exact}

If the underlying field $\mathbb F$ is finite, we can further improve our algorithm by combining with the dynamic programming algorithm of 
Jeong, Kim, and Oum~\cite{JKO2019}. They proved that one can either find a rank-decomposition of a $\sigma$-symmetric $n\times n$ matrix over $\mathbb F$ of width at most $k$ or confirm that the rank-width is more than $k$ in time $O_{k,\mathbb F}(n^3)$.
They first convert the problem into a problem on the branch-width of \emph{subspace arrangements}
and use the dynamic programming based on the tree structure from a branch-decomposition. For the dynamic programming, they needed an approximate branch-decomposition and they use the technique called the iterative compression to build an approximate branch-decomposition of the input instance. 

We will show that \zcref{thm:approximating_matrix_rankwidth} allows us to skip this iterative compression and provides an approximate branch-decomposition in time $O_{k,\mathbb F}(n^2)$. 
This allows us to obtain a faster $O_{k,\mathbb F}(n^2)$-time algorithm for the problem of finding a branch-decomposition of width at most~$k$ for a matroid with a standard representation over a finite field $\mathbb{F}$.

First, we are going to discuss how we can transform the problem of finding a rank-de\-com\-po\-si\-tion of a $\sigma$-symmetric matrix into the problem of finding a branch-decomposition of a subspace arrangement.

Let $A$ be an $n\times n$ $\sigma$-symmetric matrix over a field $\mathbb{F}$.
Let $v_i$ be the $i$-th column vector of~$A$
and let $\{e_1,e_2,\ldots,e_n\}$ be the standard basis of $\mathbb F^n$. 
Let $V_i$ be the subspace of $\mathbb F^n$ that is spanned by~$e_i$ and~$v_i$ for each $i\in[n]$.
Let $\mathcal V_A$ be the subspace arrangement $\{V_1,V_2,\ldots,V_n\}$.
Note that constructing $\mathcal V_A$ from $A$ can be done in time $O(n^2)$.
The following lemma is due to Kant\'e and Rao~\cite{KR2013}.

\begin{lemma}[label=lem:matrix_to_subspace,note={Kant\'e and Rao~\cite[Corollary 3.34]{KR2013}}]
    Let $A$ be an $n\times n$ $\sigma$-symmetric matrix over a field $\mathbb{F}$.
    Let $f$ be the connectivity function of the subspace arrangement $\mathcal V_A$.
    Then for every subset~$X$ of~$[n]$, 
    $f(X)= 2\cutrk_A(X)$.
    Therefore $\bw(\mathcal{V}_A)=2\rw(A)$.
\end{lemma}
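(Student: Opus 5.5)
We need to prove that $f(X)=2\cutrk_A(X)$ for every $X\subseteq[n]$, where $f$ is the connectivity function of $\mathcal V_A$; the equality $\bw(\mathcal V_A)=2\rw(A)$ then follows immediately, since both branch-widths are minima over the same set of branch-decompositions.

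The plan is to use the formula
\[ f(X)=\dim\Big(\sum_{i\in X}V_i\Big)+\dim\Big(\sum_{j\notin X}V_j\Big)-\dim\Big(\sum_{i=1}^n V_i\Big). \]
Write $Y=[n]\setminus X$. Since $e_1,\dots,e_n$ all lie in $\sum_i V_i$, the total space is $\mathbb F^n$ and has dimension $n$. For $\sum_{i\in X}V_i$, which is spanned by $\{e_i:i\in X\}$ and $\{v_i:i\in X\}$, quotient out the span of $\{e_i:i\in X\}$. What remains is the span of the projections of $v_i$ ($i\in X$) onto the $Y$-coordinates, namely the column space of $A[Y,X]$. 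Hence
\[ \dim\Big(\sum_{i\in X}V_i\Big)=\abs{X}+\rk(A[Y,X]), \]
and symmetrically $\dim(\sum_{j\in Y}V_j)=\abs{Y}+\rk(A[X,Y])$. Therefore $f(X)=\rk(A[Y,X])+\rk(A[X,Y])$.

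It remains to check that $\rk(A[Y,X])=\rk(A[X,Y])$. By $\sigma$-symmetry, $A[Y,X]$ is the transpose of the matrix obtained from $A[X,Y]$ by applying $\sigma$ entrywise. The paper already notes that applying $\sigma$ entrywise preserves rank. Briefly, $\sigma(x)=\sigma(1)\tau(x)$ for an automorphism $\tau$; field automorphisms preserve linear dependence, and the nonzero scalar $\sigma(1)$ does not change rank. Transposition preserves rank as well. So $f(X)=2\rk(A[X,Y])=2\cutrk_A(X)$.

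The only delicate points are the quotient step, which relies on the coordinate vectors $e_i$ for $i\in X$ spanning exactly the $X$-coordinates, and the invariance of rank under $\sigma$. Neither is a serious obstacle.
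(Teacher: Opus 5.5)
Your proof is correct and complete. The quotient step gives $\dim\sum_{i\in X}V_i=\abs{X}+\rk(A[Y,X])$, hence $f(X)=\rk(A[Y,X])+\rk(A[X,Y])$, and the invariance of rank under entrywise $\sigma$ and transposition then gives $f(X)=2\cutrk_A(X)$. The paper does not prove this lemma itself but cites Kant\'e and Rao; your argument is the standard one, essentially the same block-rank computation the paper carries out in its proof of \zcref{lem:matroid_to_symmetric_matrix}.
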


For dynamic programming on a branch-decomposition of a subspace arrangement, it is convenient to have a coordinate system for describing the \emph{boundary space}
$( (\sum_{i\in X}V_i)\cap (\sum_{j\in [n]\setminus X}V_j))$. 
Jeong, Kim, and Oum~\cite{JKO2019} defined the \emph{transcript}, consisting of ordered bases of boundary spaces of every cut appearing in the branch-decomposition. For two consecutive edges in a branch-decomposition, 
one also needs a linear transformation to describe how to change coordinates from one boundary space to the other boundary space. Such a linear transformation is described by \emph{transition matrices}. 
The exact definitions of a transcript and their transition matrices appear in \cite[Subsection 3.5]{JKO2019}.
For our application, it is enough to understand that we can compute these objects as follows. 

\begin{theorem}[label=thm:transcript2,note={Jeong, Kim, and Oum~\cite[Theorem 7.8]{JKO2019}}]
Let $\mathbb F$ be a finite field and $\mathcal V=\{V_1,V_2,\ldots,V_n\}$ a subspace arrangement of subspaces of $\mathbb F^r$ 
    represented by an $r\times m$ matrix $M$ in reduced row echelon form with no zero rows and a partition $\{I_1,I_2,\ldots,I_n\}$  of the column indices such that $V_i$ is the span of the column vectors in~$I_i$ and 
    $\abs{I_i}\le \theta$ for each $i\in\{1,2,\ldots,n\}$.
    Given a rooted branch-decomposition $(T,\mathcal L)$ of $\mathcal V$ of width $\le \theta$, in time $O_{\mathbb F}(\theta^3 n^2)$, we can compute the transcript $\Lambda$ of $(T,\mathcal L)$ with its transition matrices.
\end{theorem}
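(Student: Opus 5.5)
The plan is a two-pass computation over the rooted branch-decomposition $(T,\mathcal L)$. For every edge $e$, let $X_e$ be the set of indices below $e$. Write $V_X=\sum_{i\in X}V_i$, and let $B_e=V_{X_e}\cap V_{[n]\setminus X_e}$ be the boundary space of $e$, which has dimension at most $\theta$. The outputs are an ordered basis of each $B_e$ (the transcript) and, for each node, the matrices expressing the chosen basis of the parent edge in terms of the bases of the two child edges (the transition matrices). Since $M$ has no zero rows and $r\le m\le\theta n$, every vector has length $O(\theta n)$. A linear-algebra step on $O(\theta)$ such vectors therefore costs $O(\theta^3 n)$, and it suffices to spend that much per node of $T$. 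The tree has $O(n)$ nodes, so this gives $O_{\mathbb F}(\theta^3n^2)$ in total.

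The structural fact driving the recursion is the following. Let $e$ be the parent edge of a node whose child edges $c_1,c_2$ have $X_e=X_{c_1}\cup X_{c_2}$. Then
\[
B_e\subseteq B_{c_1}+B_{c_2}.
\]
To see this, take $w\in B_e$ and write $w=x_1+x_2$ with $x_j\in V_{X_{c_j}}$. Then $x_1=w-x_2$, where $w\in V_{[n]\setminus X_e}$ and $x_2\in V_{X_{c_2}}$, and both of these lie in $V_{[n]\setminus X_{c_1}}$. Hence $x_1\in B_{c_1}$, and symmetrically $x_2\in B_{c_2}$. It follows that
\[
B_e=(B_{c_1}+B_{c_2})\cap V_{[n]\setminus X_e},
\]
which is a subspace of a space spanned by at most $2\theta$ known vectors. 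Once $B_e$ is found inside $B_{c_1}+B_{c_2}$, the transition matrices come for free: Gaussian elimination on the at most $3\theta$ vectors expresses the chosen basis of $B_e$ in the bases of $B_{c_1}$ and $B_{c_2}$. The leaf edges are the base case. For a leaf $i$, the space $B_e=V_i\cap V_{[n]\setminus\{i\}}$ has at most $\theta$ generating columns, taken from $I_i$.

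The main obstacle is computing the intersection with $V_{[n]\setminus X_e}$ cheaply, because that space can have dimension close to $r$. My first attempt is a top-down pass that maintains, for each edge, a set of at most $O(\theta)$ linear functionals. These functionals would have to vanish on $V_{[n]\setminus X_e}$ and, restricted to $B_{c_1}+B_{c_2}$, cut out exactly $B_e$. Two facts make this plausible. First, $\dim B_e$ is known in advance, since it equals $\lambda(X_e)$, and the decomposition has width at most $\theta$. Second, the functionals for an edge can be updated from those of its parent edge together with the at most $\theta$ columns of the sibling subtree's boundary. The reduced row echelon form of $M$ is what makes these functionals cheap to write down and evaluate: each coordinate functional restricted to pivot columns is trivial. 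I would first verify the complementary inclusion $B_{e'}\subseteq B_e+B_{\mathrm{sib}}$ for a child edge $e'$ with sibling edge $\mathrm{sib}$ and parent edge $e$, which is the top-down analogue of the inclusion above. That inclusion is what would let the top-down information about $V_{[n]\setminus X_e}$ be compressed to $O(\theta)$ vectors per edge.

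If this works, each node costs $O(\theta^2 r)=O(\theta^3 n)$ operations in each pass, which gives the claimed bound. The correctness of the transcript reduces to the two inclusions together with the dimension count $\dim B_e=\lambda(X_e)$.
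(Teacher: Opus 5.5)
The paper does not reprove this theorem; it imports it from Jeong, Kim, and Oum. So your argument has to stand on its own, and its central step is missing.

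What you prove is correct: $B_e\subseteq B_{c_1}+B_{c_2}$, hence $B_e=(B_{c_1}+B_{c_2})\cap V_{[n]\setminus X_e}$, together with the budget of $O(\theta^2 r)$ per node using $r\le m\le\theta n$. But the whole content of the theorem is how to intersect with $V_{[n]\setminus X_e}$, whose dimension can be of order $r$, and for that you only offer a plausibility sketch. Even your base case needs this step: the columns in $I_i$ span $V_i$, not $B_e=V_i\cap V_{[n]\setminus\{i\}}$.

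The sketched top-down pass cannot work as described, for two reasons.
\begin{itemize}
\item Let $s$ be the sibling edge. Functionals vanishing on $V_{[n]\setminus X_{e'}}=V_{[n]\setminus X_e}+V_{X_s}$ must vanish on all of $V_{X_s}$. That space is not determined by the at most $\theta$ vectors spanning $B_s$.
\item $O(\theta)$ functionals can cut out $B_e$ only inside the particular space $B_{c_1}+B_{c_2}$; inside $V_{X_e}$ one needs $\dim V_{X_e}-\lambda(X_e)$ of them. So the top-down data would depend on the bottom-up output it is meant to produce.
\end{itemize}
The complementary inclusion you plan to verify is true. Combined with the first, however, it only gives $B_{e'}=(B_{c_1'}+B_{c_2'})\cap(B_e+B_s)$, which is circular.

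The missing idea is to use the pivots of $M$. They let you summarize each large space by a set of coordinates plus at most $\theta$ vectors. For $X\subseteq[n]$, write $\overline{X}=[n]\setminus X$ and $I_X=\bigcup_{i\in X}I_i$. Let $P_X$ be the set of rows whose pivot column lies in $I_X$, and let $\pi_X$ zero out all coordinates outside $P_X$. Since $M$ has no zero rows, $P_X\sqcup P_{\overline{X}}=[r]$.

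The pivot columns are unit vectors, so $V_X$ contains every coordinate vector indexed by $P_X$. Hence $V_X=\mathbb F^{P_X}\oplus\pi_{\overline{X}}(V_X)$, and symmetrically for $V_{\overline{X}}$. Therefore
\[
B_X=\pi_X(V_{\overline{X}})\oplus\pi_{\overline{X}}(V_X).
\]
Here $\pi_{\overline{X}}(V_X)$ is the column space of $M[P_{\overline{X}},I_X]$ padded with zeros, and each summand has dimension at most $\lambda(X)\le\theta$. This is the splitting behind the rank count in the proof of \zcref{lem:matroid_to_symmetric_matrix}.

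Both summands obey one-directional recursions. Suppose a node has $X=X_1\sqcup X_2$.
\begin{itemize}
\item Bottom-up: $\pi_{\overline{X}}(V_X)$ is spanned by the images under $\pi_{\overline{X}}$ of bases of $\pi_{\overline{X_1}}(V_{X_1})$ and $\pi_{\overline{X_2}}(V_{X_2})$. At a leaf, start from the at most $\theta$ columns of $M[P_{[n]\setminus\{i\}},I_i]$.
\item Top-down: $\pi_{X_1}(V_{\overline{X_1}})$ is spanned by the images under $\pi_{X_1}$ of bases of $\pi_X(V_{\overline{X}})$ and $\pi_{\overline{X_2}}(V_{X_2})$. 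Start from $0$ at the root.
\end{itemize}
Each step is elimination on $O(\theta)$ vectors of length at most $r$, which gives your $O(\theta^2rn)=O(\theta^3n^2)$ bound. Only after this does your inclusion $B_e\subseteq B_{c_1}+B_{c_2}$ yield the transition matrices.
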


We observe that 
for a $\sigma$-symmetric matrix $A$,
the representing matrix $(I\mid A)$ for the subspace arrangement $\mathcal V_A$ 
is already in reduced row echelon form, so by using \zcref{thm:transcript2} we obtain the following corollary.

\begin{corollary}[label=thm:transcript]
    Let $A$ be an $n\times n$ $\sigma$-symmetric matrix over a finite field~$\mathbb F$.
    Given a rooted branch-decomposition of~$\mathcal V_A$ of width at most $\theta$, 
    we can compute the transcript~$\Lambda$ with its transition matrices in time $O_{\mathbb F}(\theta^3 n^2)$.
\end{corollary}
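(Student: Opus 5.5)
The plan is to apply \zcref{thm:transcript2} directly to the subspace arrangement $\mathcal V_A$, represented by the $n\times 2n$ matrix $M=(I_n\mid A)$. The columns are indexed by $[2n]$, and the partition uses the parts $I_i=\{i,\,n+i\}$. With this choice, the span of the columns in $I_i$ is exactly $\operatorname{span}(e_i,v_i)=V_i$. Each part satisfies $\abs{I_i}=2$, so the hypothesis $\abs{I_i}\le\theta$ holds whenever $\theta\ge 2$.

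First I check the hypotheses on $M$. It is in reduced row echelon form: the first $n$ columns form the identity, so the pivot of row $i$ sits in column $i$. Each pivot equals $1$, every other entry in a pivot column is $0$, and the pivot positions move strictly to the right going down. The entries of $A$ lie in non-pivot columns, where reduced row echelon form imposes no constraint. $M$ also has no zero rows, since each row contains a pivot $1$.

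Next I handle small widths. The theorem needs $\theta\ge\abs{I_i}=2$, while our $\theta$ may be $0$ or $1$. The decomposition has width at most $\theta$ for any $\theta'\ge\theta$, so I invoke the theorem with $\theta'=\max(\theta,2)$. The resulting running time is $O_{\mathbb F}(\max(\theta,2)^3n^2)$, which is $O_{\mathbb F}(\theta^3n^2)$ for $\theta\ge1$. The case $\theta=0$ needs a brief separate remark: every boundary space is then trivial, so the transcript is trivial and can be written down in $O(n)$ time.

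The main point needing care is that \zcref{thm:transcript2} is stated for an $r\times m$ matrix, and here $r=n$ and $m=2n$. The running time there is expressed in terms of $n$, the number of subspaces, and $\theta$. That matches our setting, so the bound transfers without change. Writing out $M$ costs only $O(n^2)$ time, so the total running time is $O_{\mathbb F}(\theta^3n^2)$, as claimed.
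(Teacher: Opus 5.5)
Your proposal is correct and follows the paper's argument. Like the paper, you observe that $(I_n\mid A)$ with parts $I_i=\{i,n+i\}$ represents $\mathcal V_A$ and is in reduced row echelon form with no zero rows, and then invoke \zcref{thm:transcript2}. You also replace $\theta$ by $\max(\theta,2)$ to meet the hypothesis $\abs{I_i}\le\theta$ and treat $\theta=0$ separately; the paper leaves this detail implicit, and you handle it correctly.
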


Given a rooted branch-decomposition of width at most $\theta$,
the previous theorem provides a transcript with transition matrices.
Now, we are ready to state the algorithm to find a branch-decomposition of width at most $k$ if it exists, when we are given a rooted branch-decomposition of width at most $\theta$ together with a transcript and its transition matrices. In the following theorem, $\mathcal F_x$, called the \emph{full set} at $x$ in \cite{JKO2019}, is a set of compact encodings of partial branch-decompositions used as dynamic-programming table entries.
\begin{theorem}[label=thm:full_set,note={Jeong, Kim, and Oum~\cite[Propositions 6.1, 7.10, and 7.12]{JKO2019}}]
    Let $A$ be an $n\times n$ $\sigma$-symmetric matrix over a finite field~$\mathbb F$.
    Let $k$ be an integer.
    Given a rooted branch-decomposition $(T,\mathcal L)$ of~$\mathcal V_A$ of width at most $\theta$
    with its transcript and transition matrices
    and the root $r$ of $T$, 
    we can compute a set~$\mathcal{F}_x$ at each node $x$ of $T$ in time $O_{k,\theta,\mathbb F}(n)$, satisfying the following.
    \begin{enumerate}[label=\rm(\roman*)]
        \item The branch-width of $\mathcal{V}_A$ is at most $k$ if and only if $\mathcal{F}_r\neq \emptyset$.
        \item If $\mathcal F_r\neq\emptyset$, then we can find a branch-decomposition of $\mathcal V_A$ of width at most $k$ in time 
        $O_{k,\theta,\mathbb F}(n)$.
\end{enumerate}
\end{theorem}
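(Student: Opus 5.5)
The statement is a black-box import from Jeong, Kim, and Oum~\cite{JKO2019}, so the plan is to match its hypotheses to their Propositions~6.1, 7.10, and 7.12 rather than re-derive the dynamic programming. The work is in checking that our objects are exactly theirs, and then recalling why the bounds hold. By \zcref{lem:matrix_to_subspace}, $\mathcal V_A$ is a subspace arrangement in $\mathbb F^n$ whose $i$-th member $V_i$ is spanned by the $i$-th columns of $I$ and of $A$. In particular $\dim V_i\le 2$, so every part of the representing partition of $(I\mid A)$ has size at most $2\le\theta$ whenever $\theta\ge 2$ (the cases $\theta<2$ are trivial). The given rooted branch-decomposition of width at most $\theta$, together with the transcript and transition matrices produced by \zcref{thm:transcript}, is exactly the input their dynamic programming requires.

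The dynamic programming processes the nodes $x$ of $T$ bottom-up. Let $T_x$ be the subtree below $x$, $X_x$ the set of indices of leaves in $T_x$, and $B_x$ the boundary space of the cut at the edge above $x$. The transcript gives a fixed ordered basis of $B_x$, so $\dim B_x\le\theta$ and each subspace of $B_x$ is described by a bounded-size matrix over the finite field $\mathbb F$. Consider a partial branch-decomposition of $\{V_i : i\in X_x\}$ of width at most $k$. Each of its cuts $Y\subseteq X_x$ has a width that depends only on $\sum_{i\in Y}V_i$, on $\sum_{i\in X_x\setminus Y}V_i$, and on how these meet $B_x$.

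So $\mathcal F_x$ records, for each such partial decomposition, a \emph{compressed} description:
\begin{itemize}
\item along each path of the partial decomposition, the sequence of subspaces of $B_x$ induced by the cuts (intersections with $B_x$) together with the widths;
\item this sequence compressed to a typical sequence in the sense of Bodlaender--Kloks, so that only essential local maxima and minima are kept.
\end{itemize}
Since $B_x$ has only $O_{\theta,\mathbb F}(1)$ subspaces and widths lie in $\{0,\dots,k\}$, the number of compressed descriptions is bounded by a function of $k,\theta,\mathbb F$. This is the content of Proposition~6.1 of \cite{JKO2019}: the full set is finite of bounded size, and $\mathcal F_r\neq\emptyset$ if and only if some branch-decomposition of width at most $k$ exists. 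The ``if'' direction is immediate. The ``only if'' direction uses the fact that dominated compressed descriptions can be discarded without losing a witness, and this gives~(i).

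For the running time, at a node $x$ with children $y,z$ we combine $\mathcal F_y$ and $\mathcal F_z$. Each combination step pushes subspaces of $B_y$ and $B_z$ into $B_x$ via the transition matrices, taking sums and intersections, which costs $O_{\theta,\mathbb F}(1)$ time per operation. We then enumerate all ways to merge the two partial decompositions, including attaching one at an edge of the other, and recompress. Because all sets are of bounded size, each node costs $O_{k,\theta,\mathbb F}(1)$, giving $O_{k,\theta,\mathbb F}(n)$ overall. The main obstacle is the point that makes this linear rather than quadratic: no operation may touch the ambient space $\mathbb F^n$, and everything must be done in the coordinates of the boundary spaces. This is exactly why the transcript and transition matrices are precomputed (Proposition~7.10).

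For~(ii), each entry of $\mathcal F_x$ stores a back-pointer to the pair of entries in $\mathcal F_y,\mathcal F_z$ and the merge operation that produced it. Tracing back from any element of $\mathcal F_r$ reconstructs an actual branch-decomposition of width at most $k$ in time $O_{k,\theta,\mathbb F}(n)$, which is Proposition~7.12 of \cite{JKO2019}.
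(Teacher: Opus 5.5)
Your proposal is correct and takes the same route as the paper, which states this theorem purely as an import of \cite[Propositions 6.1, 7.10, and 7.12]{JKO2019}, with no proof of its own. Your sketch of the dynamic programming is only heuristic: compressing each path to a typical sequence does not by itself bound the number of branching nodes of a tree-shaped description, and in (ii) plain back-pointers must also undo the compression when lifting merges to actual decompositions. Since your argument rests on the cited propositions rather than on the sketch, this does not affect correctness.
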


The following theorem is the restatement of the result of Jeong, Kim, and Oum~\cite{JKO2019} for our purposes by combining the previous two theorems.

\begin{theorem}[label=thm:branch-width_compression,note={Jeong, Kim, and Oum~\cite{JKO2019}}]
    Let $n\ge2$. Let $k$, $\theta$ be integers.
    Let $A$ be an $n\times n$ $\sigma$-symmetric matrix over a finite field~$\mathbb F$.
    Let $(T,\mathcal L)$ be a rank-decomposition of $A$ of width at most $\theta$.
    Then, in time
$O_{k,\theta,\mathbb F}(n^2)$,
    we can either find a rank-decomposition of~$A$ of width at most $k$ or correctly conclude that the rank-width of~$A$ is more than $k$.
\end{theorem}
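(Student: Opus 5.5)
The plan is to combine \zcref{lem:matrix_to_subspace}, \zcref{thm:transcript}, and \zcref{thm:full_set}.

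\textbf{Step 1: move to the subspace arrangement.} Build $\mathcal V_A$ from $A$ in time $O(n^2)$. It is represented by the matrix $(I_n\mid A)$, which is already in reduced row echelon form with no zero rows. Each $V_i$ is spanned by the two columns $e_i$ and $v_i$, so $\abs{I_i}\le 2$.

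\textbf{Step 2: turn the rank-decomposition into a rooted one.} By \zcref{lem:matrix_to_subspace}, the given $(T,\mathcal L)$, viewed as a branch-decomposition of $\mathcal V_A$, has width at most $2\theta$. Subdivide any edge of $T$ and declare the new degree-$2$ node to be the root $r$. The cuts are unchanged, so this rooted decomposition still has width at most $2\theta$. Set $\theta'=\max(2\theta,2)$, so that both hypotheses $\abs{I_i}\le\theta'$ and width at most $\theta'$ hold. This costs $O(n)$ time.

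\textbf{Step 3: compute the transcript.} Apply \zcref{thm:transcript} (equivalently \zcref{thm:transcript2}) to obtain the transcript and its transition matrices in time $O_{\mathbb F}(\theta'^3 n^2)=O_{\theta,\mathbb F}(n^2)$. This is the quadratic bottleneck.

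\textbf{Step 4: run the dynamic programming.} Apply \zcref{thm:full_set} with target width $2k$ to compute the sets $\mathcal F_x$ in time $O_{k,\theta,\mathbb F}(n)$. By \zcref{lem:matrix_to_subspace}, $\bw(\mathcal V_A)=2\rw(A)$, so $\rw(A)\le k$ if and only if $\bw(\mathcal V_A)\le 2k$, which holds if and only if $\mathcal F_r\neq\emptyset$. If $\mathcal F_r=\emptyset$, report that $\rw(A)>k$. Otherwise, extract a branch-decomposition of $\mathcal V_A$ of width at most $2k$. Since cut widths correspond exactly via $f(X)=2\cutrk_A(X)$, and both decompositions live on the same index set $[n]$, the same tree and labelling form a rank-decomposition of $A$ of width at most $k$.

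\textbf{Where the care is needed.} The only subtle points are bookkeeping. The factor $2$ has to be tracked in both the width $\theta$ and the target $k$. The rooting convention must be checked: the root must have degree $2$, and the tree must remain subcubic elsewhere. The hypotheses of \zcref{thm:transcript2} must be verified, namely reduced row echelon form and the bound $\abs{I_i}\le\theta'$. The total running time is $O(n^2)+O_{\theta,\mathbb F}(n^2)+O_{k,\theta,\mathbb F}(n)=O_{k,\theta,\mathbb F}(n^2)$.
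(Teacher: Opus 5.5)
Your proposal is correct and matches the paper's proof step for step: build $\mathcal V_A$, root $(T,\mathcal L)$ by subdividing an edge so it has width at most $2\theta$, compute the transcript via \zcref{thm:transcript}, run the dynamic programming of \zcref{thm:full_set} with target $2k$, and translate back using $f(X)=2\cutrk_A(X)$. Your use of $\theta'=\max(2\theta,2)$ to meet the hypothesis $\abs{I_i}\le\theta'$ of \zcref{thm:transcript2} is a harmless refinement that the paper leaves implicit.
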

\begin{proof}
    We may assume that $k>0$ because otherwise it is trivial. 
    By \zcref{lem:matrix_to_subspace}, we can construct a subspace arrangement $\mathcal V_A$ of $\mathbb{F}^n$ with $\bw(\mathcal{V}_A)=2\rw(A)$ in time $O(n^2)$.
    We can regard $(T,\mathcal{L})$ as a branch-decomposition of $\mathcal{V}_A$ of width at most~$2\theta$ and pick a root of $T$ by subdividing an edge to create a root to make it a rooted branch-decomposition.
    We run the algorithm in~\zcref{thm:transcript} to compute a transcript and its transition matrices in time $O((2\theta)^3 n^2)$.
    The algorithm in \zcref{thm:full_set} runs in time 
    $O_{k,\theta,\mathbb F}(n)$
to find a branch-decomposition of $\mathcal V_A$ of width at most $2k$ or correctly conclude that the branch-width of $\mathcal V_A$ is more than~$2k$.
    Now observe that a branch-decomposition of~$\mathcal V_A$ of width at most~$2k$ can be regarded as a rank-decomposition of~$A$ of width at most~$k$ by \zcref{lem:matrix_to_subspace}.
\end{proof}
Now, let us combine with \zcref{thm:approximating_matrix_rankwidth}  to deduce the following. 

\begin{theorem}[label=thm:exact_algorithm, store=thm:exact_algorithm]
    Let $n\ge2$. Let $\mathbb F$ be a finite field. 
    Given an $n\times n$ $\sigma$-symmetric matrix~$A$ over~$\mathbb{F}$, 
    for an integer~$k$, 
    we can either find a rank-decomposition of $A$ of width at most $k$ or correctly conclude that the rank-width of $A$ is more than $k$ in time $O_{k,\mathbb F}(n^2)$.
\end{theorem}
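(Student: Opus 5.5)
The plan is to chain \zcref{thm:approximating_matrix_rankwidth} with \zcref{thm:branch-width_compression}. Set $q=\abs{\mathbb F}$. Then $A$ has at most $q-1$ distinct non-zero values among its off-diagonal entries, so \zcref{thm:approximating_matrix_rankwidth} applies with this $q$ and the given $k$. It runs in time $O_{k,q}(n^2)=O_{k,\mathbb F}(n^2)$, and there are two possible outcomes.

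In the first outcome, the algorithm concludes that the rank-width of $A$ exceeds $k$. We output this conclusion and stop.

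In the second outcome, we obtain a rank-decomposition $(T,\mathcal L)$ of $A$ of width at most
\[\theta:=2^{(q-1)(q^k-1)}+(q-1)(q^k-1)-1.\]
The value $\theta$ depends only on $k$ and $\mathbb F$, and it is computable from them. We then feed $(T,\mathcal L)$ together with $\theta$ into \zcref{thm:branch-width_compression}. That result either returns a rank-decomposition of $A$ of width at most $k$, or correctly concludes that $\rw(A)>k$, in time $O_{k,\theta,\mathbb F}(n^2)$. Since $\theta$ is a function of $k$ and $\mathbb F$, this is $O_{k,\mathbb F}(n^2)$. Adding the two phases gives total running time $O_{k,\mathbb F}(n^2)$, and each output is correct because the cited result guarantees it.

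The only remaining checks are bookkeeping. First, \zcref{thm:branch-width_compression} takes an unrooted rank-decomposition as input, so no rerooting is needed on our side; its own proof already subdivides an edge to create a root. Second, the output of \zcref{thm:approximating_matrix_rankwidth} must be a genuine rank-decomposition of $A$, meaning a subcubic tree with leaves bijective to $[n]$. This holds because \zcref{thm:graph_to_matrix} restricts to the minimal subtree containing the leaves of $V$ and suppresses degree-$2$ nodes. Third, the hidden constants compose computably.

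I do not expect a real obstacle. The step that needs the most care is confirming that the parameter dependence $O_{k,\theta,\mathbb F}$ collapses to $O_{k,\mathbb F}$, which holds because $\theta$ is an explicit computable function of $k$ and $\abs{\mathbb F}$.
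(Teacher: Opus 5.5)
Your proposal is correct and is essentially the paper's own proof. With $q=\abs{\mathbb F}$, you apply \zcref{thm:approximating_matrix_rankwidth} to either conclude $\rw(A)>k$ or obtain a rank-decomposition of width at most $\theta=2^{(q-1)(q^k-1)}+(q-1)(q^k-1)-1$, and then run \zcref{thm:branch-width_compression} with this $\theta$; the only cosmetic difference is that the paper first disposes of $k\le 0$ as trivial, which also avoids plugging a negative $k$ into the formula for $\theta$.
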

\begin{proof}
    We may assume that $k>0$ because otherwise it is trivial. 
    Let $q=\abs{\mathbb F}$.
    First, we apply \zcref{thm:approximating_matrix_rankwidth} to either find a rank-decomposition of $A$ that has width at most $\theta= 2^{(q-1)(q^k-1)}+(q-1)(q^k-1)-1$ or conclude that the rank-width of~$A$ is more than~$k$ in time $O_{k,q}(n^2)$.
    Suppose that we obtained a rank-decomposition $(T,\mathcal{L})$ of~$A$ whose width is at most $\theta$.
Now we apply \zcref{thm:branch-width_compression} to either find a rank-decomposition of~$A$ of width at most~$k$ or correctly conclude that the rank-width of $A$ is more than $k$ in 
    time~$O_{k,\theta,q}(n^2)$.
So, the total running time is at most $O_{k,\mathbb F}(n^2)$.
\end{proof}

The current proof method does not allow us to generalize \zcref{thm:exact_algorithm} to an infinite field 
because the proof of Jeong, Kim, and Oum~\cite{JKO2019} depends on the finiteness of the number of subspaces of a vector space over $\mathbb{F}$.

For matroids given with a standard representation, we obtain the following.

\begin{theorem}[label=thm:standard_represented_matroid_branchwidth_approximation, store=thm:standard_represented_matroid_branchwidth_approximation]
    Let $n\ge2$.
    For each integer $k$ and an $n$-element matroid $M$ given by a standard representation over a finite field~$\mathbb F$, in time $O_{k,\mathbb F}(n^2)$ we can either find a branch-decomposition of $M$ of width at most~$k$ or correctly conclude that the branch-width of $M$ is more than~$k$.
\end{theorem}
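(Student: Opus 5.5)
The plan is to reduce to \zcref{thm:exact_algorithm} through \zcref{lem:matroid_to_symmetric_matrix}. Let $(I_r\mid A_0)$ be the given standard representation of $M$ on the ground set $E$, with $\abs{E}=n$. In time $O(n^2)$ we form the $E\times E$ matrix
\[
A'=\begin{pmatrix} 0 & A_0\\ A_0^T & 0\end{pmatrix}
\]
over $\mathbb F$. This matrix is symmetric, so it is $\sigma$-symmetric with $\sigma$ the identity; the identity is trivially a sesqui-morphism, since $\sigma(\sigma(x))=x$ and $x\mapsto\sigma(1)^{-1}\sigma(x)=x$ is an automorphism. By \zcref{lem:matroid_to_symmetric_matrix}, $\lambda_M(X)=\cutrk_{A'}(X)$ for every $X\subseteq E$. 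Hence the two connectivity functions coincide on the same ground set $E$, and so $\bw(M)=\rw(A')$.

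Next we run the algorithm of \zcref{thm:exact_algorithm} on $A'$ with parameter $k$. Since $A'$ is an $n\times n$ matrix over the finite field $\mathbb F$, this takes $O_{k,\mathbb F}(n^2)$ time. If it reports that $\rw(A')>k$, we output $\bw(M)>k$. Otherwise it returns a rank-decomposition $(T,\mathcal L)$ of $A'$ of width at most $k$. We output the same pair $(T,\mathcal L)$ unchanged. Every edge $e$ of $T$ induces a partition $(X_e,E\setminus X_e)$ of $E$, and $\lambda_M(X_e)=\cutrk_{A'}(X_e)\le k$. So $(T,\mathcal L)$ is a branch-decomposition of $M$ of width at most $k$.

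The total running time is $O(n^2)+O_{k,\mathbb F}(n^2)=O_{k,\mathbb F}(n^2)$, and this bound already covers the input, which has size $O(n^2)$ because $r\le n$. There is essentially no obstacle. The only points to check are the following.
\begin{itemize}
\item Column indexing: the rows and columns of $A'$ must be indexed by $E$ in the same way the columns of the standard representation are, so that a decomposition of $A'$ is literally a decomposition of $M$.
\item The hypothesis $n\ge 2$ of \zcref{thm:exact_algorithm}, which is satisfied by assumption.
\item The degenerate case $r=0$ or $r=n$. Here $A_0$ is empty and $A'$ is the zero matrix. This needs no special treatment, because \zcref{thm:exact_algorithm} handles it directly.
\end{itemize}
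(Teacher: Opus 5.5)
Your proposal is correct and is essentially the paper's proof. Build the symmetric matrix $\left(\begin{smallmatrix}0&A_0\\A_0^T&0\end{smallmatrix}\right)$ in $O(n^2)$ time via \zcref{lem:matroid_to_symmetric_matrix}, run \zcref{thm:exact_algorithm} on it, and read off the resulting rank-decomposition as a branch-decomposition of $M$, using the equality of the two connectivity functions.
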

\begin{proof}
    By \zcref{lem:matroid_to_symmetric_matrix}, in time $O(n^2)$, we can find a symmetric matrix $A$ whose rank-width is equal to the branch-width of $M$.
    Then we apply the algorithm of \zcref{thm:exact_algorithm} to $A$.

    We remark that alternatively one could use \zcref{thm:standard_represented_matroid_approx} to obtain a branch-decomposition of bounded width  
    and then compute its transcript with transition matrices by \zcref{thm:transcript2} and finally use \zcref{thm:branch-width_compression}. 
    The running time is still $O_{k,\mathbb F}(n^2)$.
\end{proof}

As before, if the representation of the given matroid is not in standard form, we can use \zcref{lem:subdet} to obtain the following corollary from \zcref{thm:standard_represented_matroid_branchwidth_approximation}.

\begin{corollary}[label=thm:represented_matroid_branchwidth_approximation, store=thm:represented_matroid_branchwidth_approximation]
    Let $n\ge2$.
    Let $M$ be a matroid given by an $m\times n$ matrix representation over a finite field~$\mathbb{F}$.
    For an integer~$k$, in time $O(mn \min(m,n)^{\omega-2})+O_{k,\mathbb F}(n^2)$
    we can either find a branch-decomposition of $M$ of width at most $k$ or correctly conclude that the branch-width of~$M$ is more than $k$, where $\omega$ is the matrix multiplication exponent.
\end{corollary}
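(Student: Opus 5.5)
The plan is to reduce to the standard-representation case and then invoke \zcref{thm:standard_represented_matroid_branchwidth_approximation}. Let $A$ be the given $m\times n$ matrix over the finite field $\mathbb F$, with $n\ge 2$, representing $M$. First I apply \zcref{lem:subdet} with $K=\mathbb F\setminus\{0\}$. This set satisfies $-K=K$, and every non-zero subdeterminant of $A$ trivially lies in it. The lemma then produces, in time $O(mn\min(m,n)^{\omega-2})$, a standard representation $A''=(I_r\mid A_0)$ of $M$ over $\mathbb F$. Here $r\le\min(m,n)$ is the rank of $M$, so $A''$ is an $r\times n$ matrix with $r\le n$. We do not need the bound on distinct entry values from that lemma, because $\mathbb F$ is finite. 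Its algorithmic part, namely row selection, column-basis selection, and multiplication by $P^{-1}$, is all we use.

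Next I feed $A''$, viewed as a standard representation of the $n$-element matroid $M$, into the algorithm of \zcref{thm:standard_represented_matroid_branchwidth_approximation}. It runs in time $O_{k,\mathbb F}(n^2)$. It either outputs a branch-decomposition of width at most $k$ or correctly reports that the branch-width exceeds $k$. Both outcomes transfer to $M$ unchanged, since $A''$ represents the same matroid on the same ground set, with columns still indexed by $E$. The one bookkeeping point is that the earlier theorem's input size is $O(rn)\subseteq O(n^2)$, so no dependence on $m$ enters this stage.

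Adding the two stages gives a total time of $O(mn\min(m,n)^{\omega-2})+O_{k,\mathbb F}(n^2)$. No step is a real obstacle. The only thing to check is that the conversion in \zcref{lem:subdet} preserves the matroid and the column labels, which is ensured because deleting dependent rows and left-multiplying by an invertible matrix do not change linear dependencies among columns. Reading off the identity columns to present $A''$ in the form $(I_r\mid A_0)$ is only a relabeling and takes linear time.
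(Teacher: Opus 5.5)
Your proposal is correct and follows essentially the same route as the paper. The paper converts the input to a standard representation via \zcref{lem:subdet}; your choice of $K=\mathbb F\setminus\{0\}$ trivially meets that lemma's hypothesis and costs $O(mn\min(m,n)^{\omega-2})$ time. It then runs \zcref{thm:standard_represented_matroid_branchwidth_approximation} on the resulting $r\times n$ standard representation in $O_{k,\mathbb F}(n^2)$ time.
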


Kant\'e and Rao~\cite{KR2013} defined the \emph{rank-width} of a directed graph 
as follows. 
Let $\{0,1,\alpha,\alpha^2\}$ be the elements of $\mathbb F_4$, the finite field with four elements, 
such that $1+\alpha+\alpha^2=0$, $\alpha^3=1$, and the characteristic of $\mathbb F_4$ is $2$.
Let $\sigma$ be an automorphism on $\mathbb{F}_4$ 
such that $\sigma(0)=0$, $\sigma(1)=1$, $\sigma(\alpha)=\alpha^2$ and $\sigma(\alpha^2)=\alpha$. Clearly, $\sigma$ is a sesqui-morphism.
For a directed graph $G$, let us define its $\mathbb F_4$-adjacency matrix $A_G$ as a $V(G)\times V(G)$ matrix such that 
the $ij$-entry of $A_G$ is 
\[ 
\begin{cases}
    1 & \text{if }(i,j)\in E(G)\text{ and }(j,i)\in E(G), \\ 
    \alpha &\text{if }(i,j)\in E(G)\text{ and } (j,i)\notin E(G),\\ 
    \alpha^2 &\text{if }(i,j)\notin E(G)\text{ and } (j,i)\in E(G),\\ 
    0 &\text{if }(i,j)\notin E(G)\text{ and } (j,i)\notin E(G).
\end{cases}
\] 
Now, this matrix $A_G$ is clearly $\sigma$-symmetric. 
The \emph{rank-width} of a directed graph $G$, denoted by $\rw^{\mathbb F_4}(G)$, is defined as the rank-width of $A_G$ \cite{KR2013}.

As a corollary of \zcref{thm:exact_algorithm}, we deduce the following $O_{k}(n^2)$-time algorithm. 
The previous algorithm due to Kant\'e and Rao~\cite{KR2013} runs in time $O_k(n^3)$.

\begin{corollary}[label=cor:directedrankwidth,store=cor:directedrankwidth]
    Let $n\ge2$.
    For an integer~$k$, in time $O_k(n^2)$, we can find a rank-decomposition of an input $n$-vertex  directed graph~$G$ of width at most $k$ or confirm that the rank-width of $G$ is more than $k$.
\end{corollary}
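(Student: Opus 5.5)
This corollary should follow from \zcref{thm:exact_algorithm} applied to the $\mathbb F_4$-adjacency matrix $A_G$ with the sesqui-morphism $\sigma$ fixed above. By definition, $\rw^{\mathbb F_4}(G)=\rw(A_G)$, and a rank-decomposition of $G$ is exactly a rank-decomposition of $A_G$. So the plan is: build $A_G$, check that it meets the hypotheses of \zcref{thm:exact_algorithm}, run that algorithm with $\mathbb F=\mathbb F_4$, and return its output unchanged.

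\textbf{Constructing $A_G$.} For each ordered pair $(i,j)$ with $i\neq j$, we test whether $(i,j)$ and $(j,i)$ are arcs and write the entry $1$, $\alpha$, $\alpha^2$ or $0$ as prescribed. The diagonal entries are $0$, which is consistent with $\sigma(0)=0$. If $G$ is given by adjacency lists, we first spend $O(n^2)$ time converting it to a Boolean adjacency matrix. Either way, building $A_G$ takes $O(n^2)$ time.

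\textbf{Checking the hypotheses.}
\begin{enumerate}[label=\rm(\roman*)]
\item $\sigma$ is a sesqui-morphism. It is the Frobenius map $x\mapsto x^2$ on $\mathbb F_4$, which is an involutive field automorphism, and $\sigma(1)=1$. Hence $x\mapsto \sigma(1)^{-1}\sigma(x)=\sigma(x)$ is an automorphism, and $\sigma(\sigma(x))=x$.
\item $A_G$ is $\sigma$-symmetric, i.e.\ $(A_G)_{ij}=\sigma((A_G)_{ji})$. If the $ij$-entry is $\alpha$, then $(i,j)\in E(G)$ and $(j,i)\notin E(G)$. Swapping the roles of $i$ and $j$, the $ji$-entry is $\alpha^2$, and $\sigma(\alpha^2)=\alpha$. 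The cases with entries $1$, $\alpha^2$ and $0$ are checked the same way.
\end{enumerate}

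With $q=\abs{\mathbb F_4}=4$ a constant, \zcref{thm:exact_algorithm} runs in time $O_{k,\mathbb F_4}(n^2)=O_k(n^2)$. The total is therefore $O_k(n^2)$, and the output decomposition is read directly as a rank-decomposition of $G$.

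There is no real obstacle here. The only points that need care are the $\sigma$-symmetry check and confirming that reading the input graph costs no more than $O(n^2)$.
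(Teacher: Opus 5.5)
Your proposal is correct and follows the paper's argument: the paper obtains this corollary by applying \zcref{thm:exact_algorithm} over $\mathbb F_4$ to the $\sigma$-symmetric $\mathbb F_4$-adjacency matrix $A_G$, where $\rw^{\mathbb F_4}(G)=\rw(A_G)$ by definition. Your extra checks are sound but the paper treats them as routine. These are that $\sigma$ is the Frobenius involution, that $A_G$ is $\sigma$-symmetric, and that $A_G$ can be built in $O(n^2)$ time.
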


\section{Finding a path-decomposition quickly}\label{sec:linear_rank-width}

Jeong, Kim, and Oum~\cite{JKO2016} showed that we can decide whether a $\sigma$-symmetric $n\times n$ matrix over a fixed finite field 
has linear rank-width at most~$k$ for each fixed $k$ in time $O_{k,\mathbb F}(n^3)$.
We want to show that as an application of \zcref{thm:approximating_matrix_rankwidth}, 
we can improve $O_{k,\mathbb F}(n^3)$ to $O_{k,\mathbb F}(n^2)$ and this will allow us to obtain such an algorithm also for the path-width of matroids by using the reduction in \zcref{sec:matroidtomatrix}.

Let us first recall the algorithm of Jeong, Kim, and Oum~\cite{JKO2016}
to find a linear layout of width at most $k$ based on dynamic programming, assuming that the input has small branch-width and is given with a branch-decomposition of small width
and a transcript with its transition matrices. 
\begin{theorem}[label=thm:art_of_trellis,note={Jeong, Kim, and Oum~\cite[Propositions 43 and 44]{JKO2016}}]
    Let $k\ge0$, $\theta\ge 1$ be integers. Let $\mathbb F$ be a finite field. 
    Let $\mathcal{V}=\{V_1,V_2,\ldots,V_n\}$ be a subspace arrangement of subspaces of~$\mathbb{F}^r$.
    Let~$(T,\mathcal{L})$ be a given rooted branch-decomposition of $\mathcal{V}$ of width at most $\theta$.
    If we are given the transcript~$\Lambda$ of~$(T,\mathcal L)$ with its transition matrices, 
    then one can find a linear layout of width at most $k$ or confirm that no such linear layout exists in time 
$O_{\theta,\mathbb F,k}(n)$.
\end{theorem}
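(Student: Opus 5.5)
The plan is to run a dynamic program bottom-up along the rooted branch-decomposition $(T,\mathcal L)$, in the same spirit as the full sets $\mathcal F_x$ of \zcref{thm:full_set}, but tailored to linear layouts rather than branch-decompositions. For each node $x$ of $T$, let $X_x\subseteq[n]$ be the set of indices whose leaves lie below $x$, and let $B_x=(\sum_{i\in X_x}V_i)\cap(\sum_{j\notin X_x}V_j)$ be the boundary space. Its dimension is at most $\theta$, and the transcript $\Lambda$ gives it an ordered basis. A linear layout of $\mathcal V$ restricts to a linear layout $e_1,\dots,e_m$ of $X_x$. The only information about this restriction that matters for the rest of the layout is how each prefix $P_t=\{e_1,\dots,e_t\}$ interacts with the outside. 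This information is captured by the sequence of pairs
\[
\Big(\dim\big(\textstyle\sum_{i\in P_t}V_i\cap\sum_{i\in X_x\setminus P_t}V_i\big),\ \big(\textstyle\sum_{i\in P_t}V_i\big)\cap B_x\Big),
\]
where each subspace of $B_x$ is written in the coordinates from $\Lambda$. Since $\mathbb F$ is finite and $\dim B_x\le\theta$, there are only $O_{\theta,\mathbb F}(1)$ possible subspaces in the second coordinate.

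The key steps, in order, are as follows. First, I will prove a formula for the width of a cut of a merged layout at a node $x$ with children $y,z$. The width $\lambda(P\cup Q)$, with $P$ a prefix of the layout of $X_y$ and $Q$ a prefix of the layout of $X_z$, should be expressible from the dimension terms of $P$ and $Q$, their traces in $B_y$ and $B_z$, and $\dim B_x$. This is a modular-law computation inside $B_y+B_z$, using that the transition matrices map $B_y$, $B_z$ and $B_x$ into a common coordinate system of dimension at most $2\theta$ or so.

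Second, I will compress the sequences. Consecutive prefixes with the same boundary trace will be merged, keeping only local maxima and minima of the integer coordinate, in the manner of typical sequences of Bodlaender--Kloks. The traces form a chain of subspaces of $B_x$ that only grows, so each sequence has at most $\theta+1$ blocks. Within each block the compressed integer sequence has length $O(k)$, giving $O_{k,\theta,\mathbb F}(1)$ possible compact encodings. I will then show a domination lemma: a layout whose encoding is dominated by another can be replaced by it without harming any extension. Consequently, storing the set of all encodings of width at most $k$ at each node, which is a set of bounded size, suffices.

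Third, I will give the join operation at a node with children $y,z$. It enumerates all interleavings of the two compressed encodings, computes the widths via the formula from the first step, discards interleavings of width more than $k$, and recompresses. Each node then costs $O_{k,\theta,\mathbb F}(1)$ time, and the leaves are initialized directly from $V_i$ and its trace. At the root $r$ the boundary is $0$, and a layout of width at most $k$ exists iff the table at $r$ is non-empty. Backtracking pointers recover the actual layout in $O(n)$ total, giving $O_{\theta,\mathbb F,k}(n)$ overall.

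The main obstacle is the second step: proving that the compression to typical sequences is lossless for the merge operation. Unlike the graph case, the integer values are coupled with subspace traces that change along the sequence. I would first try to prove the domination lemma separately within each maximal block of constant trace, using the classical typical-sequence lemma as a black box, and then argue that interleavings respect block boundaries up to a bounded number of choices.
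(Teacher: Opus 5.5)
\paragraph{Gap in your first step.} The state you chose is too small. Write $\langle S\rangle=\sum_{i\in S}V_i$. The internal width together with the prefix trace $\langle P_t\rangle\cap B_x$ does not determine the widths of merged layouts.

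Take a node $x$ with children $y,z$, and let $P\subseteq X_y$, $Q\subseteq X_z$, $P'=X_y\setminus P$, $Q'=X_z\setminus Q$. The space $\langle X_y\rangle\cap\langle X_z\rangle=B_y\cap B_z$ contains both $\langle P\rangle\cap\langle Q\rangle$ and $\langle P'\rangle\cap\langle Q'\rangle$. Expanding dimensions then gives
\[
\dim(\langle P\cup Q\rangle\cap\langle P'\cup Q'\rangle)=\lambda_y(P)+\lambda_z(Q)+\dim(B_y\cap B_z)-\dim(L_P\cap L_Q)-\dim(R_P\cap R_Q).
\]
Here $\lambda_y(P)=\dim(\langle P\rangle\cap\langle P'\rangle)$, $L_P=\langle P\rangle\cap B_y$ is your prefix trace, and $R_P=\langle P'\rangle\cap B_y$ is the \emph{suffix} trace; $Q$ is treated similarly inside $B_z$. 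Your encoding does not record $R_P$, and $R_P$ is not determined by what you do record.

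Concretely, work in $\mathbb F^2$ with basis $e,b$. Let $V_1=\langle e\rangle$, $V_3=\langle b\rangle$, and either $V_2=\langle e+b\rangle$ or $V_2=\langle b\rangle$. Let the root $x$ have children $y$ (leaves $1,2$) and $z$ (leaf $3$). In both instances:
\begin{itemize}
\item $B_y=B_z=\langle b\rangle$ and $B_x=0$;
\item the layout $(1,2)$ of $X_y$ has the same encoding $(0,0),(0,0),(0,B_y)$;
\item the layout $(3)$ of $X_z$ also has the same encoding.
\end{itemize}
Yet in the merged layout $(1,2,3)$ the cut $\{1\}$ has width $\dim(V_1\cap(V_2+V_3))$, which is $1$ in the first instance and $0$ in the second. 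So the formula you seek in the first step does not exist.

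Changing the integer coordinate does not rescue it either. For example, $\dim(\langle P\rangle\cap(\langle P'\rangle+B_y))$ only reveals $\dim R_P$. With $\dim B_y=2$ one can make the two suffix traces distinct lines while all of this data agrees, and the merged widths again differ. The subspace $R_P$ itself is needed.

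\paragraph{Repair.} Carry both chains. Encode each prefix by the triple $(\langle P_t\rangle\cap B_x,\ \langle X_x\setminus P_t\rangle\cap B_x,\ \lambda)$, with the first coordinate increasing and the second decreasing. This gives at most $2\dim B_x+1\le 2\theta+1$ blocks with constant traces.

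At the parent, the new traces are $(L_P+L_Q)\cap B_x$ and $(R_P+R_Q)\cap B_x$. To see this, note that if $p+q\in B_x$ with $p\in\langle P\rangle$ and $q\in\langle Q\rangle$, then $p\in\langle X_y\rangle\cap\langle [n]\setminus X_y\rangle=B_y$. All sums and intersections are computed in the coordinates provided by the transcript and transition matrices.

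This left/right bookkeeping is the subspace analogue of the interval model of Bodlaender and Kloks. As far as I recall, it is what the $B$-trajectories of \cite{JKO2016} keep track of. The paper does not reprove the statement; it imports it from that work.

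With the corrected state, the rest of your plan follows the same route as the cited proof: typical-sequence compression within each block, a domination order, interleaving at internal nodes, and backtracking from the root. The soundness of merging compressed objects still rests on the domination lemmas you identified as the main obstacle.
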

In the paper of Jeong, Kim, and Oum~\cite[Proposition 39]{JKO2016} for path-width,
the transcript was computed by a slower, cubic-time algorithm.  
Following the thesis of Jeong~\cite{Jeong2018}, 
we instead use \zcref{thm:transcript2} 
from their later paper~\cite{JKO2019}.

By \zcref{lem:matrix_to_subspace}, we can reduce the problem of identifying $\sigma$-symmetric matrices of linear rank-width at most $k$ 
to a problem of identifying subspace arrangements of path-width at most~$2k$.
Thus, by \zcref{thm:approximating_matrix_rankwidth,thm:art_of_trellis}, we obtain a quadratic-time algorithm for deciding whether the linear rank-width of a $\sigma$-symmetric matrix over a fixed finite field is at most~$k$ for fixed~$k$.

\begin{theorem}[label=thm:exact_linear_rank-width]
    Let $\mathbb{F}$ be a finite field.
    Let $k$ be an integer.
    Given an $n\times n$ $\sigma$-symmetric matrix~$A$ over~$\mathbb{F}$, 
    we can either find a linear rank-decomposition of $A$ of width at most~$k$ or correctly conclude that the linear rank-width of $A$ is more than~$k$ in time $O_{k,\mathbb F}(n^2)$.
\end{theorem}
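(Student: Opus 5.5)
The plan mirrors the proof of \zcref{thm:exact_algorithm}, with the path-width dynamic programming of \zcref{thm:art_of_trellis} replacing the branch-width dynamic programming of \zcref{thm:full_set}.

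\textbf{Trivial cases.} If $n\le 1$, the statement is trivial. If $k=0$, we do not need the dynamic programming at all: linear rank-width $0$ means $\cutrk_A(\{e_1,\ldots,e_i\})=0$ for all prefixes. This is decided in $O(n^2)$ time by computing the connected components of the graph on $[n]$ in which $i\sim j$ iff $A_{ij}\neq 0$. Since $\cutrk_A$ of a prefix is zero iff the prefix is a union of components, such a layout exists iff every component is a singleton; and any ordering then works. So assume $n\ge 2$ and $k\ge 1$, and let $q=\abs{\mathbb F}$.

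\textbf{Step 1: bounded-width rank-decomposition.} The linear rank-width of $A$ bounds its rank-width: \zcref{lem:lrwleqrw} applies because $\cutrk_A(\{e\})\le 1$ for every single row. So we apply \zcref{thm:approximating_matrix_rankwidth} with parameter $k$. If it concludes $\rw(A)>k$, then the linear rank-width exceeds $k$ and we stop. Otherwise, in $O_{k,q}(n^2)$ time we obtain a rank-decomposition $(T,\mathcal L)$ of $A$ of width at most
\[\theta=2^{(q-1)(q^k-1)}+(q-1)(q^k-1)-1 .\]

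\textbf{Step 2: rooted decomposition of $\mathcal V_A$ and its transcript.} Build $\mathcal V_A$ in $O(n^2)$ time. By \zcref{lem:matrix_to_subspace}, $(T,\mathcal L)$ is a branch-decomposition of $\mathcal V_A$ of width at most $2\theta$. Subdivide an edge of $T$ to create a root of degree $2$. Each $V_i$ is spanned by the two columns $e_i$ and $v_i$, and $(I\mid A)$ is in reduced row echelon form without zero rows. Hence \zcref{thm:transcript} (via \zcref{thm:transcript2}) computes the transcript and transition matrices in $O_{\mathbb F}(\theta^3n^2)$ time. Here $\theta\ge 2$ dominates the bound $\abs{I_i}\le 2$ on the column blocks.

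\textbf{Step 3: path-width dynamic programming.} Run \zcref{thm:art_of_trellis} with width bound $2k$ in $O_{\theta,\mathbb F,k}(n)$ time. By \zcref{lem:matrix_to_subspace}, $f(X)=2\cutrk_A(X)$ for every $X$. So a linear layout of $\mathcal V_A$ has width at most $2k$ iff the same ordering is a linear rank-decomposition of $A$ of width at most $k$, and nonexistence transfers in the same way. The total running time is $O_{k,\mathbb F}(n^2)$, since $\theta$ depends only on $k$ and $q$.

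The main obstacle is only bookkeeping: checking that the hypotheses of \zcref{thm:art_of_trellis} and \zcref{thm:transcript2} are met. Two points need checking: that the input is a rooted decomposition, and that the factor $2$ in the widths is handled consistently. The $k=0$ edge case also needs the separate treatment given above.
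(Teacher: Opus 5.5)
Your proof is correct and follows the paper's proof in structure: pass to $\mathcal V_A$ via \zcref{lem:matrix_to_subspace}, root the decomposition, compute the transcript with \zcref{thm:transcript2}, and run \zcref{thm:art_of_trellis} with bound $2k$. The one difference is that the paper first calls \zcref{thm:exact_algorithm} to get a rank-decomposition of width at most $k$ (so $\theta=2k$), whereas you feed the approximate decomposition from \zcref{thm:approximating_matrix_rankwidth} in directly and skip the branch-width dynamic programming of Jeong, Kim, and Oum; the paper itself names this shortcut (in its lead-in to the theorem and in the remark after \zcref{thm:exact_path-width_b}), and it gives the same $O_{k,\mathbb F}(n^2)$ bound.
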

\begin{proof}
    We may assume that $k>0$ because otherwise it is trivial.
We first use \zcref{thm:exact_algorithm} to either conclude that $\rw(A) > k$ or find a rank-decomposition of $A$ of width at most $k$.
If $\rw(A)>k$, then by \zcref{lem:lrwleqrw}, we conclude that $\lrw(A)>k$.
Thus, assume that we find a rank-decomposition $(T,\mathcal L)$ of $A$ of width at most~$k$.
    By picking an edge and subdividing it to create a root, we may assume that $(T,\mathcal L)$ is a rooted rank-decomposition.
    Let \[ M= ( I \mid A )\]  be an $n\times 2n$ matrix over $\mathbb F$ where $I$ is the $n\times n$ identity matrix.
    Assume that $\{1,2,\ldots,2n\}$ is the set of indices of columns of $M$. 
    Let $I_{i}=\{i,n+i\}$
    and $V_i$ be the span of the column vectors indexed by elements in $I_i$ for all $i\in \{1,2,\ldots,n\}$.
    Then $\mathcal V=\{V_1,V_2,\ldots,V_n\}$ is a subspace arrangement such that its path-width is equal to twice the linear rank-width of $A$ 
    and $(T,\mathcal L)$ is a rooted branch-decomposition of width at most $\theta:=2k$ of $\mathcal V$
    by \zcref{lem:matrix_to_subspace}.
    Since $M$ is in reduced row echelon form with no zero rows, 
    we apply \zcref{thm:transcript2} to obtain the transcript and its transition matrices for $(T,\mathcal L)$ in time $O_{k,\mathbb F}(n^2)$.
    With \zcref{thm:art_of_trellis}, we can either find 
    a linear layout of $\mathcal V$ of width at most $2k$ 
    or conclude that no such linear layout exists
    in time $O_{k,\mathbb F}(n)$. 
    Linear layouts of $\mathcal V$ of width at most $2k$ 
    are precisely linear rank-decompositions of $A$ of width at most $k$. 
    Thus the proof is complete and the running time is $O_{k,\mathbb F}(n^2)$.
\end{proof}

Since the linear rank-width of a graph is equal to the linear rank-width of its adjacency matrix over the binary field, we obtain the following.
This corollary can also be obtained as a corollary of the theorem of Fomin and Korhonen (\zcref{thm:graph_rankwidth_approximation}) with \zcref{thm:transcript2,thm:art_of_trellis}.

\begin{corollary}[label=thm:linear_rank-width, store=thm:linear_rank-width]
    Let $k$ be an integer.
    We can either find a linear rank-decomposition of an input $n$-vertex graph $G$ of width at most~$k$ or correctly conclude that the linear rank-width of $G$ is more than~$k$ in time $O_k(n^2)$.
\end{corollary}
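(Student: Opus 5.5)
This is a direct specialization of \zcref{thm:exact_linear_rank-width}. Given an $n$-vertex graph $G$, I would form its adjacency matrix $A_G$ over the binary field $\mathbb F_2$, which takes $O(n^2)$ time. Over $\mathbb F_2$ the identity map $\sigma(x)=x$ is a sesqui-morphism, so $A_G$ is symmetric and hence $\sigma$-symmetric. The cut-rank function of $G$ is $\cutrk_G(X)=\rk(A_G[X,V(G)\setminus X])=\cutrk_{A_G}(X)$ for every $X\subseteq V(G)$. The two connectivity functions therefore coincide, and so do their linear layouts and the widths of those layouts. In particular $\lrw(G)=\lrw(A_G)$, and the linear rank-decompositions of $G$ of width at most $k$ are exactly the linear rank-decompositions of $A_G$ of width at most $k$.

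Next I would apply \zcref{thm:exact_linear_rank-width} with $\mathbb F=\mathbb F_2$ to $A_G$. In time $O_{k,\mathbb F_2}(n^2)=O_k(n^2)$ it either returns a linear layout of $\cutrk_{A_G}$ of width at most $k$, which I output unchanged as a linear rank-decomposition of $G$, or it concludes that $\lrw(A_G)>k$, which gives $\lrw(G)>k$. The degenerate cases $n\le 1$ and $k\le 0$ are trivial or already covered by that theorem.

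For the alternative route mentioned before the statement, I would first run \zcref{thm:graph_rankwidth_approximation} on $G$ with parameter $k$. If $\rw(G)>k$, then $\lrw(G)>k$ by \zcref{lem:lrwleqrw}, which applies since $\cutrk_G(\{v\})\le 1$. Otherwise the rank-decomposition of width at most $k$ becomes a rooted branch-decomposition of width at most $2k$ of the subspace arrangement $\mathcal V_{A_G}$, represented by $(I\mid A_G)$ in reduced row echelon form, by \zcref{lem:matrix_to_subspace}. From there I would compute the transcript via \zcref{thm:transcript2} and finish with \zcref{thm:art_of_trellis} at width $2k$.

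There is no real obstacle. The only point to check is that the identity is a valid sesqui-morphism on $\mathbb F_2$ and that the parameter dependence on $\mathbb F_2$ is absorbed into $O_k$.
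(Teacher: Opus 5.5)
Your proposal is correct and matches the paper's proof, which likewise applies \zcref{thm:exact_linear_rank-width} to the binary adjacency matrix of $G$, using that the linear rank-width of a graph equals that of its adjacency matrix. Your alternative route through \zcref{thm:graph_rankwidth_approximation}, \zcref{thm:transcript2}, and \zcref{thm:art_of_trellis} is also the one the paper remarks on just before the statement.
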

\begin{proof}
    We apply \zcref{thm:exact_linear_rank-width} to the adjacency matrix of~$G$.
\end{proof}

For matroids given with a standard representation, we have the following.

\begin{corollary}[label=thm:exact_path-width_b,store=thm:exact_path-width_b]
    Let $n\ge2$ and $k$ be integers.
    For an $n$-element matroid $M$ given by a standard representation over a finite field~$\mathbb F$, in time $O_{k,\mathbb F}(n^2)$ we can either find a path-decomposition of $M$ of width at most~$k$ or correctly conclude that the path-width of $M$ is more than~$k$.
\end{corollary}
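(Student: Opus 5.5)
The plan is to reduce to \zcref{thm:exact_linear_rank-width} via \zcref{lem:matroid_to_symmetric_matrix}. Let the input be a standard representation $(I_r\mid A_0)$ of $M$ on ground set $E$ with $\abs{E}=n$. We first form the $E\times E$ symmetric matrix
\[ A'=\begin{pmatrix}0&A_0\\A_0^T&0\end{pmatrix} \]
over $\mathbb F$, which takes $O(n^2)$ time. Taking $\sigma$ to be the identity, $A'$ is $\sigma$-symmetric. By \zcref{lem:matroid_to_symmetric_matrix}, $\lambda_M(X)=\cutrk_{A'}(X)$ holds for every $X\subseteq E$. So $\lambda_M$ and $\cutrk_{A'}$ are the same function on $2^E$.

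Since these two connectivity functions coincide, a linear layout $e_1,\ldots,e_n$ of $E$ has the same width under both. Hence path-decompositions of $M$ of width at most $k$ are exactly the linear rank-decompositions of $A'$ of width at most $k$, and in particular $\pw(M)=\lrw(A')$. We then run the algorithm of \zcref{thm:exact_linear_rank-width} on the $n\times n$ matrix $A'$ over the finite field $\mathbb F$. If it returns a linear rank-decomposition of width at most $k$, we output that ordering as a path-decomposition of $M$. Otherwise we report that $\pw(M)>k$.

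The total running time is $O(n^2)+O_{k,\mathbb F}(n^2)=O_{k,\mathbb F}(n^2)$. Nothing in this argument is a real obstacle. The two points to check are that $A'$ is genuinely $n\times n$ and indexed by $E$, so that the quadratic bound is in terms of $n$, and that the identity map is a sesqui-morphism, so that \zcref{thm:exact_linear_rank-width} applies. Both are immediate.
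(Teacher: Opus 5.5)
Your proposal is correct and follows the paper's proof: form the symmetric matrix of \zcref{lem:matroid_to_symmetric_matrix} in $O(n^2)$ time, so that linear layouts have the same width under $\lambda_M$ and $\cutrk_{A'}$, and then run the algorithm of \zcref{thm:exact_linear_rank-width}. Your explicit checks that $A'$ is $n\times n$ and that the identity is a sesqui-morphism are the only details the paper leaves implicit.
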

\begin{proof}
    By \zcref{lem:matroid_to_symmetric_matrix}, in time $O(n^2)$, we can find a symmetric matrix $A$ whose linear rank-width is equal to the path-width of $M$. 
    Then we apply the algorithm of \zcref{thm:exact_linear_rank-width} to~$A$.

    We remark that alternatively one could use \zcref{thm:standard_represented_matroid_approx} to obtain a branch-decomposition of bounded width  
    and then compute its transcript with transition matrices by \zcref{thm:transcript2} and finally use \zcref{thm:art_of_trellis}. 
    The running time is still $O_{k,\mathbb F}(n^2)$.
\end{proof}

As before, if the input matroid is given with a general representation, 
we can use \zcref{lem:subdet} to obtain the following corollary from \zcref{thm:exact_path-width_b}.

\begin{corollary}[label=thm:path-width, store=thm:path-width]
Let $k$ be an integer and let $M$ be a matroid given by an $m\times n$ matrix representation over a finite field $\mathbb{F}$.
    Then in time $O(mn \min(m,n)^{\omega-2})+O_{k,\mathbb F}(n^2)$ we can either find a path-decomposition of~$M$ of width at most~$k$ or correctly conclude that the path-width of~$M$ is more than~$k$, where $\omega$ is the matrix multiplication exponent.
\end{corollary}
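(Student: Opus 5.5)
The plan is to follow the corollary above it exactly: first convert the input into a standard representation with \zcref{lem:subdet}, then run \zcref{thm:exact_path-width_b} on the result.

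Concretely, take $K=\mathbb F\setminus\{0\}$. Since $\mathbb F$ is finite, $K$ is finite and satisfies $-K=K$, and every non-zero subdeterminant of the input matrix trivially lies in $K$. So the subdet-lemma hypothesis holds with no extra work. Applying \zcref{lem:subdet}, in time $O(mn\min(m,n)^{\omega-2})$ we compute a standard representation $(I_r\mid A_0)$ of the same matroid $M$ over $\mathbb F$. This is an $r\times n$ matrix with $r\le \min(m,n)\le n$.

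Because the new matrix represents the same matroid on the same ground set, the connectivity function $\lambda_M$ is unchanged. Hence path-decompositions of $M$, and their widths, do not depend on which representation we use. We then feed the standard representation to the algorithm of \zcref{thm:exact_path-width_b}. In time $O_{k,\mathbb F}(n^2)$ it either returns a path-decomposition of width at most $k$, which we output as is, or certifies that the path-width exceeds $k$.

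Two routine details remain. The first is the degenerate case $n\le 1$, where the path-width is $0$ and the answer is immediate. The second is the running time: the total is $O(mn\min(m,n)^{\omega-2})+O_{k,\mathbb F}(n^2)$, because the $O(n^2)$ cost of building the symmetric matrix in \zcref{lem:matroid_to_symmetric_matrix} is already absorbed in \zcref{thm:exact_path-width_b}. I do not expect any real obstacle. The only point needing care is checking that \zcref{lem:subdet} is applicable, and for a finite field the choice $K=\mathbb F\setminus\{0\}$ settles that.
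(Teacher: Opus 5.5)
Your proposal is correct and is essentially the paper's own argument. The paper likewise converts the general representation to a standard one via \zcref{lem:subdet} in time $O(mn\min(m,n)^{\omega-2})$, where finiteness of $\mathbb F$ makes the subdeterminant hypothesis vacuous. It then applies \zcref{thm:exact_path-width_b}, and your separate handling of $n\le 1$ covers the $n\ge 2$ hypothesis of that corollary.
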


\section{Reduction to matrix singularity}\label{sec:reduction}
We give evidence that the factor of $n^{\omega}$ is unavoidable for computing matroid branch-width when the matroid is given by a general representation.
First, we make the simple observation that 
connects the problem of deciding whether the branch-width of a matroid is zero 
with the problem of deciding whether the column vectors of a matrix are linearly independent.

\begin{lemma}[label=lem:branchwidth0, store=lem:branchwidth0]
    Let $A$ be a matrix over a field $\mathbb{F}$ without a zero column.
    Then $\bw(M(A))=0$ if and only if the columns of $A$ are linearly independent.
\end{lemma}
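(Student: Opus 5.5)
We prove the two directions separately, working directly with the connectivity function $\lambda_M$ of $M=M(A)$ on the column index set $E$, with $n=\abs{E}$. The degenerate case $n\le 1$ is immediate: $\bw(M)=0$ by definition, and a single nonzero column is linearly independent. So assume $n\ge 2$.

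For the ``if'' direction, suppose the columns of $A$ are linearly independent. Then every subset $X\subseteq E$ is independent, so $r(X)=\abs{X}$. Hence
\[\lambda_M(X)=\abs{X}+\abs{E\setminus X}-\abs{E}=0\]
for every $X$. Consequently every branch-decomposition has width $0$. Such a decomposition exists because $n\ge 2$; for instance, take a caterpillar as in the proof of \zcref{lem:lrwleqrw}. Therefore $\bw(M)=0$.

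For the ``only if'' direction, suppose $\bw(M)=0$ and fix a branch-decomposition $(T,\mathcal L)$ of width $0$. Every leaf edge of $T$ separates a singleton $\{e\}$ from the rest, so $\lambda_M(\{e\})=0$ for every $e\in E$. Since $A$ has no zero column, $r(\{e\})=1$. Thus $\lambda_M(\{e\})=0$ says $r(E\setminus\{e\})=r(E)-1$, so $e$ is a coloop: it is not in the span of the other columns. A matroid in which every element is a coloop is free, so $r(E)=\abs{E}$ and the columns are linearly independent. To see this concretely, suppose there were a nontrivial linear dependency among the columns. Any column with a nonzero coefficient would then lie in the span of the others, contradicting that it is a coloop.

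The only point needing care is the ``only if'' direction, specifically the observation that the width-$0$ condition already forces every singleton to have $\lambda$-value $0$ through the leaf edges. The hypothesis of no zero column is used exactly there. Without it, a loop $e$ has $r(\{e\})=0$ and $\lambda_M(\{e\})=0$ without being a coloop, so a zero column would give branch-width $0$ while the columns are dependent.
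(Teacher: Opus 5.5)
Your proof is correct and is essentially the paper's: the ``if'' direction is identical, and your ``only if'' argument (width $0$ forces $\lambda_M(\{e\})=0$ at every leaf edge, so with no loops every element is a coloop) is simply the contrapositive of the paper's observation that a column lying in the span of the others has $\lambda_M(\{v\})=1$ at its leaf edge. One small nit: the caterpillar from the proof of \zcref{lem:lrwleqrw} is built for $n\ge 3$, so for $n=2$ you should use the single-edge tree instead.
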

\begin{proof}
    Let $M=M(A)$ and $E=E(M)$.
    Suppose that the columns of $A$ are linearly independent.
    Then for any subset $X\subseteq E$, we have $\rk_M(X)=\abs{X}$.
    Therefore, the connectivity function~$\lambda_M$ is always $0$, which implies $\bw(M)=0$.

    Next, suppose that the columns of $A$ are linearly dependent.
    Then we can find a column $v$ of~$A$ that can be written as a linear combination of other columns of $A$.
    Therefore if we delete the column $v$ from $A$, the rank does not change.
    Let $(T,\mathcal{L})$ be a branch-decomposition of $M$.
    Then the width of the unique edge of $T$ incident with the leaf corresponding to~$v$ is
    \[\lambda_M(\{v\})=\rk_M(\{v\})+\rk_M(E\setminus\{v\})-\rk_M(E)=1,\]
    so the width of $(T,\mathcal{L})$ is at least $1$.
    Therefore, we have $\bw(M)\geq 1$.
\end{proof}

More generally, we aim to show that for every fixed $k$, deciding whether the branch-width is at most $1.5k$ or is more than $k$ for a matroid represented over a fixed field $\mathbb F$ with at least $4$ elements is as hard as the problem of deciding whether a square matrix over $\mathbb F$ is non-singular.
We give a quadratic-time reduction from \textsc{Matrix Singularity} to \textsc{Branch-width $1.5$-Approximation for~$k$}.
So, if for a fixed field $\mathbb F$ there is no algorithm that runs in time $O(n^{\omega-\varepsilon})$
to decide an $n\times n$ matrix is non-singular for some $0<\varepsilon\le \omega-2$,
then for each fixed~$k$,
it is impossible to decide whether the branch-width of a given represented matroid over $\mathbb F$ is at most $1.5k$ or more than $k$ 
in time $O(n^{\omega-\varepsilon})$.
Furthermore, if the given field is sufficiently large, then we will show that $1.5k$ can be improved to $2k$.

We start by formally defining two algorithmic problems we will cover.
We assume that $\mathbb F$ is a fixed field.

\begin{tcolorbox}\textsc{Matrix Singularity}

\textbf{Input:} an $n\times n$ matrix $A$ over $\mathbb{F}$.

\textbf{Output:} is $A$ non-singular?
\end{tcolorbox}

Let $k$ be a fixed non-negative integer and let $r\geq 1$ be a fixed real number.

\begin{tcolorbox}\textsc{Branch-width $r$-Approximation for $k$}

\textbf{Input:} a matrix $A$ over $\mathbb{F}$.

\textbf{Output:} For a matroid $M$ represented by $A$ over $\mathbb{F}$, confirm $\bw(M)\leq r\cdot k$ or confirm\\ ${}$ \hspace{1.5cm} $\bw(M)> k$.
\end{tcolorbox}

The main strategy of the reduction is that for an $n\times n$ matrix $A$, we construct a new matrix such that the matroid represented by the new matrix has branch-width determined by the determinant of $A$.
Our starting point is \zcref{lem:branchwidth0}, which connects the branch-width of a represented matroid to the singularity of a matrix.
We will introduce three operations on matrix representations that manipulate the branch-width in a controlled way.
These operations will be our key method of the reduction.

\subsection{Titanic and robust partitions}
To move forward to the next step, we introduce the titanic partition, originating from the definition of a titanic separation due to Robertson and Seymour~\cite{RS1991}.
Let $\lambda$ be a connectivity function 
on the subsets of a finite set~$E$.
A subset $X$ of $E$ is called \emph{titanic} with respect to $\lambda$ if whenever $X_1$, $X_2$, $X_3$ are pairwise disjoint subsets of $X$ such that $X_1\cup X_2\cup X_3=X$, there is~$i\in \{1,2,3\}$ such that $\lambda(X_i)\geq \lambda(X)$.
A partition $\mathcal{P}$ of $E$ is called \emph{titanic} with respect to $\lambda$ if every part of $\mathcal{P}$ is titanic with respect to $\lambda$.
The following lemma shows that for a matroid, if each part of a titanic partition is small under the connectivity function, then the branch-width of the partitioned matroid is also small.

\begin{lemma}[label=lem:titanic_partition,note={Hlin\v{e}n\'y and Oum~\cite[Lemma 3.4]{HO2006}}]
    Let $\lambda$ be a connectivity function on the subsets of a finite set~$E$ 
    and let $\mathcal P$ be a titanic partition of $E$.
    Let $\lambda^{\mathcal P}$ be a connectivity function on the subsets of~$\mathcal P$ such that $\lambda^{\mathcal P}(U)=\lambda(\bigcup_{X\in U}X)$ for all $U\subseteq \mathcal P$.
    If $\bw(\lambda) \le k$ and $\lambda(X)\le k$ for all $X\in \mathcal P$, then 
    the branch-width of $\lambda^{\mathcal P}$ is at most $k$.
\end{lemma}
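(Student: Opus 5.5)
Starting from an optimal branch-decomposition $(T,\mathcal L)$ of $\lambda$ of width at most $k$, I plan to modify it step by step until, for every part $X\in\mathcal P$, the leaves labelled by $X$ hang together in a single subtree cut off by one edge. Once all parts are grouped like this, contract each such subtree to a single leaf labelled by $X$. The result is a branch-decomposition of $\lambda^{\mathcal P}$. Each remaining edge has width equal to $\lambda$ of a union of parts, which is a cut already present in $T$, so it is at most $k$. Each new leaf edge has width $\lambda(X)\le k$ by hypothesis.

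For the grouping step, I will handle one part $X$ at a time and use the titanic property of $X$ to find where to place it.
\begin{itemize}
\item Orient each edge $e$ of $T$ whose removal splits $X$ (both sides meet $X$). Let $(A_e,B_e)$ be the induced partition of $E$. Direct $e$ toward the side $B_e$ exactly when $\lambda(X\cap A_e)<\lambda(X)$, i.e.\ when the part of $X$ on the $A_e$ side is "cheap".
\item Take a sink node $v$, found by following the orientations as in the proof of \zcref{lem:balancedcut}. At most three components of $T-v$ meet $X$, so they split $X$ into $X_1,X_2,X_3$.
\item Because $X$ is titanic, some $X_i$ satisfies $\lambda(X_i)\ge\lambda(X)$. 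Together with the orientation, this gives an edge $f$ with one side $S$ such that $\lambda(S\cap X)\ge\lambda(X)$, or such that the portion of $X$ lying in $S$ is costly enough to be moved.
\end{itemize}

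I will then remove the leaves of $X$ from $T$, suppressing degree-$2$ nodes, subdivide the edge $f$, and attach there a subtree (any subcubic tree) containing all leaves of $X$. Three kinds of width must be checked afterwards:
\begin{itemize}
\item edges inside the new subtree, whose widths are $\lambda(Y)$ for $Y\subseteq X$;
\item the attaching edge, of width $\lambda(X)\le k$;
\item old edges $e$, whose sides change from $A_e$ to $A_e\setminus X$ or $A_e\cup X$.
\end{itemize}
For the old edges, submodularity gives $\lambda(A_e\setminus X)\le\lambda(A_e)+\lambda(E\setminus X)-\lambda(E\setminus (X\cup A_e))$. Choosing the side where $\lambda(X\cap A_e)\ge\lambda(X)$ and using the symmetric form $\lambda(A_e\cup X)+\lambda(A_e\cap X)\le\lambda(A_e)+\lambda(X)$ yields $\lambda(A_e\cup X)\le\lambda(A_e)\le k$. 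This uncrossing, oriented by the titanic property, is the core inequality.

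For the edges inside the new subtree, I intend to copy the restriction of $T$ to $X$ instead of using an arbitrary tree. Its edges have widths $\lambda(A_e\cap X)$. I will bound these by the same uncrossing, $\lambda(A_e\cap X)\le\lambda(A_e)+\lambda(X)-\lambda(A_e\cup X)$, on the edges where $\lambda(A_e\cup X)\ge\lambda(X)$.

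The main obstacle I expect is making the orientation argument consistent. The side of each edge on which to apply uncrossing must be compatible with a single attachment point. This is where titanicity enters: it excludes a node at which all three branches are cheap.

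I will also need a potential argument showing that processing one part does not break the grouping of parts already handled. Since already-grouped parts sit inside subtrees hanging off single edges, I plan to treat each as a single element and to observe that the modification only regrafts $X$. If this fails, I will fall back on choosing $(T,\mathcal L)$ to minimize the total number of edges that split some part, and argue by contradiction using the same uncrossing.
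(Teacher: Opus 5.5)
The paper gives no proof of this lemma; it is imported from Hlin\v{e}n\'y and Oum, so your proposal has to stand on its own. Its overall shape is sound: regraft one part at a time as a single leaf, bound old edges by the uncrossing $\lambda(S\cup X)\le\lambda(S)$ whenever $\lambda(X\cap S)\ge\lambda(X)$, and use the three-part condition at the attachment node. There is, however, a genuine gap exactly at the step you flag. You need \emph{every} edge of $T$, not only the three at $v$, to have its $v$-side $S$ satisfy $\lambda(X\cap S)\ge\lambda(X)$. A node with no arc leaving it, which is what following arcs produces once some edges are unoriented, controls only its incident edges. Titanicity at $v$ says nothing about distant edges.

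This really fails. Over $\mathbb Q$, let $M$ be the direct sum of a parallel pair $\{p_1,p_2\}$ and the rank-$3$ matroid on $a=e_1$, $b=e_2$, $z=e_3$, $c_1=e_1+e_2$, $c_2=e_2+e_3$, $c_3=e_1+e_3$. Let $X=\{p_1,p_2,z\}$; it is titanic with $\lambda(X)=1$, since every part containing $z$ has $\lambda\ge 1$. Build a width-$2$ decomposition as follows: a node $v$ carries leaves $p_1$ and $a$; a path $v\,m_1\,m_2\,\rho$ has leaf $p_2$ at $m_1$ and leaf $b$ at $m_2$; and $\rho$ splits into $\{z,c_1\}$ and $\{c_2,c_3\}$. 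Under your rule no edge at $v$ is oriented at all, so $v$ qualifies as a sink. Yet hanging $X$ anywhere near $v$ turns the cut of $m_2\rho$ (which is directed away from $v$) into $\{a,b,p_1,p_2,z\}$ versus $\{c_1,c_2,c_3\}$, of width $3>2$.

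The missing ingredient is a global compatibility statement, and it comes from posimodularity, not from the three-part condition. Call a side $S$ of an edge heavy if $\lambda(X\cap S)\ge\lambda(X)$. The titanic condition with one empty part makes at least one side of every edge heavy. Suppose two edges had disjoint sides $U$ and $Z$ with $E\setminus U$ and $E\setminus Z$ both light. Posimodularity, $\lambda(P)+\lambda(Q)\ge\lambda(P\setminus Q)+\lambda(Q\setminus P)$ (submodularity for $P$ and $E\setminus Q$, plus symmetry), applied with $P=X\setminus U$ and $Q=X\setminus Z$, would give $2\lambda(X)>\lambda(X\setminus U)+\lambda(X\setminus Z)\ge\lambda(X\cap Z)+\lambda(X\cap U)\ge 2\lambda(X)$, a contradiction.

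Hence the heavy half-trees of edges having only one heavy side pairwise intersect. By the Helly property of subtrees of a tree, they share a node $p$. Every edge then has $p$ on a heavy side, so your uncrossing bounds all old edges simultaneously.

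Only now is the three-part condition needed. Choose a branch $B$ at $p$ with $\lambda(X\cap B)\ge\lambda(X)$, subdivide that edge, and hang one leaf labelled $X$ there. The two halves have widths $\lambda(B\cup X)\le\lambda(B)$ and $\lambda((E\setminus B)\cup X)\le\lambda(E\setminus B)$. Then iterate over the parts in the partially contracted function, where the remaining parts stay titanic.

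Two smaller points. Copying $T$ restricted to $X$ is unnecessary: those widths vanish in $\lambda^{\mathcal P}$, and your bound covers only some of those edges anyway. Also, your first uncrossing inequality should read $\lambda(A_e\setminus X)\le\lambda(A_e)+\lambda(X)-\lambda(X\setminus A_e)$.
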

In particular, this implies that if $M=(E,\mathcal{I})$ is a matroid of branch-width at most $k$
and~$\mathcal P$ is a titanic partition on $E$ such that $\lambda_M(X)\leq k$ for all $X\in \mathcal P$, 
then the branch-width of the partitioned matroid~$(M,\mathcal{P})$ is at most $k$. This is a useful tool to find an optimal branch-decomposition using our preferred cuts.

Now let us introduce a robust partition.
A subset $X$ of $E$ is \emph{robust} with respect to~$\lambda$ if 
for all partitions $(X_1,X_2)$ of $X$, we have $\lambda(X_1)\ge \lambda(X)$ or $\lambda(X_2)\ge \lambda(X)$.
A partition $\mathcal{P}$ of $E$ is called \emph{robust} with respect to $\lambda$ if every part of $\mathcal{P}$ is robust with respect to $\lambda$.

\begin{lemma}[label=lem:robust_partition]
    Let $\lambda $ be a connectivity function on subsets of a finite set $E$.
    Let $\mathcal P$ be a robust partition of~$E$ such that
    \begin{enumerate}[label=\rm(\roman*)]
\item\label{itm:not1} $\lambda(\bigcup_{X\in \mathcal Q} X)\neq1$ for all~$\mathcal Q\subseteq \mathcal P$,
        and     
        \item\label{itm:noncrossing} for all $X\subseteq E$, if there are at least two parts $X_1,X_2\in \mathcal P$ such that $X\cap X_i\neq \emptyset$ and $X\cap X_i\neq X_i$ for $i\in\{1,2\}$, then  $\lambda(X)>1$.
    \end{enumerate}
If $\lambda$ has a branch-decomposition of width at most $1$, then 
    $\lambda(X)=0$ for all $X\in \mathcal P$.
\end{lemma}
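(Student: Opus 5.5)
Fix a branch-decomposition $(T,\mathcal L)$ of $\lambda$ of width at most $1$. Every edge $e$ of $T$ induces a set $Y_e$ with $\lambda(Y_e)\le 1$. I will combine this with hypotheses \ref{itm:noncrossing} and \ref{itm:not1}. First, by \ref{itm:noncrossing}, no $Y_e$ properly splits two distinct parts of $\mathcal P$. So each edge $e$ falls into one of two types: either $Y_e$ is a union of parts, or it splits exactly one part $X\in\mathcal P$ and is a union of parts on the other parts.

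Suppose, for contradiction, that some $X\in\mathcal P$ has $\lambda(X)\ge 1$. By \ref{itm:not1} applied to $\mathcal Q=\{X\}$, we get $\lambda(X)\ge 2$. If $\abs X=1$, the leaf edge at the element of $X$ has width $\lambda(X)\ge 2$, a contradiction. So assume $\abs X\ge 2$.

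Let $T_X$ be the minimal subtree of $T$ spanning the leaves of $X$. I aim to find an edge $e$ with $Y_e\subseteq X$ (on one side) giving a ``good'' split of $X$. Use the standard argument for trees: orient the edges of $T_X$ and find a node $v$ of $T_X$ whose incident edges (at most three in $T$) split the leaves of $X$ into at most three nonempty blocks $X_1,X_2,X_3$, or into two blocks if $v$ has degree $2$ in $T_X$. Each block is $X\cap Z_f$, where $Z_f$ is the side of an edge $f$ at $v$ away from $v$.

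Since $X$ is robust (not titanic), I would instead choose an edge $f$ of $T_X$ directly. Every edge $f$ of $T_X$ splits $X$ into two nonempty sets $X\cap Y_f$ and $X\setminus Y_f$. By robustness, one of them, say $X_1$, has $\lambda(X_1)\ge\lambda(X)\ge2$.

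Now I need to relate $\lambda(X_1)$ to $\lambda(Y_f)\le1$. Here $Y_f=X_1\cup U$, where $U$ is a union of parts other than $X$ (by the type analysis). Hypothesis \ref{itm:not1} gives $\lambda(U)\in\{0\}\cup[2,\infty)$, and likewise for unions with or without $X$. Apply submodularity to $Y_f$ and $X$:
\[
\lambda(Y_f)+\lambda(X)\ge\lambda(X_1)+\lambda(X\cup U).
\]
If $X_1=X\cap Y_f$, this gives $\lambda(X\cup U)\le\lambda(Y_f)+\lambda(X)-\lambda(X_1)\le 1$. Since $X\cup U$ is a union of parts, \ref{itm:not1} forces $\lambda(X\cup U)=0$. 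If instead $X_1=X\setminus Y_f$, apply the same argument to the complement of $Y_f$, which is symmetric and has the same width.

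So for every edge $f$ of $T_X$, the ``heavy side'' yields a union of parts $X\cup U_f$ with $\lambda=0$. The main obstacle is turning this into a contradiction. My plan is to choose $f$ so that the light piece $X\setminus X_1$ is a single element, i.e.\ take $f$ adjacent to a leaf of $X$ in $T_X$, with care about which side is heavy. Alternatively, take $f$ to be an edge of $T_X$ chosen extremally so that the heavy side contains as few elements of $X$ as possible, and derive a contradiction by moving along $T_X$ using robustness again at the next node.

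Concretely, the steps are as follows. Orient each edge $f$ of $T_X$ toward its heavy side, and follow the orientation to a sink node $v$. At $v$, all incident $T_X$-edges point in, so each of the two or three blocks $X_i$ around $v$ has a complement in $X$ that is heavy. For two blocks, this contradicts robustness directly: both sides would have to be light under the orientation. For three blocks, combine the zero-connectivity unions $X\cup U_f$ from each of the three edges via submodularity and posimodularity to deduce $\lambda(X)\le 1$, contradicting $\lambda(X)\ge2$. I expect this three-block sink case to be the delicate step.
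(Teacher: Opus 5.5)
Your first four steps are sound, and the identity you derive in the fourth step is exactly what is needed, but the endgame as written fails. Take a sink $v$ with two $T_X$-edges $f_1,f_2$. Both edges induce the same bipartition $\{X_1,X_2\}$ of $X$. Under your convention (orient toward the heavy block), $f_1$ pointing into $v$ means $\lambda(X_2)\ge\lambda(X)$, and $f_2$ pointing into $v$ means $\lambda(X_1)\ge\lambda(X)$. So both blocks are heavy, which robustness permits. Your sentence ``both sides would have to be light'' inverts your own convention, and robustness gives no contradiction here. Such a sink can genuinely occur: it happens when both blocks are heavy and the tie is broken toward $v$ on both edges, and you never specify how ties are broken. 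The three-block case, meanwhile, is stated as an expectation rather than proved.

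The repair uses only your fourth step, rewritten via symmetry of $\lambda$. Let $f$ be an edge of $T_X$ whose designated heavy block lies on side $Y$. You showed $\lambda(X\cup Y)=0$. Hence the other side $W=E\setminus Y$ satisfies $\lambda(W\setminus X)=\lambda(E\setminus(X\cup Y))=0$.

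At a sink $v$, every $T_X$-edge at $v$ has its heavy block on $v$'s side. So each branch $W_i$ of $T$ at $v$ through a $T_X$-edge satisfies $\lambda(W_i\setminus X)=0$. Any other branch at $v$ is disjoint from $X$ and has $\lambda\le 1$, being a side of an edge of width at most $1$. There is at most one such branch, and none if $v$ has three $T_X$-edges or is a leaf of $T_X$. These sets partition $E\setminus X$, and $\lambda(P\cup Q)\le\lambda(P)+\lambda(Q)$ for disjoint $P,Q$. Therefore $\lambda(X)=\lambda(E\setminus X)\le 1$, contradicting $\lambda(X)\ge 2$.

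So the three-block case is actually the easy one: it gives $\lambda(X)=0$ directly, with no posimodularity needed. Leaf sinks are covered by the same argument.

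With this fix, your argument is a valid variant of the paper's. The paper orients every edge of $T$ by the condition $\lambda(X_{e,v}\setminus A)\le\lambda(X_{e,v})$. It then chooses the bad part $A$ whose lowest common ancestor is deepest, so that the parts hanging below the sink are good. You instead apply (i) to $X\cup U_f$ on edges that split $X$. This yields the zero bound on each branch directly and needs no extremal choice of the bad part.
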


\begin{proof}
    Let $(T,\mathcal L)$ be a branch-decomposition of $\lambda$ of width at most $1$. 
    We may assume that $T$ has no node of degree~$2$.
    If $\mathcal P=\{E\}$, then trivially $\lambda(E)=0$. 
    Thus we may assume that $\abs{\mathcal P}>1$.
    If~$\abs{E}=2$, then by \ref{itm:not1} and the fact that the branch-width of $\lambda$ is at most $1$, we deduce easily that~$\lambda(X)=0$ for all $X\in\mathcal P$.
    Thus we may assume that $\abs{E}\ge 3$.

    For each $e\in E(T)$ and a node $v$ of~$T$, let $X_{e,v}$ be the set of elements of $E$ that are mapped by $\mathcal{L}$ to nodes in the component of $T-e$ not containing $v$.
    We say that a part $X\in \mathcal{P}$ is \emph{good} if~$\lambda(X)=0$ and \emph{bad} otherwise.
    Suppose that there is at least one bad part in $\mathcal{P}$.

    First, suppose that $X\in \mathcal{P}$ is a part such that $X=X_{e,v}$ for some $e\in E(T)$ and an end~$v$ of~$e$.
    The width of $e$ is $\lambda(X)\neq 1$ by \ref{itm:not1}, while the width of~$e$ is at most~$1$ since the width of $(T,\mathcal{L})$ is at most~$1$.
    Therefore, we deduce that $\lambda(X)=0$ so $X$ is a good part.

    Now choose an arbitrary non-leaf node $r\in V(T)$ to be a root of $T$.
    We regard $T$ as a rooted tree with the root $r$ and we will use the terminologies such as ancestors and descendants with respect to the root $r$.
    Note that since $\abs{E}\geq 3$, $T$ has at least one non-leaf node.
For each part~$X\in \mathcal{P}$, let $t_X\in V(T)$ be the lowest common ancestor of $\mathcal{L}(X)$.
    Let $A\in \mathcal{P}$ be a bad part such that the distance between $t_A$ and $r$ is maximized among all bad parts.
Note $\abs{A}\ge2$, since every singleton part equals some $X_{e,v}$ and is therefore good.
    
    Next we define an orientation on $E(T)$ as in the proof of \cite[Proposition 2.4]{KO2026}.
    Let us orient an edge $e=uv$ of~$T$ toward $v$ if $\lambda(X_{e,v}\setminus A)\leq \lambda(X_{e,v})$ and $A\setminus X_{e,v}\neq\emptyset$.
    Since $A$ is robust, we have~$\lambda(X_{e,u}\cap A)\geq \lambda(A)$ or $\lambda(X_{e,v}\cap A)\geq \lambda(A)$.
    If $\lambda(X_{e,u}\cap A)\geq \lambda(A)$, then by the submodularity,
    \[\lambda(X_{e,v})+\lambda(A)=\lambda(X_{e,u})+\lambda(A)\geq \lambda(X_{e,u}\cap A)+\lambda(X_{e,u}\cup A)\geq \lambda(A)+\lambda(X_{e,v}\setminus A),\]
    so we have $ \lambda(X_{e,v}\setminus A)\leq \lambda(X_{e,v})$.
    Similarly, if $\lambda(X_{e,v}\cap A)\geq \lambda(A)$, we have $\lambda(X_{e,u}\setminus A)\leq \lambda(X_{e,u})$.
    Observe that if $A\setminus X_{e,v}=\emptyset$, then $e$ is oriented toward $u$ but is not oriented toward $v$.
    Thus every edge of $T$ receives at least one orientation.
    Note that an edge $e$ can be oriented in both directions.

For each $v\in E$, since $\lambda(\{v\}\setminus A)\leq \lambda(\{v\})$, each edge incident with a leaf is oriented away from the leaf.
    So, we assume that every edge incident with a leaf is not oriented toward the leaf by removing such an orientation of the edge.

    By the pigeonhole principle, $T$ has a node $s$ such that every edge incident with $s$ is oriented toward~$s$. This can be seen from the fact that in a tree obtained from $T$ by contracting all edges directed in both ways, the number of edges is less than the number of nodes.
    Since every edge incident with a leaf is not oriented toward the leaf, 
    $s$ is not a leaf and so $s$ has degree~$3$.

    Let $e_1,e_2,e_3$ be the edges of $T$ incident with $s$.
    We may assume that if $r\neq s$,
    then the path from~$r$ to~$s$ on~$T$ contains $e_1$.
    Let $v_1,v_2,v_3$ be the endpoints of $e_1,e_2,e_3$ other than $s$, respectively.
    Let $X_i=X_{e_i,s}$ for each $i\in\{1,2,3\}$.
    Since $e_i$ is oriented toward $s$, we have 
    \[
    \lambda(X_i\setminus A)\le \lambda(X_i)\text{ and } A\setminus X_i\neq\emptyset. 
    \] 
    Thus at least two of $X_1\cap A$, $X_2\cap A$, and $X_3\cap A$ are nonempty.

    If $B\in \mathcal P$ has the property that $B\neq A$ and 
    at least two of $X_1\cap B$, $X_2\cap B$, and $X_3\cap B$ are nonempty,
    then there exists $i\in\{1,2,3\}$ such that 
    all of $X_i\cap A$, $X_i\cap B$, $(E\setminus X_i)\cap A$, and $(E\setminus X_i)\cap B$ are nonempty.
    This contradicts the assumption that $\lambda(X_i)\le 1$ and \ref{itm:noncrossing}.
    Therefore for each $B\in \mathcal P$ with $B\neq A$, 
    we have $B\subseteq X_i$ for some $i\in\{1,2,3\}$.

    As we chose $A$ to maximize the distance from the root to $t_A$,
    every part $B\in\mathcal P$ in $X_2$ or $X_3$ is good
    and therefore $\lambda(X_2\setminus A)=0$ and $\lambda(X_3\setminus A)=0$
    by the submodular inequality
    because each of $X_2\setminus A$ and $X_3\setminus A$ is the disjoint union of sets whose $\lambda$-values are zero.
    Note that $\lambda(X_1\setminus A)\le\lambda(X_1)\le1$.
    Now, $\lambda(A)=\lambda(E\setminus A)\le \lambda(X_1\setminus A)+\lambda(X_2\setminus A)+\lambda(X_3\setminus A)\le 1$.
    By \ref{itm:not1}, $\lambda(A)=0$, contradicting the assumption that $A$ is bad.
\end{proof}

\subsection{Constructing a new matroid of branch-width \texorpdfstring{$1$}{1} or larger}

Our first operation is taking the tensor product with $\left(\begin{smallmatrix} 1 & 0 & 1 \\ 0 & 1 & 1 \end{smallmatrix}\right)$.
Applying this operation to a representation of a matroid, we can construct a new represented matroid of branch-width at least~$1$.
We first provide an upper bound of the branch-width of $M\left(\left(\begin{smallmatrix} 1 & 0 & 1 \\ 0 & 1 & 1 \end{smallmatrix}\right)\otimes A\right)$.

\begin{lemma}\label{lem:tensor}
    Let $A$, $B$ be matrices over a field $\mathbb F$.
    Then 
    the branch-width of $M(B\otimes A)$ is at most 
    $\max(\rk(B) \bw(M(A)), \bw(M(B)))$.
\end{lemma}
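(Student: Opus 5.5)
The natural plan is to build a branch-decomposition of $M(B\otimes A)$ by substituting one decomposition into the other. The ground set of $M(B\otimes A)$ is $E_B\times E_A$, where the column indexed by $(b,a)$ is $b\otimes a$. Write $V_B$ and $V_A$ for the column spaces, and $U_Y=\operatorname{span}(Y)$ for $Y\subseteq E_A$. I use the subspace form of the connectivity function: $\lambda(X)=\dim(\operatorname{span}X\cap\operatorname{span}(E\setminus X))$, the same identity used for subspace arrangements. The trivial cases $\abs{E_A}\le1$ or $\abs{E_B}\le1$ I would handle separately, where the decomposition is just a copy of one of the two trees.

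\textbf{Choice of substitution.} I would take an optimal branch-decomposition $(T_A,\mathcal L_A)$ of $M(A)$. At each leaf $\mathcal L_A(a)$ I attach a copy of an optimal branch-decomposition $T_B$ of $M(B)$, whose leaves are labelled $(b,a)$ for $b\in E_B$. Concretely, I subdivide an edge of $T_B$ and identify the new node with the leaf of $T_A$.

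The order of substitution matters. If instead one substituted copies of $T_A$ into the leaves of $T_B$, then parallel columns of $B$ would give cuts $\{b\}\times Y$ of connectivity $r_A(Y)$, which is unbounded. With $T_B$ substituted into $T_A$, there are two kinds of edges to check.

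\textbf{Outer edges.} An edge of $T_A$ induces $X=E_B\times Y$. Then $\operatorname{span}X=V_B\otimes U_Y$ and $\operatorname{span}(E\setminus X)=V_B\otimes U_{E_A\setminus Y}$. Their intersection is $V_B\otimes(U_Y\cap U_{E_A\setminus Y})$, because $(V_B\otimes U)\cap(V_B\otimes U')=V_B\otimes(U\cap U')$. Its dimension is $\rk(B)\,\lambda_{M(A)}(Y)\le\rk(B)\bw(M(A))$. The edge joining a copy of $T_B$ to $T_A$ induces $E_B\times\{a\}$, which is also of this form.

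\textbf{Inner edges.} An edge inside the copy of $T_B$ at $a$ induces $X=Z\times\{a\}$. Here $\operatorname{span}X=\operatorname{span}(Z)\otimes a$, and the complement spans $\operatorname{span}(E_B\setminus Z)\otimes a+V_B\otimes U_{E_A\setminus\{a\}}$. There are three cases.
\begin{itemize}
\item If $a\notin U_{E_A\setminus\{a\}}$, I choose a complement of $U_{E_A\setminus\{a\}}$ containing $a$ and project onto its $a$-component. This shows the intersection is $(\operatorname{span}Z\cap\operatorname{span}(E_B\setminus Z))\otimes a$, of dimension $\lambda_{M(B)}(Z)\le\bw(M(B))$.
\item If $a=0$, the width is $0$.
\item Otherwise $a$ lies in the span of the other columns, and the width is at most $r_B(Z)\le\rk(B)$. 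In this case $\lambda_{M(A)}(\{a\})=1$, so $\bw(M(A))\ge1$ because $\abs{E_A}\ge2$, and hence $\rk(B)\le\rk(B)\bw(M(A))$.
\end{itemize}

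The main obstacle is exactly this inner-edge case analysis. In particular one must make sure that the degenerate situation $\bw(M(A))=0$ never produces a bad cut. That holds because then every element of $A$ is a loop or a coloop, so only the first two cases occur.
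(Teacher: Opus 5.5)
Your proposal is correct and follows the paper's proof. Both use the same substitution of rooted copies of an optimal decomposition of $M(B)$ at the leaves of an optimal decomposition of $M(A)$, and both show that outer cuts $E_B\times Y$ have width $\rk(B)\lambda_{M(A)}(Y)$. The only difference is in the inner cuts: the paper splits on whether $\bw(M(A))\ge 1$ (bounding by $\rk(B)$) or $\bw(M(A))=0$ (using submodularity with $\lambda_{M(A)}(\{v\})=0$), whereas you split element-wise into loop, coloop or neither, and compute the coloop case exactly via the subspace-intersection form.
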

\begin{proof}
    Let $E_A$, $E_B$ be the set of column indices of $A$ and $B$, respectively.
    Let $M=M(A)$ and $N=M(B\otimes A)$. 
    Note that $E(M)=E_A$ and $E(N)=E_B\times E_A$.
    By \zcref{lem:tensor_rank}, $\lambda_N(E_B\times X)=\rk(B)\lambda_M(X)$ for all $X\subseteq E_A$.

    If $\abs{E_A}\le 1$, then it is easy to see that 
    $\bw(N)\le \bw(M(B))$. 
    Thus we may assume that $\abs{E_A}>1$. 
    If $\abs{E_B}\le 1$, then trivially $\bw(N)\le \rk(B)\bw(M(A))$.
    Hence we may also assume that $\abs{E_B}>1$.
    Let $(S,\mathcal{K})$ be a branch-decomposition of $M$ witnessing $\bw(M)=a$.
    We construct a branch-decomposition~$(T,\mathcal{L})$ of $N$ as follows.
    Let $T$ be a subcubic tree obtained from the disjoint union of~$S$ and $\abs{E_A}$ copies of rooted branch-decompositions of $M(B)$ of width $\bw(M(B))$
    by identifying each leaf $\ell=\mathcal{K}(v)$ for~$v\in E_A$ with the root of a rooted branch-decomposition.
    We define $\mathcal{L}$ so that for each~$v\in E_A$, elements in $E_B\times \{v\}$ are mapped by~$\mathcal{L}$ to distinct leaves of a rooted branch-decomposition whose root is identified with $\mathcal{K}(v)$,
    where this mapping should respect the mapping in the branch-decomposition of $M(B)$ from $E_B$ to the leaves.

    We claim that $(T,\mathcal L)$ has width at most $\max\{\rk(B)a,\bw(M(B))\}$. 
    Let $e\in E(T)$ and let $(X_e,Y_e)$ be the partition of $E_B\times E_A$ induced by $\mathcal{L}$ and the connected components of $T\setminus e$.
    It is enough to show that~$\lambda_N(X_e)\le \max\{\rk(B)a,\bw(M(B))\}$. 
    If $X_e=E_B\times X$ for some $X\subseteq E_A$, then 
    by \zcref{lem:tensor_rank},
    $\lambda_N(X_e)=\rk(B)\lambda_M(X)$ because~$(X,E_A\setminus X)$ is a partition induced by $S\setminus e$.
    Thus we may assume that there is some $v\in E_A$ such that $X_e\cap (E_B\times \{v\})\neq \emptyset$ and $Y_e\cap (E_B\times \{v\})\neq \emptyset$.

    Then by the construction of $T$, either $X_e\subseteq (E_B\times \{v\})$ or $Y_e\subseteq (E_B\times \{v\})$.
    Without loss of generality, assume that $X_e\subseteq (E_B\times \{v\})$.
    Let $X_e'$ be a subset of $E_B$ such that $X_e=X_e'\times\{v\}$.
    Let~$Y_e'=E_B\setminus X_e'$.
    Then \[ \lambda_N(X_e)\le \rk_N(X_e)=\rk_M (\{v\})\rk_{M(B)}(X_e')\le \rk_{M(B)} (X_e')\le \rk(B)\] by \zcref{lem:tensor_rank}.
    If $a\ge 1$, then $\lambda_N(X_e)\le \rk(B)\le \rk(B)a$ as desired.
    So we may assume that $a=0$ and therefore $\lambda_M(\{v\})=0$.
    Then by the submodular inequality and \zcref{lem:tensor_rank}, we have 
    \begin{multline*}
        \lambda_N(X_e)=\lambda_N(Y_e)\leq \lambda_N(E_B\times (E_A\setminus\{v\}))+\lambda_N(Y_e'\times\{v\})\\
        = \rk(B)\lambda_M(\{v\})+ \rk_M(\{v\})\lambda_{M(B)}(Y_e')
        \le \lambda_{M(B)}(Y_e').
    \end{multline*}
    Since $(X_e',Y_e')$ is a partition arising from a branch-decomposition of $M(B)$ of width $\bw(M(B))$, we deduce that $\lambda_{M(B)}(Y_e')\le \bw(M(B))$. Therefore $\lambda_N(X_e)\le \bw(M(B))$ if $a=0$.
    This proves that the branch-width of $N$ is at most $\max\{\rk(B)a,\bw(M(B))\}$.
\end{proof}

Here is our key proposition for the first reduction $\left(\begin{smallmatrix} 1 & 0 & 1 \\ 0 & 1 & 1 \end{smallmatrix}\right)\otimes A$.
\begin{proposition}[label=thm:branchwidth_0_1]
    Let $M$ be a simple matroid on $E$ represented by a matrix $A$ over a field $\mathbb F$.
    Let $N=(E',\mathcal{I}')$ be a matroid represented by $\left(\begin{smallmatrix} 1 & 0 & 1 \\ 0 & 1 & 1 \end{smallmatrix}\right)\otimes A$.
    Then we have the following. 
    \begin{itemize}
        \item $\bw(N)>1$ if $\bw(M)>0$.
        \item $\bw(N)=1$ if $\bw(M)=0$.
\end{itemize}
\end{proposition}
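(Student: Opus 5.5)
The plan is to identify $B=\left(\begin{smallmatrix} 1 & 0 & 1 \\ 0 & 1 & 1 \end{smallmatrix}\right)$ with $M(B)=U_{2,3}$, which has rank $2$ and branch-width $\min(2,1,1)=1$ by \zcref{lem:uniformbw}. Write $E'=E_B\times E$, and for $v\in E$ let $P_v=E_B\times\{v\}$. The columns of $N$ indexed by $P_v$ are $u\otimes a_v$, $w\otimes a_v$ and $(u+w)\otimes a_v$, where $a_v$ is the $v$-th column of $A$ and $u,w$ are the standard basis vectors of $\mathbb F^2$. By \zcref{lem:tensor_rank}, $\lambda_N(\bigcup_{v\in Q}P_v)=2\lambda_M(Q)$ for all $Q\subseteq E$.

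\emph{The case $\bw(M)=0$.} \zcref{lem:tensor} gives $\bw(N)\le\max(2\cdot 0,1)=1$. For the reverse bound, $M$ is simple, so $a_v\neq0$. Each element $x\in P_v$ is then spanned by the other two elements of $P_v$ and has rank $1$, so $\lambda_N(\{x\})=1$. Every branch-decomposition of $N$ has an edge incident with a leaf, and that edge has width $1$. (Here $\abs{E'}\ge3$, so branch-decompositions exist.) Hence $\bw(N)=1$.

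\emph{The case $\bw(M)>0$.} I argue the contrapositive: assume $\bw(N)\le1$ and apply \zcref{lem:robust_partition} to $\lambda_N$ with $\mathcal P=\{P_v:v\in E\}$. The conclusion $\lambda_N(P_v)=0$ for all $v$ gives $\lambda_M(\{v\})=0$ for all $v$. Thus every element is a coloop, the columns of $A$ are linearly independent, and $\bw(M)=0$ by \zcref{lem:branchwidth0}. This requires checking three hypotheses.
\begin{enumerate}[label=\rm(\alph*)]
\item Condition \ref{itm:not1} holds because every union of parts has even $\lambda_N$-value $2\lambda_M(\cdot)$.
\item Robustness: $\lambda_N(P_v)=2\lambda_M(\{v\})\in\{0,2\}$, and the value $0$ is trivial. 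If $\lambda_N(P_v)=2$, then $v$ is not a coloop of $M$, so $r_N(E'\setminus P_v)=r(N)$. A split of $P_v$ is, up to symmetry, into a singleton and a pair. The pair spans all of $F^2\otimes a_v$ and has rank $2$, and its complement still has full rank. Hence the pair has $\lambda_N$-value $2\ge\lambda_N(P_v)$.
\item Condition \ref{itm:noncrossing}: let $X$ cross two parts $P_v$ and $P_w$ with $v\neq w$, and let $Y=E'\setminus X$. Since $P_v$ has three elements and is split, one of $X$ or $Y$ contains two of them and so spans $F^2\otimes a_v$. The other side contains some $c\otimes a_v$ with $c\neq0$. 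Therefore $c\otimes a_v\in\mathrm{span}(X)\cap\mathrm{span}(Y)$, and likewise some $c'\otimes a_w$ with $c'\neq0$ lies in this intersection. Simplicity makes $a_v$ and $a_w$ linearly independent, so these two vectors are independent. This gives $\lambda_N(X)=\dim(\mathrm{span}X\cap\mathrm{span}Y)\ge2>1$.
\end{enumerate}

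The main obstacle is verifying condition \ref{itm:noncrossing} of \zcref{lem:robust_partition}. The key observation is that each crossed triple contributes a vector to the boundary space, and simplicity of $M$ is exactly what makes the contributions of two different triples independent. The remaining edge case $\abs{E}=1$ is trivial, since then $\lambda_M\equiv0$.
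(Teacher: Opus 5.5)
Your proposal is correct and follows essentially the same route as the paper. You get the upper bound from \zcref{lem:tensor}, and for $\bw(M)>0$ you apply \zcref{lem:robust_partition} to the partition into triples $E_v$, checking (i) by the factor $2$ from tensor rank and robustness by the fact that any two elements of a triple span it. Your two deviations are cosmetic: you verify (ii) through the boundary space $\mathrm{span}(X)\cap\mathrm{span}(Y)$ rather than the paper's rank count on the restriction to $E_u\cup E_v$, and you get $\bw(N)\ge 1$ from leaf edges having width $\lambda_N(\{x\})=1$ rather than from a $3$-element circuit together with \zcref{lem:bw_deletion}.
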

\begin{proof}
    Let $F$ be the set of column indices of $\left(\begin{smallmatrix} 1 & 0 & 1 \\ 0 & 1 & 1 \end{smallmatrix}\right)$.
    Then $E'=F\times E$.
    For each $v\in E$, let~$E_v=F\times\{v\}$.

    Note that $A$ is non-zero since $M$ is a simple matroid.
Observe that $N$ contains a circuit of size~$3$, and the restriction of $N$ to a circuit of size~$3$
has branch-width $1$.
    Therefore $\bw(N)\ge 1$.

    By applying \zcref{lem:tensor} with $B:=\left(\begin{smallmatrix} 1 & 0 & 1 \\ 0 & 1 & 1 \end{smallmatrix}\right)$, 
    we deduce that 
    if $\bw(M)=0$, then $\bw(N)= 1$.

    Now, suppose that $\bw(M)>0$.
    To obtain a contradiction, suppose that $\bw(N)\leq 1$ and let~$(T,\mathcal{L})$ be a branch-decomposition of $N$ of width at most~$1$.

    First observe that for each $v\in E$, any two elements in $E_v$ spans $E_v$.
    Thus if $(X_1,X_2)$ is a partition of $E_v$, then we have $\lambda_N(X_1)=\lambda_N(E_v)$ or $\lambda_N(X_2)=\lambda_N(E_v)$.
    Therefore, $\mathcal P=\{E_v\colon v\in E\}$ is a robust partition of $E'$.

    For all $Y\subseteq E$,
    we have $\lambda_N(F\times Y)=2\lambda_M(Y)\neq 1$
    by \zcref{lem:tensor_rank}.

    Moreover, suppose that there is $X\subseteq E'$ and two distinct elements $u,v\in E$ such that $X\cap E_w\neq \emptyset$ and  $X\cap E_w\neq E_w$ for each $w\in \{u,v\}$.
    Let $X_v=X\cap E_v$, $X_u=X\cap E_u$, $Y_v=E_v\setminus X$, and $Y_u=E_u\setminus X$.
    Then $X_u\cup X_v$ and $Y_u\cup Y_v$ are independent in~$N$
    because $\{u,v\}$ is independent in $M$.
    Thus we have
    \begin{align*}
        \lambda_{N}(X)
        &\geq \lambda_{N\setminus (E'\setminus (E_u\cup E_v))}(X_u\cup X_v)\\
        &=\rk_{N}(X_u\cup X_v)+\rk_{N}(Y_u\cup Y_v)-\rk_{N}(E_u\cup E_v)\\
        &=\abs{X_u\cup X_v}+\abs{Y_u\cup Y_v}-4\\
        &= \abs{E_u}+\abs{E_v}-4\\
        &=2.
    \end{align*}
    Therefore, by \zcref{lem:robust_partition}, we conclude that $\lambda_N(E_v)=0$ for all $v\in E$.
    However, this implies that $\lambda_M(\{v\})=0$ for all $v\in E$, and consequently $\bw(M)=0$, yielding a contradiction.
\end{proof}

The following lemma shows that 
if two matrices both represent simple matroids,
then their tensor product also represents a simple matroid.

\begin{lemma}[label=lem:tensor_simple]
    Let $M$ and~$N$ be matroids represented by matrices $A$ and $B$, respectively.
    If both $M$ and~$N$ are simple, then $M(A\otimes B)$ is also a simple matroid.
\end{lemma}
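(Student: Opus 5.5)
The plan is to use the characterization that a matroid represented by a matrix is simple exactly when the matrix has no zero column and no two columns are parallel, meaning no two columns are scalar multiples of each other. We index the columns of $A\otimes B$ by pairs $(i,j)$, where $i$ is a column index of $A$ and $j$ is a column index of $B$. The column of $A\otimes B$ indexed by $(i,j)$ is then the Kronecker product $a_i\otimes b_j$ of the $i$-th column $a_i$ of $A$ and the $j$-th column $b_j$ of $B$. So we need to show two things: every $a_i\otimes b_j$ is non-zero, and no two distinct such vectors are linearly dependent.

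\emph{Non-zero columns.} Since $M$ and $N$ are simple, $a_i\neq 0$ and $b_j\neq 0$. Choose coordinates $p$ and $s$ with $(a_i)_p\neq 0$ and $(b_j)_s\neq 0$. The entry of $a_i\otimes b_j$ in position $(p,s)$ is $(a_i)_p(b_j)_s\neq 0$, so the column is non-zero.

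\emph{No parallel pairs.} Suppose $a_i\otimes b_j = c\,(a_{i'}\otimes b_{j'})$ for some scalar $c$, with $(i,j)\neq(i',j')$. By the previous step $c\neq 0$. We use the standard fact that if $x\otimes y = x'\otimes y'$ with all four vectors non-zero, then $x=\mu x'$ for some non-zero scalar $\mu$ (and correspondingly $y=\mu^{-1}y'$). To see this, view $x\otimes y$ as the rank-one matrix $xy^T$. The column space of $xy^T$ is spanned by $x$, and the column space of $x'y'^T$ is spanned by $x'$. Equal matrices have equal column spaces, so $x$ and $x'$ are parallel; the same argument with row spaces shows $y$ and $y'$ are parallel.

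Applying this with $x=a_i$, $y=b_j$, $x'=c\,a_{i'}$, $y'=b_{j'}$ gives that $a_i$ is parallel to $a_{i'}$ and $b_j$ is parallel to $b_{j'}$. Since $(i,j)\neq(i',j')$, either $i\neq i'$ or $j\neq j'$. If $i\neq i'$, then $\{i,i'\}$ is a circuit of size $2$ in $M$, contradicting simplicity of $M$. If $j\neq j'$, the same contradiction arises in $N$.

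Hence $M(A\otimes B)$ has no loops and no parallel pairs, and is therefore simple. The only step needing care is the rank-one identification of the Kronecker product, and in particular fixing the correspondence between the entry indexing of $a_i\otimes b_j$ and the matrix $a_ib_j^T$; the rest is routine.
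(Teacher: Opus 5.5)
Your proof is correct and follows the same route as the paper, which uses exactly this characterization of simplicity (no zero column, no two linearly dependent columns) and simply asserts that $A\otimes B$ inherits both properties as ``easy to observe.'' You supply the omitted details: the nonzero entry $(a_i)_p(b_j)_s$, and the rank-one reshaping to $a_i b_j^T$ showing that parallel Kronecker columns force parallel factors. Both details are sound.
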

\begin{proof}
    Note that $M(A)$ is simple if and only if $A$ has no zero column and no two columns of $A$ are linearly dependent.
    Suppose that $M(A)$ and $M(B)$ are simple. 
    It is easy to observe that $A\otimes B$ has no zero column 
    and no two columns of $A\otimes B$ are linearly dependent.
\end{proof}

\subsection{Constructing a new matroid to multiply \texorpdfstring{$k$}{k} to branch-width}
Our next operation is to take the tensor product with a $k\times (3k-2)$-Vandermonde matrix.
Applying this operation to a matrix representation of a matroid, we can construct a new represented matroid whose branch-width is $k$ times the branch-width of the original matroid.
We start by stating a well-known fact on Vandermonde matrices.

\begin{lemma}[label=lem:Vandermonde_matrix]
    Let $n$ be an integer and let $\mathbb{F}$ be a fixed field.
    Let $\alpha_1,\ldots,\alpha_n$ be distinct elements of $\mathbb F$
    and let
    \[
        A=
        \begin{pmatrix}
            1 & 1 & 1 & \cdots & 1 \\
            \alpha_1 & \alpha_2 & \alpha_3 & \cdots & \alpha_n \\
            \alpha_1^2 & \alpha_2^2 & \alpha_3^2 & \cdots & \alpha_n^2 \\
            \vdots & \vdots & \vdots & \ddots & \vdots \\
            \alpha_1^{n-1} & \alpha_2^{n-1} & \alpha_3^{n-1} & \cdots & \alpha_n^{n-1}
        \end{pmatrix}.
    \]
    Then the columns of $A$ are linearly independent.
\end{lemma}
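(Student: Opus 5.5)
The plan is to compute the determinant of the square Vandermonde matrix $A$ and show that it is non-zero; then the $n$ columns are linearly independent. I would prove the classical formula
\[
\det A=\prod_{1\le i<j\le n}(\alpha_j-\alpha_i)
\]
by induction on $n$. Once this holds, distinctness of the $\alpha_i$ makes every factor non-zero. Since $\mathbb F$ is a field, the product is then non-zero.

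For the induction step I would perform row operations from the bottom up. For $t=n,n-1,\ldots,2$, replace row $t$ by (row $t$) $-\alpha_1\cdot$(row $t-1$); this does not change the determinant. The new first column becomes $(1,0,\ldots,0)^T$. The entry in row $t$ and column $j$ becomes $\alpha_j^{t-2}(\alpha_j-\alpha_1)$ for $j\ge2$. Expanding along the first column and factoring $(\alpha_j-\alpha_1)$ out of each column $j\ge 2$ leaves the $(n-1)\times(n-1)$ Vandermonde matrix in $\alpha_2,\ldots,\alpha_n$. The induction hypothesis then finishes the step, with base case $n=1$ where the determinant is $1$.

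An alternative argument avoids determinants entirely. Suppose $\sum_j c_j\cdot(\text{column } j)=0$. Then for each row index $t$, $\sum_j c_j\alpha_j^{t}=0$ for $0\le t\le n-1$. Equivalently, the row vector $c^T$ is annihilated on the right by the transpose, so I would instead consider the row space. A vector $(p_0,\ldots,p_{n-1})$ with $\sum_t p_t\alpha_j^t=0$ for all $j$ defines a polynomial of degree less than $n$ with $n$ distinct roots, hence the zero polynomial. So $A^T$ has trivial kernel, and since $A$ is square, $A$ is non-singular. This version is cleaner and is the one I would write.

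There is no real obstacle here. The only point needing care is the square-matrix step, namely that trivial kernel of $A^T$ implies $A$ is invertible, which holds because row rank equals column rank.
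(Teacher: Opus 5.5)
Your proposal is correct, and both of your arguments work over an arbitrary field. The paper gives no proof of this lemma; it quotes it as a well-known fact about Vandermonde matrices, so there is no competing route to compare with, and either of your arguments supplies the missing justification.

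Of the two, the kernel argument is more economical. It needs only two facts: a non-zero polynomial of degree less than $n$ over a field has fewer than $n$ roots, and a square matrix with $\ker A^T=0$ is invertible. The determinant computation proves more than is needed, namely the explicit product formula.

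When you write it up, start directly from $A^Tp=0$. The opening relations $\sum_j c_j\alpha_j^t=0$, for a dependence among the columns of $A$, are not themselves a statement about roots of a polynomial. That is exactly why the detour through $A^T$ and row rank equals column rank is needed.

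Finally, the paper actually uses the lemma in a slightly stronger form: every $k$ columns of the $k\times(3k-2)$ matrix $\mathscr V_k$ are linearly independent. This is what makes $M(\mathscr V_k)$ uniform and ensures that any $k$ elements of $E_v$ span $E_v$ in the titanic-partition argument. It follows immediately from your argument, since every $k\times k$ column submatrix of $\mathscr V_k$ is itself a square Vandermonde matrix on $k$ distinct elements.
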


Let $k\geq 1$ be an integer, let $\mathbb{F}$ be a field with at least $3k-2$ elements, and let $\alpha_1,\ldots,\alpha_{3k-2}\in \mathbb{F}$ be distinct elements.
Let 
\[
    \mathscr{V}_k=\begin{pmatrix}
        1 & 1 & 1 & \cdots & 1 \\
            \alpha_1 & \alpha_2 & \alpha_3 & \cdots & \alpha_{3k-2} \\
            \alpha_1^2 & \alpha_2^2 & \alpha_3^2 & \cdots & \alpha_{3k-2}^2 \\
            \vdots & \vdots & \vdots & \ddots & \vdots \\
            \alpha_1^{k-1} & \alpha_2^{k-1} & \alpha_3^{k-1} & \cdots & \alpha_{3k-2}^{k-1}
    \end{pmatrix}.
\]
By \zcref{lem:Vandermonde_matrix}, $\rk(\mathscr V_k)=k$.

Now let us compute the branch-width of $M(\mathscr{V}_k\otimes A)$.

\begin{proposition}[label=thm:Vandermonde_branchwidth]
    Let $k\geq 1$ be an integer and let $\mathbb{F}$ be a field with at least $3k-2$ elements.
    Let $M$ be a simple represented matroid on $E$ represented by a matrix $A$ over $\mathbb{F}$ and let $N=(E',\mathcal{I}')$ be a matroid represented by~$\mathscr{V}_k\otimes A$. Then the following hold.
    \begin{enumerate}[label=\rm(\alph*)]
\item If $k\ge 2$, then $\bw(N)=k\max\{\bw(M),1\}$.
        \item If $k=1$, then $\bw(N)=\bw(M)$.
    \end{enumerate}
\end{proposition}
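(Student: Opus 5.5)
Part (b) is immediate: when $k=1$, $\mathscr V_1$ is the $1\times 1$ matrix $(1)$, so $\mathscr V_1\otimes A=A$ and $N=M$. For part (a) I will prove the upper and lower bounds separately, writing $F$ for the column index set of $\mathscr V_k$ (so $\abs{F}=3k-2$), $E'=F\times E$, and $E_v=F\times\{v\}$ for each $v\in E$.

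\emph{Upper bound.} I will apply \zcref{lem:tensor} with $B=\mathscr V_k$. This gives $\bw(N)\le\max\{k\bw(M),\bw(M(\mathscr V_k))\}$, since $\rk(\mathscr V_k)=k$. By \zcref{lem:Vandermonde_matrix}, any $k$ columns of $\mathscr V_k$ are independent, so $M(\mathscr V_k)$ is the uniform matroid $U_{k,3k-2}$. By \zcref{lem:uniformbw}, its branch-width is $\min(k,2k-2,\lceil(3k-2)/3\rceil)=k$ for $k\ge2$. Hence $\bw(N)\le k\max\{\bw(M),1\}$.

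\emph{Lower bound $\bw(N)\ge k$.} Since $M$ is simple, $A$ has a nonzero column $v$. By \zcref{lem:tensor_rank}, the restriction of $N$ to $E_v$ is isomorphic to $M(\mathscr V_k)$. By \zcref{lem:bw_deletion} and the computation above, $\bw(N)\ge k$.

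\emph{Lower bound $\bw(N)\ge k\bw(M)$.} This is the main step. My plan is to use the titanic-partition machinery with $\mathcal P=\{E_v:v\in E\}$. For each part, \zcref{lem:tensor_rank} gives $\lambda_N(E_v)=k\lambda_M(\{v\})\le k$. If every $E_v$ is titanic with respect to $\lambda_N$, then \zcref{lem:titanic_partition} applied to any width-$w$ branch-decomposition of $N$ with $w\ge k$ yields $\bw(\lambda_N^{\mathcal P})\le w$. But $\lambda_N^{\mathcal P}$ corresponds to $\lambda_N(F\times Y)=k\lambda_M(Y)$, so its branch-width is exactly $k\bw(M)$, and therefore $\bw(N)\ge k\bw(M)$.

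It remains to check titanicity. Let $E_v=X_1\cup X_2\cup X_3$ be a partition into three disjoint sets. Since $\abs{E_v}=3k-2$, some $X_i$ has at least $k$ elements, and these elements span $E_v$ because every $k$-subset of $E_v$ is a basis of the restriction. I then want $\lambda_N(X_i)\ge\lambda_N(E_v)$. As $X_i$ spans $E_v$, we have $r(X_i)=r(E_v)$ and $r(E'\setminus X_i)\ge r(E'\setminus E_v)$, which give exactly $\lambda_N(X_i)\ge\lambda_N(E_v)$.

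I expect the main obstacle to be bookkeeping rather than ideas. The points to check are the degenerate cases ($\abs{E}\le1$, or $\bw(M)=0$, where the bound $k$ already dominates) and the convention that a branch-decomposition of $N$ of width $\bw(N)\ge k$ satisfies the hypothesis $\lambda_N(E_v)\le\bw(N)$ needed in \zcref{lem:titanic_partition}. Combining the upper bound with the two lower bounds gives $\bw(N)=k\max\{\bw(M),1\}$.
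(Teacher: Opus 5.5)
Your proposal is correct and follows essentially the same route as the paper: the upper bound comes from the tensor-product lemma together with $\bw(U_{k,3k-2})=k$, the bound $\bw(N)\ge k$ comes from restricting to some $E_v$, and the bound $\bw(N)\ge k\bw(M)$ comes from the titanicity of $\{E_v: v\in E\}$ and the titanic-partition lemma. The only difference is cosmetic: you apply that lemma directly with width $w=\bw(N)\ge k$, whereas the paper argues by contradiction after assuming $\bw(M)\ge 2$; your version covers all cases at once.
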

\begin{proof}
    If $k=1$, then $\mathscr{V}_1\otimes A=A$, so the statement is trivial.
    Hence from now on, we may assume that~$k\geq 2$.
    Let $F$ be the set of column indices of $\mathscr{V}_k$.
    Then $E'=F\times E$.
    For each $v\in E$, let $E_v=F\times\{v\}$.

Note that $M(\mathscr V_k)$ is a uniform matroid and by \zcref{lem:uniformbw}, $\bw(M(\mathscr V_k))$ is $k$.
    By applying \zcref{lem:tensor} with $B=\mathscr V_k$, 
    we deduce that 
    $\bw(N)\leq \max\{k\bw(M),k\}$.
    Thus it suffices to show that $\bw(N)\geq \max\{k\bw(M),k\}$.

Let $v\in E$ and consider the restriction $N_v$
    of $N$ to~$E_v$. 
    Let $A_v$ be the submatrix of $A$ consisting of the single column corresponding to~$v$.
    Then $N_v=M(\mathscr V_k\otimes A_v)$
    and so $N_v$ is isomorphic to $M(\mathscr V_k)$
    because $A_v$ is non-zero. 
    Therefore $\bw(N_v)\geq \bw(M(\mathscr V_k))= k$.
By \zcref{lem:bw_deletion}, we deduce that $\bw(N)\geq k$.

    Therefore we may assume that $\bw(M)\geq 2$.
    Suppose that there is a branch-decomposition $(T,\mathcal{L})$ of~$N$ of width less than $k\bw(M)$.
    First we would like to show that $E_v$ is titanic with respect to~$\lambda_N$ for each $v\in E$.
    Let $X_1,X_2,X_3$ be pairwise disjoint subsets of $E_v$ such that $X_1\cup X_2\cup X_3=E_v$.
    Since~$\abs{E_v}=3k-2$, there must be some $i\in\{1,2,3\}$ such that $\abs{X_i}\geq k$.
    By \zcref{lem:Vandermonde_matrix}, $X_i$ spans~$E_v$, so we have $\rk_N(X_i)=\rk_N(E_v)$.
    Moreover, since $E'\setminus X_i\supseteq E'\setminus E_v$, we have $\rk_N(E'\setminus E_v)\leq \rk_N(E'\setminus X_i)$.
    Therefore 
    \[\lambda_N(X_i)=\rk_N(X_i)+\rk_N(E'\setminus X_i)-\rk_N(E')\geq \rk_N(E_v)+\rk_N(E'\setminus E_v)-\rk_N(E')= \lambda_N(E_v).\]
    Hence $E_v$ is titanic with respect to $\lambda_N$.
    Thus, $\mathcal{P}=\{E_v\colon v\in E\}$ is a titanic partition of $E'$ satisfying $\lambda_N(E_v)\leq k$ for all $v\in E$.

    Now we use \zcref{lem:titanic_partition} to find a branch-de\-com\-po\-si\-tion~$(S,\mathcal{K})$ of a partitioned matroid~$(N,\mathcal{P})$ with width less than $k\bw(M)$.
    Then by letting $\mathcal{K}'(v)=\mathcal{K}(E_v)$, we obtain a branch-de\-com\-po\-si\-tion~$(S,\mathcal{K}')$ of $M$.
    Let us compute the width of $(S,\mathcal K')$.
    Let $f\in E(S)$ and let $(X_f',Y_f')$ be the partition of~$E$ induced by $\mathcal{K}'^{-1}$ from the components of $S\setminus f$.
    Observe that the submatrix of~$\mathscr{V}_k\otimes A$ induced by taking columns corresponding to~$F\times X_f'$ is equal to $\mathscr V_k\otimes A'$ for a submatrix $A'$ of $A$ induced by taking columns corresponding to~$X_f'$.
    Thus by \zcref{lem:tensor_rank}, we have $\lambda_N^\mathcal{P}(\{E_v:v\in X_f'\})=k\lambda_M(X_f')$, so the width of the edge $f$ in $(S,\mathcal{K})$ is $k$ times the width of~$f$ in $(S,\mathcal{K}')$.
    Hence the width of~$(S,\mathcal{K})$ is $k$ times the width of~$(S,\mathcal{K}')$.
    This implies that the width of~$(S,\mathcal{K}')$ is less than $\bw(M)$, which is a contradiction.
    Therefore, we have $\bw(N)\geq k\bw(M)$.
\end{proof}

Note that we cannot distinguish whether $\bw(M)=0$ or $\bw(M)=1$ by using \zcref{thm:Vandermonde_branchwidth}.
This is the reason why we need \zcref{thm:branchwidth_0_1}.

\subsection{Constructing a new matroid adding \texorpdfstring{$1$}{1} to branch-width}
The last operation we define is an operation that adds $1$ to the branch-width.
Let $\mathbb{F}$ be a field with at least $4$ elements and let $\alpha,\beta,\gamma,\delta\in \mathbb F$ be distinct elements.
For a matrix $A$ over a field $\mathbb{F}$, let $d^+(A)$ be a matrix
\[
    d^+(A)\coloneqq \left( 
        \begin{array}{c|c|c|c} 
            A & A & A & A \\\hline
            \alpha \mathbf{1} & \beta \mathbf{1} & \gamma \mathbf{1} & \delta \mathbf{1}
        \end{array}\right)
\]
that has one more row than $A$, 
where $\mathbf{1}$ represents an all-$1$ matrix with one row.
It is easy to see the following lemma.
\begin{lemma}\label{lem:dplussimple}
    Let $A$ be a matrix over a field $\mathbb F$ with at least four elements.
    If $M(A)$ is simple, then $M(d^+(A))$ is simple.
    \hfill\qedsymbol
\end{lemma}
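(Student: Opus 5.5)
We need to show two things: that $d^+(A)$ has no zero column, and that no two of its columns are parallel. We use the characterization already noted in the paper: $M(A)$ is simple if and only if $A$ has no zero column and no two columns of $A$ are linearly dependent.

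Index the columns of $d^+(A)$ by pairs $(c,v)$, where $c\in\{\alpha,\beta,\gamma,\delta\}$ records the block and $v$ is a column index of $A$. The column $(c,v)$ is the column $a_v$ of $A$ with the scalar $c$ appended as a last entry.

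\textbf{No zero column.} Since $M(A)$ is simple, $a_v\neq 0$ for every $v$. Hence every column of $d^+(A)$ is non-zero, whatever the value of $c$. This also covers the case $c=0$, which is allowed because one of $\alpha,\beta,\gamma,\delta$ may be $0$.

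\textbf{No parallel pair.} Suppose two distinct columns $(c,v)$ and $(c',v')$ satisfy $(a_v,c)=\mu(a_{v'},c')$ for some scalar $\mu$. We split into two cases.

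\emph{Case $v\neq v'$.} Then $a_v=\mu a_{v'}$. Since $a_v\neq 0$, this would make $a_v$ and $a_{v'}$ parallel, contradicting the simplicity of $M(A)$.

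\emph{Case $v=v'$ and $c\neq c'$.} Then $a_v=\mu a_v$ with $a_v\neq0$, which forces $\mu=1$. The last coordinate then gives $c=c'$, a contradiction.

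These two cases cover all pairs of distinct columns. Therefore $d^+(A)$ has no zero column and no two parallel columns, so $M(d^+(A))$ is simple.

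There is no real obstacle here. The only point needing care is the case $v=v'$, where the distinctness of $\alpha,\beta,\gamma,\delta$ is essential. The hypothesis that $\mathbb F$ has at least four elements is used only to guarantee that such distinct elements exist.
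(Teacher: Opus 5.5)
Your proof is correct. The paper states this lemma without proof as an easy observation, and your direct check (no zero column since each $a_v\neq 0$; no parallel pair, by splitting into $v\neq v'$, which reduces to simplicity of $M(A)$, and $v=v'$, which uses that the appended scalars $\alpha,\beta,\gamma,\delta$ are distinct) is exactly the routine verification it leaves to the reader.
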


The following lemma is trivial. We will use it for submatrices of $A$ obtained by removing some columns.
\begin{lemma}\label{lem:rankd}
    Let $A$ be a matrix over a field $\mathbb F$ with at least one column and $\alpha$, $\beta$, $\gamma$, $\delta$ be four distinct elements of $\mathbb F$.
    Then, $\rk (d^+(A))=\rk(A)+1$.
    \hfill\qedsymbol
\end{lemma}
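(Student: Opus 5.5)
We must show $\rk(d^+(A))=\rk(A)+1$ whenever $A$ has at least one column. The plan is to compute the rank of $d^+(A)$ through its row space, i.e.\ as the dimension of the span of its rows, and compare it with the row space of $A$ augmented by one new row.

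First, $\rk(d^+(A))\le\rk(A)+1$. The top block of rows of $d^+(A)$ is $(A\mid A\mid A\mid A)$. Its row space is the image of the row space of $A$ under the injective linear map $x\mapsto(x,x,x,x)$, so it has dimension $\rk(A)$. Adding the single last row raises the rank by at most one.

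Second, $\rk(d^+(A))\ge\rk(A)+1$. Write the last row as $w=(\alpha\mathbf 1,\beta\mathbf 1,\gamma\mathbf 1,\delta\mathbf 1)$. It suffices to show that $w$ is not in the span of the top rows. Every vector in that span has the form $(x,x,x,x)$, with four identical blocks. Since $A$ has at least one column, each block of $w$ has at least one coordinate, and the first coordinates of the four blocks are $\alpha,\beta,\gamma,\delta$. These are distinct, so $w$ is not of that form. In fact distinctness of just $\alpha$ and $\beta$ already suffices.

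Combining the two inequalities gives $\rk(d^+(A))=\rk(A)+1$. There is no real obstacle here. The only point needing care is the hypothesis that $A$ has at least one column: without it the blocks are empty, $w$ is the empty vector, and the rank does not increase.
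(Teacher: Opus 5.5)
Your proof is correct. The paper calls this lemma trivial and omits the proof, and your row-space argument (the top block $(A\mid A\mid A\mid A)$ has rank $\rk(A)$, and the last row is not of the form $(x,x,x,x)$ because $A$ has at least one column and $\alpha\neq\beta$) is exactly the routine verification the paper leaves to the reader.
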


\begin{lemma}[label=lem:plus_1]
    Let $A$ be a matrix over a field $\mathbb{F}$ with at least $4$ elements.
    If $M(A)$ has branch-width $a\geq 1$, then $M(d^+(A))$ has branch-width at most $a+1$.
\end{lemma}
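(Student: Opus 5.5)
Let $E$ be the column set of $A$, $M=M(A)$, and $N=M(d^+(A))$. The ground set of $N$ is $E\times\{1,2,3,4\}$, where $(v,j)$ is the copy of column $v$ in the $j$-th block. For $Y\subseteq E\times[4]$ we write $\pi(Y)\subseteq E$ for the set of underlying columns. The plan is to mimic the proof of \zcref{lem:tensor}: take an optimal branch-decomposition $(S,\mathcal K)$ of $M$ of width $a$, and replace each leaf $\mathcal K(v)$ by a small subcubic tree whose four leaves are $(v,1),\dots,(v,4)$. We then bound every cut of the resulting $(T,\mathcal L)$ by $a+1$.

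The key computation comes first. For any $Y$ with $Y\neq\emptyset$, the rank $\rk_N(Y)$ equals $\rk_M(\pi(Y))+1$ when the last-row entries over $Y$ are not all parallel to the $A$-part. More concretely, adding the last row increases the rank by at most $1$, so
\[\rk_M(\pi(Y))\le \rk_N(Y)\le \rk_M(\pi(Y))+1.\]
If $Y$ contains two copies of the same column $v$ (say $(v,i),(v,j)$), then their difference is $(0,\ldots,0,(\cdot)-(\cdot))$ with a non-zero last entry. So the extra unit vector $e_{\rm last}$ is in the span and $\rk_N(Y)=\rk_M(\pi(Y))+1$. Also $\rk(N)=\rk(M)+1$ by \zcref{lem:rankd}.

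Now take an edge $f$ of $T$ coming from $S$, inducing $(X\times[4],(E\setminus X)\times[4])$ with $X$ and $E\setminus X$ nonempty. Both sides contain two copies of a column, so
\[\lambda_N(X\times[4])=(\rk_M(X)+1)+(\rk_M(E\setminus X)+1)-(\rk(M)+1)=\lambda_M(X)+1\le a+1.\]
For an edge inside the gadget at $v$, one side $Y$ is a nonempty set of copies of $v$ alone, so $\lambda_N(Y)\le\rk_N(Y)\le 2$. This is at most $a+1$ since $a\ge1$. This is exactly where the hypothesis $a\ge 1$ enters. The degenerate cases are handled separately. If $\abs{E}=1$, then $N$ is four vectors of rank at most $2$, with branch-width at most $2\le a+1$. (In fact $a\ge1$ forces $\abs{E}\ge2$, so this case does not occur.)

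The main obstacle is just being careful with the gadget cuts and the rank identity. One must check that each side of an $S$-edge really contains two copies of a common column, which holds because every $v$ contributes all four copies to its side. With that identity, everything reduces to $\lambda_M$, as in \zcref{lem:tensor}.
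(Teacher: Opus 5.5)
Your proof is correct and essentially the same as the paper's: you attach a four-leaf rooted binary tree at each leaf of an optimal branch-decomposition of $M(A)$. Cuts coming from $S$ then have width $\lambda_M(X)+1\le a+1$ (the paper cites \zcref{lem:rankd} for this, while you verify it directly by noting that two copies of a column span the last unit vector), and gadget cuts have width at most $\rk_N\le 2\le a+1$ because $a\ge 1$.
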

\begin{proof}
    Let $M=M(A)$, $E=E(M)$, and $N=M(d^+(A))$.
    For each $v\in E$, let $E_v$ be the $4$-element subset of $E$ corresponding to columns of $d^+(A)$ having the copy of the column of $v$ in $A$.
    Let~$E'=\bigcup_{v\in E}E_v$.
    By \zcref{lem:rankd}, for all proper  nonempty $X\subseteq E$, we have 
    \[\lambda_N\left( \bigcup_{v\in X}E_v \right)=\lambda_M(X)+1.\]
   
    If $\abs{E}\leq 1$, then it is easy to see that $\bw(N)\leq 2$.
    Thus we may assume that $\abs{E}> 1$.
    Let $(S,\mathcal{K})$ be a branch-decomposition of $M$ witnessing $\bw(M)=a$.
    We construct a branch-decomposition~$(T,\mathcal{L})$ of $N$ as follows.
    Let $T$ be a subcubic tree obtained from the disjoint union of $S$ and $\abs{E}$ copies of rooted binary trees with $4$ leaves by identifying each leaf $\ell=\mathcal{K}(v)$ for $v\in E$ with the root of a rooted binary tree with $4$ leaves.
    We define $\mathcal{L}$ so that for each $v\in E$, four elements in $E_v$ are mapped by $\mathcal{L}$ to the $4$ distinct leaves of the rooted binary tree whose root is identified with~$\mathcal{K}(v)$.

    We claim that $(T,\mathcal{L})$ has width at most $a+1$.
    Let $e\in E(T)$ and let $(X_e,Y_e)$ be the partition of~$E'$ induced by $\mathcal{L}$ and the connected components of $T\setminus e$.
    It is enough to show that $\lambda_N(X_e)\leq a+1$.
    If $X_e= \bigcup_{v\in X}E_v$ for some $X\subseteq E$, then $\lambda_N(X_e)=\lambda_M(X)+1$ because $(X,E\setminus X)$ is a partition induced by $S\setminus e$.
    Thus we may assume that there is some $v\in E$ such that $X_e\cap E_v\neq \emptyset$ and $Y_e\cap E_v\neq \emptyset$.
    Then by the construction of $T$, either $X_e\subseteq E_v$ or $Y_e\subseteq E_v$.
    Without loss of generality, assume that~$X_e\subseteq E_v$.
    Then $\lambda_N(X_e)\leq \rk_N(X_e)\leq 2$, so the width of $e$ is at most $2$.
    This proves that the branch-width of $N$ is at most $a+1$.
\end{proof}

\begin{proposition}[label=thm:branchwidth_plus_1]
    Let $\mathbb{F}$ be a field with at least $4$ elements.
    Let $M$ be a simple represented matroid on~$E$ represented by a matrix $A$ over $\mathbb{F}$ and let $N=(E',\mathcal{I}')$ be a matroid represented by $d^+(A)$.
    If~$\bw(M)\geq 1$, then~$\bw(N)=\bw(M)+1$.
\end{proposition}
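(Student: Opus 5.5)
The upper bound $\bw(N)\le\bw(M)+1$ is \zcref{lem:plus_1}, so the plan is to prove $\bw(N)\ge\bw(M)+1$. I will follow the template of the proof of \zcref{thm:Vandermonde_branchwidth}: first get a crude lower bound from a restriction, then use a titanic partition to push a branch-decomposition of $N$ down to one of $M$.

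Let $a=\bw(M)\ge1$. For $v\in E$ let $E_v\subseteq E'$ be the four columns of $d^+(A)$ that are copies of the column of $v$. Since $a\ge1$ we have $\abs{E}\ge2$.

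\textbf{Step 1 (crude lower bound).} Since $M$ is simple, the column of $v$ is non-zero. So the four columns of $E_v$ are $(a_v,\alpha),(a_v,\beta),(a_v,\gamma),(a_v,\delta)$ with $a_v\neq0$ and $\alpha,\beta,\gamma,\delta$ distinct. They span a $2$-dimensional space, and any two of them are linearly independent. Hence the restriction of $N$ to $E_v$ is the uniform matroid $U_{2,4}$. By \zcref{lem:uniformbw} it has branch-width $\min(2,2,2)=2$, and by \zcref{lem:bw_deletion} we get $\bw(N)\ge2$. This already settles the case $a=1$.

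\textbf{Step 2 (titanic partition).} Let $\mathcal P=\{E_v: v\in E\}$. Each $E_v$ is titanic with respect to $\lambda_N$. Indeed, if $E_v=X_1\cup X_2\cup X_3$ is a partition, some $X_i$ has at least two elements and therefore spans $E_v$. Then the same computation as in the proof of \zcref{thm:Vandermonde_branchwidth} gives $\lambda_N(X_i)\ge\lambda_N(E_v)$: use $\rk_N(X_i)=\rk_N(E_v)$ and $\rk_N(E'\setminus X_i)\ge\rk_N(E'\setminus E_v)$.

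Put $k=\bw(N)$. By \zcref{lem:rankd}, $\lambda_N(E_v)=\lambda_M(\{v\})+1\le2\le k$, using Step 1. So \zcref{lem:titanic_partition} applies and yields a branch-decomposition $(S,\mathcal K)$ of $\lambda_N^{\mathcal P}$ of width at most $k$.

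\textbf{Step 3 (transfer to $M$).} Define $\mathcal K'(v)=\mathcal K(E_v)$. Then $(S,\mathcal K')$ is a branch-decomposition of $M$. Every edge of $S$ induces a partition $(X,E\setminus X)$ of $E$ with $X$ proper and nonempty. By \zcref{lem:rankd}, applied to the column submatrices for $X$, for $E\setminus X$ and for $E$, the width of that edge in $(S,\mathcal K)$ is
\[
\lambda_N\Bigl(\bigcup_{v\in X}E_v\Bigr)=\lambda_M(X)+1 .
\]
Hence $(S,\mathcal K')$ has width at most $k-1$. This gives $a\le\bw(N)-1$, which together with \zcref{lem:plus_1} proves $\bw(N)=\bw(M)+1$.

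\textbf{Main obstacle.} The delicate point is the hypothesis $\lambda_N(E_v)\le k$ required by \zcref{lem:titanic_partition}. It is exactly why the crude bound $\bw(N)\ge2$ from Step 1 is needed before the titanic argument can be run, and that bound relies on simplicity of $M$ (no zero column) to identify $N|E_v$ with $U_{2,4}$. The rank identity $\rk(d^+(A'))=\rk(A')+1$ for every nonempty column submatrix $A'$ is routine.
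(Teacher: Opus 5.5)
Your proof is correct and follows the paper's argument essentially step for step. The upper bound comes from \zcref{lem:plus_1}, the bound $\bw(N)\ge 2$ comes from the restriction to $E_v$ (you identify it directly as $U_{2,4}$, the paper row-reduces it to the form $\mathscr V_2$), and the titanic partition $\{E_v : v\in E\}$ together with \zcref{lem:titanic_partition} and \zcref{lem:rankd} transfers a decomposition of $N$ to one of $M$ whose width drops by one. The only cosmetic difference is that you work directly with $k=\bw(N)$, whereas the paper argues by contradiction from a decomposition of $N$ of width at most $\bw(M)$.
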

\begin{proof}
    The proof proceeds analogously to the proof of \zcref{thm:Vandermonde_branchwidth}. 
    For each $v\in E$, let $E_v$ be the $4$-element subset of~$E$ corresponding to columns of $d^+(A)$ having the copy of the column of~$v$ in~$A$.
    
    By \zcref{lem:plus_1}, $\bw(N)\leq \bw(M)+1$.
    Thus it suffices to show that $\bw(N)\geq \bw(M)+1$.
    
    Let $v\in E$ and let $N_v$ be the restriction of $N$ to~$E_v$.
    Then $N_v$ is isomorphic to a matroid represented by 
    \[ 
    \left( 
        \begin{array}{c|c|c|c} 
            A_v & A_v & A_v & A_v\\\hline
            \alpha & \beta & \gamma & \delta
        \end{array}\right), 
    \] 
    where $A_v$ represents the column vector in $A$ corresponding to the column of~$v$.
    Since $M$ is simple, $A_v$ is non-zero.
    By the elementary row operations, we obtain the matrix
    \[ 
    \left( 
        \begin{array}{c|c|c|c} 
            1 & 1 & 1 & 1\\\hline
            \alpha & \beta & \gamma & \delta
        \end{array}\right), 
    \] 
    which is a matrix of the form $\mathscr V_2$. Thus $\bw(N_v)=\bw(M(\mathscr V_2))=2$.
By \zcref{lem:bw_deletion}, we deduce that $\bw(N)\geq 2$.

    Therefore we may assume that $\bw(M)\geq 2$.
    Suppose that there is a branch-decomposition~$(T,\mathcal{L})$ of $N$ of width at most $\bw(M)$.
    First we would like to show that $E_v$ is titanic with respect to $\lambda_N$ for each $v\in E$.
    Let $X_1,X_2,X_3$ be pairwise disjoint subsets of $E_v$ such that $X_1\cup X_2\cup X_3=E_v$.
    Since $\abs{E_v}=4$, there must be some $i\in \{1,2,3\}$ such that $\abs{X_i}\geq 2$.
    Then $X_i$ spans $E_v$, so we have~$\rk_N(X_i)=\rk_N(E_v)$.
    Moreover, since $E'\setminus X_i\supseteq E'\setminus E_v$, we have $\rk_N(E'\setminus E_v)\leq \rk_N(E'\setminus X_i)$.
    Therefore
    \[ \lambda_N(X_i)=\rk_N(X_i)+\rk_N(E'\setminus X_i)-\rk_N(E')\geq \rk_N(E_v)+\rk_N(E'\setminus E_v)-\rk_N(E')= \lambda_N(E_v). \]
    Hence $E_v$ is titanic with respect to $\lambda_N$.
    Thus, $\mathcal{P}=\{E_v\colon v\in E\}$ is a titanic partition of $E'$ satisfying~$\lambda_N(E_v)\leq 2$ for all $v\in E$.

    Now we use \zcref{lem:titanic_partition} to find a branch-decomposition $(S,\mathcal{K})$ of $(N,\mathcal{P})$ with width at most~$\bw(M)$.
    Then by letting $\mathcal{K}'(v)=\mathcal{K}(E_v)$, we obtain a branch-decomposition $(S,\mathcal{K}')$ of~$M$.
    Let~$e\in E(S)$ and let $(X_e',Y_e')$ be the partition of $E$ induced by $\mathcal{K}'^{-1}$ from the components of~$S\setminus e$.
    Note that the submatrix of $d^+(A)$ induced by $\bigcup_{v\in X_e'}E_v$ is in the form of $d^+(A')$ for a submatrix~$A'$ of~$A$ induced by~$X_e'$.
    Thus we have $\lambda_N^\mathcal{P}(\{E_v:v\in X_e'\})=\lambda_M(X_e')+1$, so the width of $e$ in $(S,\mathcal{K})$ is one plus the width of $e$ in $(S,\mathcal{K}')$.
    This implies that the width of $(S,\mathcal{K}')$ is at most $\bw(M)-1$, which is a contradiction.
    Therefore, we have $\bw(N)\geq \bw(M)+1$.
\end{proof}

\subsection{Combining all the constructions}

We first present a simple lemma.
\begin{lemma}[label=lem:singular_or_simple]
    Given an $n\times n$ matrix $A$ over a field $\mathbb F$, in time $O(n^2)$
    we can either correctly conclude that $A$ is singular
    or correctly conclude that the matroid $M(A)$ is simple, 
    assuming that field elements can be compared in constant time.
\end{lemma}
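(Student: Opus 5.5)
We check in $O(n^2)$ time for the two obstructions to simplicity: a zero column, or two parallel (non-zero, linearly dependent) columns. Either obstruction forces $A$ to be singular, since it yields a non-trivial linear dependence among the $n$ columns. If neither occurs, $M(A)$ is simple by the characterization recalled before \zcref{lem:bw_deletion}. Zero columns are found by one scan of all $n^2$ entries, so the only real task is detecting parallel columns within quadratic time.

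\textbf{Normalization.} After discarding the zero-column case, we normalize each column $v$ by dividing it by its first non-zero entry. This gives a vector $\hat v$ whose first non-zero coordinate equals $1$, at a cost of $O(n)$ field operations per column and $O(n^2)$ in total. Two non-zero columns are parallel if and only if their normalized versions are identical. Thus it suffices to decide whether the $n$ normalized vectors, each of length $n$, are pairwise distinct.

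\textbf{Distinctness test.} We cannot afford to sort the vectors, since comparisons cost $O(n)$ and sorting would take $O(n^2\log n)$ time. Instead, we refine a partition coordinate by coordinate. Start with all columns in a single class, and for $i=1,\dots,n$ split each current class according to the $i$-th entry of the normalized vectors. Since only equality comparisons are available, we do the split by walking through the class and maintaining a list of distinct values seen together with their buckets. This is where the main obstacle lies: a class of size $s$ with $d$ distinct values costs $O(sd)$ to split, which could be as large as $O(s^2)$.

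\textbf{Charging argument.} To keep the total quadratic, charge each comparison either to an element that ends up in the bucket it is compared against, or to the creation of a new bucket. The number of bucket creations is bounded by the final number of classes minus the initial number, which is at most $n$. This argument needs care, however, so as a fallback I would use the fact that the paper allows constant-time comparison and presumably some hashing or ordering on field elements. If a total order on field elements is available, radix-style refinement by ordering each class's $i$-th entries gives $O(n^2\log n)$, not $O(n^2)$. So I will instead rely on pairwise comparison pruned by refinement: at coordinate $i$, each element compares against one representative per existing sub-bucket of its class.

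In the worst case this step still seems superlinear per round. The cleanest route I would ultimately take is to assume, as is standard for the field model, that elements can be hashed or used as array/dictionary keys in constant time. Then each refinement round is $O(n)$ and the whole procedure is $O(n^2)$. If the final partition has all classes singletons, $M(A)$ is simple; otherwise any class of size at least $2$ exhibits two parallel columns, and we report that $A$ is singular.
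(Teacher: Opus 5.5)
\textbf{Gap: the duplicate-detection step is not proved under the lemma's hypothesis.} Your setup matches the paper's. You rule out a zero column, then scale each column by the inverse of its first non-zero entry so that parallel columns become identical. A repeated column certifies singularity because $A$ is square.

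The only non-trivial step is finding a repeated normalized column in $O(n^2)$ time, and this is where the proof falls short. The lemma grants only constant-time comparison of field elements. It is also applied in the lower-bound reductions over arbitrary, possibly infinite, fields. Your final argument instead assumes that field elements can be hashed or used as dictionary keys in constant time. That is a stronger model, and hashing would typically give only expected time anyway. So as written you prove the lemma under a different hypothesis.

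The charging scheme you set aside also fails as stated. An element may fail against many buckets before joining one, and an $O(1)$ charge to an element or to a bucket creation does not absorb these failed comparisons. Moreover, if the first bucket of every class counts as a creation, there can be $\Theta(n^2)$ creations over all rounds.

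\textbf{Fix: amortize over all rounds.} Do not bound each round separately. A single round can indeed cost $\Theta(n^2)$, e.g.\ when all columns have first entry $1$ and pairwise distinct second entries. In round $i$, if a class $C$ splits into $d_C$ sub-buckets, each element of $C$ performs at most $d_C$ equality tests. So the round costs at most
\[
\sum_C \abs{C} d_C \le n + n\sum_C (d_C-1).
\]
Every sub-bucket beyond the first permanently increases the number of classes by one, and there are never more than $n$ classes. Hence the sum of $d_C-1$ over all classes and all rounds is at most $n-1$. The whole refinement therefore uses at most $n\cdot n+n(n-1)=O(n^2)$ comparisons, and it needs nothing beyond equality tests.

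This is exactly what makes the paper's trie run in $O(n^2)$. Insert the $n$ normalized columns into a trie whose child lists are scanned linearly. The trie has at most $n$ leaves, so its total branching is at most $n-1$, which is the same count as above.
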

\begin{proof}
    If some column of $A$ is zero, then $A$ is singular and we stop.
    Otherwise, for each column, we find its uppermost non-zero entry and
    multiply the column by the inverse of that entry.
    This takes time $O(n^2)$, and afterwards the uppermost non-zero entry of
    every column is $1$.
    Since scaling columns changes neither the singularity of~$A$ nor the
    matroid~$M(A)$, we may continue with the scaled matrix.
    By using the trie data structure, in time $O(n^2)$ we can find two
    identical columns if they exist.
    If they exist, then $A$ is singular and we stop.
    Otherwise no two columns are identical and, as the uppermost non-zero
    entry of every column is $1$, no two columns are linearly dependent.
    Since $A$ has no zero column, $M(A)$ is simple.
\end{proof}

We now prove a quadratic-time reduction from \textsc{Matrix Singularity} to \textsc{Branch-width $(2-\varepsilon)$-Approximation for $k$}, 
for every $\varepsilon>0$, 
provided the field is sufficiently large. 
Note that in the following statement, each of $k$, $\alpha$, and~$\mathbb F$ is a fixed constant.

\begin{theorem}[label=thm:new_reduction_general, store=thm:new_reduction_general]
    Let $k\geq 1$ be an integer, 
    $\alpha\geq 2$ be a real, 
and $\mathbb F$ be a field.
If $\abs{\mathbb F}\ge 3p-2$ for every prime divisor~$p$ of $k$, 
    then there is no algorithm that for a matroid~$M$ represented by an input $m\times n$ matrix over~$\mathbb{F}$ with $m\le n$, concludes $\bw(M)>k$ or $\bw(M)< 2k$ in time $O(n^\alpha)$, unless deciding whether an~$n\times n$ matrix over $\mathbb F$ is non-singular can be done in time $O(n^\alpha)$.
\end{theorem}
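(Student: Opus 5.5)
Given an $n\times n$ matrix $A$ over $\mathbb F$, we build in $O_k(n^2)$ time an $m'\times n'$ matrix $A^\ast$ with $m'\le n'=O_k(n)$ and
\[\bw(M(A^\ast))\le k \text{ if $A$ is non-singular}, \qquad \bw(M(A^\ast))\ge 2k \text{ if $A$ is singular}.\]
A gap-deciding algorithm running in $O(n'^\alpha)$ time then decides singularity in $O(n^2)+O(n^\alpha)=O(n^\alpha)$, using $\alpha\ge2$ and that $k$ is fixed. We first apply \zcref{lem:singular_or_simple}. It either reports that $A$ is singular, or guarantees that $M(A)$ is simple, with no zero column. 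In the latter case \zcref{lem:branchwidth0} gives $\bw(M(A))=0$ if $A$ is non-singular and $\bw(M(A))\ge 1$ otherwise.

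The case $k=1$ is handled directly. Set $A_1=\left(\begin{smallmatrix}1&0&1\\0&1&1\end{smallmatrix}\right)\otimes A$. By \zcref{thm:branchwidth_0_1}, $\bw(M(A_1))=1$ if $A$ is non-singular and $\bw(M(A_1))\ge 2$ otherwise. By \zcref{lem:tensor_simple}, $M(A_1)$ is again simple, since $\left(\begin{smallmatrix}1&0&1\\0&1&1\end{smallmatrix}\right)$ represents a simple matroid. The matrix $A_1$ is $2n\times 3n$, so $m'\le n'$.

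For $k\ge2$, write $k=p_1p_2\cdots p_t$ as a product of primes. Starting from $A_1$, we repeatedly apply $A\mapsto \mathscr V_{p}\otimes A$, once for each prime factor $p=p_i$. This is legal because $\abs{\mathbb F}\ge 3p-2$ for every prime divisor $p$ of $k$, which is exactly the hypothesis of the theorem. Each step preserves simplicity by \zcref{lem:tensor_simple}, since $M(\mathscr V_p)$ is uniform of rank $p\ge2$ on $3p-2$ distinct columns and hence simple. The sizes stay controlled: each step multiplies rows by $p$ and columns by $3p-2\ge p$, so $m'\le n'$ is maintained and $n'=3n\prod_i(3p_i-2)=O_k(n)$. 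By \zcref{thm:Vandermonde_branchwidth}(a), each step multiplies the branch-width by $p$ whenever the current branch-width is at least $1$, and $\bw(M(A_1))\ge1$ always holds. Induction therefore gives $\bw(M(A^\ast))=k\cdot\bw(M(A_1))$. This equals $k$ if $A$ is non-singular and is at least $2k$ if $A$ is singular.

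The main obstacle is that we cannot use a single Vandermonde factor $\mathscr V_k$, because that would require $\abs{\mathbb F}\ge 3k-2$. Factoring $k$ into primes and applying \zcref{thm:Vandermonde_branchwidth} once per prime is what reduces the field-size requirement to the stated condition on prime divisors. The remaining step is to confirm that every hypothesis of \zcref{thm:Vandermonde_branchwidth} holds at each stage: simplicity via \zcref{lem:tensor_simple}, and the field-size requirement per prime.

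Finally, the running time. Each tensor product with a fixed matrix of constant size takes $O_k(n^2)$ time to write down, and the output has size $O_k(n^2)$. Fixed-size Vandermonde matrices over the given field can be constructed once, since $k$ and $\mathbb F$ are fixed.
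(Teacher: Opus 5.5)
Your proposal is correct and follows the paper's proof. After \zcref{lem:singular_or_simple}, you tensor with $\left(\begin{smallmatrix}1&0&1\\0&1&1\end{smallmatrix}\right)$ to turn the $0$ versus $\geq 1$ dichotomy into $1$ versus $\geq 2$, and then tensor with $\mathscr V_p$ once per prime factor of $k$ (with multiplicity), which by associativity gives the paper's matrix $B$ up to reordering rows and columns; your explicit checks of simplicity, field size per prime, and $m'\le n'$ are the verifications the paper makes by citing \zcref{lem:tensor_simple,thm:branchwidth_0_1,thm:Vandermonde_branchwidth}.
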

\begin{proof}
    Suppose that there is an algorithm that for a matroid~$M$ represented by an input $m\times n$ matrix over~$\mathbb{F}$ with $m\le n$, concludes $\bw(M)>k$ or $\bw(M)<2k$ in time $O(n^\alpha)$.
    Let $A$ be an input $n\times n$ matrix over~$\mathbb F$.
    We claim that we can decide whether $A$ is non-singular in time $O(n^\alpha)$.
    By \zcref{lem:singular_or_simple}, in time $O(n^2)$ we either correctly
    conclude that $A$ is singular, in which case we are done, or conclude
    that $M(A)$ is simple.
Thus we may assume that $M(A)$ is simple.

    If $k=1$, then let $B=\left(\begin{smallmatrix} 1 & 0 & 1 \\ 0 & 1 & 1 \end{smallmatrix}\right)\otimes A$.
    If $k\geq 2$, let $k=p_1^{e_1}p_2^{e_2}\cdots p_\ell^{e_\ell}$ be the prime factorization of~$k$ and let
    \[B=\mathscr{V}_{p_1}^{\otimes e_1}\otimes\cdots\otimes \mathscr{V}_{p_\ell}^{\otimes e_\ell}\otimes\begin{pmatrix} 1 & 0 & 1 \\ 0 & 1 & 1 \end{pmatrix} \otimes A\]
     be a $(2kn)\times (3n\prod_{i=1}^\ell (3p_i-2)^{e_i})$ matrix.
    Note that~$
    3kn\le 
    3n\prod_{i=1}^\ell (3p_i-2)^{e_i}\leq 
    3n \prod_{i=1}^\ell (p_i^2)^{e_i} 
    \leq 3k^2n $, so constructing $B$ can be done in time~$O(n^2)$.
    By \zcref{lem:tensor_simple, thm:branchwidth_0_1,thm:Vandermonde_branchwidth}, we have the following dichotomy:
    \begin{itemize}
        \item $\bw(M(B))\geq 2k$ if $\bw(M(A))\geq 1$, and
        \item $\bw(M(B))=k$ if $\bw(M(A))=0$.
    \end{itemize}
    Therefore we can distinguish two cases by using the assumed algorithm in time $O((3k^2n)^\alpha)=O(n^\alpha)$.
\end{proof}

It is worth noting that if the size of the field is large enough, we can reduce the size of the output of the reduction algorithm.
For example, if $\mathbb{F}$ has at least $3k-2$ elements, we can simply use $B=\mathscr{V}_k\otimes \left(\begin{smallmatrix} 1 & 0 & 1 \\ 0 & 1 & 1 \end{smallmatrix}\right)\otimes A$ in the proof of \zcref{thm:new_reduction_general} instead.

If the underlying field is not big enough, then we cannot accommodate $\mathscr{V}_p$ for $p\geq 3$.
In such cases, we can weaken our approximation ratio to $1.5$ as follows.

\begin{theorem}[label=thm:small_field_case, store=thm:small_field_case]
Let $k\geq 0$ be an integer, $\alpha\geq 2$ be a real, and $\mathbb F$ be a field with at least $4$ elements.
    Then there is no algorithm that for a matroid~$M$ represented by an input $m\times n$ matrix over $\mathbb{F}$ with $m\le n$, concludes~$\bw(M)>k$ or $\bw(M)\leq 1.5k$ in time $O(n^\alpha)$, unless deciding whether an~$n\times n$ matrix over $\mathbb{F}$ is non-singular can be done in time $O(n^\alpha)$.
\end{theorem}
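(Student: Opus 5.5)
The argument mirrors the proof of \zcref{thm:new_reduction_general}, except that only $\mathscr V_2$ (which needs $3\cdot 2-2=4$ field elements) and the operation $d^+$ (also needing $4$ elements) are available. Suppose an algorithm decides ``$\bw(M)>k$'' versus ``$\bw(M)\le 1.5k$'' in time $O(n^\alpha)$. Given an $n\times n$ matrix $A$, first apply \zcref{lem:singular_or_simple}. This either shows $A$ singular or guarantees $M(A)$ simple. Then form $A_1=\left(\begin{smallmatrix} 1 & 0 & 1 \\ 0 & 1 & 1 \end{smallmatrix}\right)\otimes A$. By \zcref{lem:tensor_simple} it represents a simple matroid. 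By \zcref{thm:branchwidth_0_1}, $\bw(M(A_1))=1$ if $A$ is non-singular (using \zcref{lem:branchwidth0}), and $\bw(M(A_1))\ge 2$ otherwise. The case $k=0$ is handled directly by \zcref{lem:branchwidth0}, since then the question is exactly whether $\bw(M(A))=0$.

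For $k\ge1$, build from $A_1$ a matrix whose branch-width equals $k$ in the non-singular case and is at least $\lfloor 1.5k\rfloor+1$ in the singular case. Write $k$ in binary and process its bits from the most significant one, starting from value $1$. A ``doubling'' step tensors with $\mathscr V_2$, which by \zcref{thm:Vandermonde_branchwidth} maps $b\mapsto 2\max(b,1)$. A ``$+1$'' step applies $d^+$, which by \zcref{thm:branchwidth_plus_1} maps $b\mapsto b+1$ for $b\ge1$. Simplicity is preserved at each step by \zcref{lem:tensor_simple,lem:dplussimple}, so both propositions keep applying. Let $g$ be the resulting composite map on branch-widths. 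Then $g(1)=k$, and $g$ is monotone with $g(b)-g(1)\ge (b-1)\cdot 2^{t}$, where $t$ is the number of doublings.

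Now compare $g(2)$ with $1.5k$. The composite is affine in $b$ with slope $2^t$, so $g(2)=k+2^t$. Since $t=\lfloor\log_2 k\rfloor$, we have $2^t>k/2$, and therefore $g(2)>1.5k$. The required dichotomy follows: the output has branch-width $k$ when $A$ is non-singular, and strictly more than $1.5k$ when $A$ is singular. Running the assumed algorithm on the output therefore decides singularity, because in the first case the answer ``$\bw>k$'' is impossible and in the second the answer ``$\bw\le 1.5k$'' is impossible.

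Size bookkeeping comes last. The matrix grows by a constant factor per step and there are $O(\log k)$ steps, so the output has $O_k(n)$ columns and at most as many rows as columns. It can be built in $O_k(n^2)$ time, and the total running time is $O(n^\alpha)$ because $\alpha\ge2$.

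The main obstacle is the affine behaviour of $g$. The doubling map is $2\max(b,1)$ rather than $2b$, and $d^+$ requires $b\ge1$, so I must check that every intermediate value stays at least $1$. This holds because we start from values $\ge1$ and both operations only increase the branch-width.
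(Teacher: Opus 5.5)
Your proposal is correct and follows the paper's proof essentially step for step. It uses the same starting tensor with $\left(\begin{smallmatrix} 1 & 0 & 1 \\ 0 & 1 & 1 \end{smallmatrix}\right)$ and the same binary-expansion scheme, alternating $\mathscr V_2\otimes(\cdot)$ with an optional $d^+$. It also ends with the same comparison $k+2^{\lfloor\log_2 k\rfloor}>1.5k$. The one point you assert without justification is that the output has at most as many rows as columns. The paper settles it by noting that each step preserves having more columns than rows, starting from the $2n\times 3n$ matrix.
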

\begin{proof}
    The proof proceeds analogously to the proof of \zcref{thm:new_reduction_general}.
    Suppose that there is an algorithm that for a matroid~$M$ represented by an input $m\times n$ matrix~$A$ over $\mathbb{F}$ with $m\le n$, concludes~$\bw(M)>k$ or $\bw(M)\leq 1.5k$ in time $O(n^\alpha)$.
We claim that we can decide whether an~$n\times n$ matrix $A$ over $\mathbb F$ is non-singular in time $O(n^\alpha)$.
    By \zcref{lem:singular_or_simple}, in time $O(n^2)$ we either correctly
    conclude that $A$ is singular, in which case we are done, or conclude
    that $M(A)$ is simple.    
Thus we may assume that $M(A)$ is simple.

    If $k=0$, then by \zcref{lem:branchwidth0}, we are done. Therefore we may assume that $k>0$.

    Let $s=\lfloor \log_2 k\rfloor$.
    Let $a_0,a_1,\ldots,a_s\in \{0,1\}$ be integers such that $k=\sum_{i=0}^s 2^{s-i}a_{i}$ and $a_0=1$.
    Let~$A_0=\left(\begin{smallmatrix} 1 & 0 & 1 \\ 0 & 1 & 1 \end{smallmatrix}\right)\otimes A$.
    For each $i\in [s]$, let \[ 
    A_i=\begin{cases}
        \mathscr{V}_2\otimes A_{i-1}&\text{if } a_i=0,\\ 
        d^+(\mathscr{V}_2\otimes A_{i-1})&\text{if }a_i=1.
    \end{cases}
    \]
    Note that if $A_{i-1}$ has more columns than rows, then so does $A_i$. Therefore the number of rows in~$A_s$ is at most the number of columns of~$A_s$.
    Then $M(A_i)$ is a simple matroid by \zcref{lem:tensor_simple,lem:dplussimple}. 
    It is easy to see the following by induction and \zcref{thm:branchwidth_0_1,thm:Vandermonde_branchwidth,thm:branchwidth_plus_1}.
    \begin{itemize}
        \item  $A_i$ is a $(2^{i+1}n+\sum_{j=1}^i 2^{i-j}a_j)\times (3\cdot 4^{i+\sum_{j=1}^i a_j}n)$ matrix.
        \item $\bw(M(A_i))=\sum_{j=0}^i 2^{i-j}a_j$ if $\bw(M(A))=0$.
        \item $\bw(M(A_i))\geq \sum_{j=0}^i 2^{i-j}a_j+2^i$ if $\bw(M(A))>0$.
    \end{itemize}
    Thus, $A_s$ is a $(2^{s+1}n+k-2^s)\times (3\cdot 4^{s+\sum_{j=1}^s a_j}n)$ matrix such that $\bw(M(A_s))$ is either equal to~$k$ or at least~$k+2^s$.
    Note that $2^{s+1}n+k-2^s = 2^s(2n-1)+k \le k(2n-1)+k \le 2kn$ and~$3\cdot 4^{s+\sum_{j=1}^s a_j } n \le 3\cdot 4^{2s}n \le 3 k^4n$, so constructing $A_s$ can be done in time $O(n^2)$.
    
    If $A$ is non-singular, then $\bw(M(A))=0$ by \zcref{lem:branchwidth0} and so $\bw(M(A_s))=k$. 
    On the other hand, if $A$ is singular, then 
    $\bw(M(A))>0$ and $\bw(M(A_s))\geq k+2^s> 1.5k$, as~$k<2^{s+1}$.
    Therefore we can distinguish two cases by using the assumed algorithm in time $O((3k^4 n)^\alpha)=O(n^\alpha)$.
\end{proof}

\bibliographystyle{amsplain}
\providecommand{\bysame}{\leavevmode\hbox to3em{\hrulefill}\thinspace}
\providecommand{\MR}{\relax\ifhmode\unskip\space\fi MR }
\providecommand{\MRhref}[2]{\href{http://www.ams.org/mathscinet-getitem?mr=#1}{#2}
}
\providecommand{\href}[2]{#2}

\end{document}